\documentclass[a4paper,UKenglish,cleveref,autoref, thm-restate, numberwithinsect]{lipics-v2021}

\usepackage{amsthm}
\usepackage{amssymb}
\usepackage{amsmath}
\usepackage{mathtools}
\usepackage{csquotes}
\usepackage{enumitem}
\usepackage{svg}
\usepackage{hyperref}
\usepackage{xspace}




\newcommand{\defn}[1]{{\textit{\textbf{\boldmath #1}}}\xspace}
\renewcommand{\paragraph}[1]{\vspace{0.09in}\noindent{\bf \boldmath #1.}}

\newcommand{\LH}{\mathsf{LH}}
\newcommand{\ZH}{\Z\mathsf{H}}
\newcommand{\FH}{\F\mathsf{H}}
\newcommand{\RH}{\R\mathsf{H}}

\newcommand{\MLH}{\mathsf{2LH}}
\newcommand{\SLH}{\mathsf{SH}}
\newcommand{\randH}{\mathcal{R}\mathsf{H}}

\newcommand{\blockFp}{\square\F_p\mathsf{H}}
\newcommand{\strideFp}{\bigcirc\F_p\mathsf{H}}
\newcommand{\blockZm}{\square\Z_m\mathsf{H}}
\newcommand{\strideZm}{\bigcirc\Z_m\mathsf{H}}

\newcommand{\lap}{\mathfrak{e}}
\newcommand{\halfp}{[\ceil{p/2}]_{\setminus 0}}
\newcommand{\pnozero}{[p]_{\setminus 0}}

\newcommand{\circabs}{\mathfrak{s}}

\newcommand{\modp}{\mathfrak{m}_p}
\newcommand{\modm}{\mathfrak{m}_m}
\newcommand{\posmod}{\mathfrak{m}}

\newcommand{\bigO}{\mathcal{O}}
\newcommand{\tilo}{\widetilde{\bigO}}
\DeclareMathOperator{\E}{\mathbb{E}}

\DeclareMathOperator{\sgn}{sgn}

\newcommand{\PRM}{\mathbb{P}}

\DeclareMathOperator{\poly}{\text{poly}}

\newcommand{\interior}[1]{ {\kern0pt#1}^{\mathrm{o}} }

\newcommand{\eps}{\varepsilon}
\renewcommand{\d}{\mathrm{d}} 

\newcommand{\setof}[2]{\left\{ #1\; : \;#2 \right\}}
\newcommand{\set}[1]{\left\{ #1\right\}}

\newcommand{\R}{\mathbb{R}}
\newcommand{\Q}{\mathbb{Q}}

\newcommand{\Z}{\mathbb{Z}}
\newcommand{\F}{\mathbb{F}}

\newcommand{\N}{\mathbb{N}}

\usepackage{algorithm}
\usepackage[noend]{algpseudocode} 

\newcommand{\floor}[1]{\left\lfloor #1 \right\rfloor}
\newcommand{\dfloor}[1]{\lfloor #1 \rfloor}
\newcommand{\ceil}[1]{\left\lceil #1 \right\rceil}
\newcommand{\paren}[1]{\left( #1 \right)}

\newcommand{\abs}[1]{\left| #1 \right|}

\usepackage{cleveref}
\crefname{equation}{}{} 

\newenvironment{customthm}[1]
  {\innercustomthm}
  {\endinnercustomthm}
\newtheorem{conj}[theorem]{Conjecture}
\newtheorem{question}[theorem]{Question}
\newtheorem{prop}[theorem]{Proposition}
\newtheorem{cor}[theorem]{Corollary}
\newtheorem{fact}[theorem]{Fact}

\theoremstyle{definition}
\newtheorem{defin}[theorem]{Definition}
\newtheorem{rmk}[theorem]{Remark}







\pdfoutput=1 
\hideLIPIcs  
\nolinenumbers


\bibliographystyle{plainurl}

\author{Alek Westover}{MIT, USA}{alekw@mit.edu}{}{}
\title{On the Relationship Between Several Variants of the Linear Hashing Conjecture}
\authorrunning{Alek Westover} 



\begin{document}

\maketitle

\abstract{
  In \defn{Linear Hashing} ($\LH$) with $n$ bins on a size
  $u$ universe ${\mathcal{U}=\set{0,1,\ldots, u-1}}$, items
  $\set{x_1,\ldots, x_n}\subset \mathcal{U}$ are placed in
  bins by the hash function
  $x_i\mapsto (ax_i+b)\bmod p \bmod n$
  for some prime $p\in [u,2u]$ and randomly chosen
  integers $a,b \in [1,p]$. The \defn{maxload} of $\LH$ is the
  number of items assigned to the fullest bin. Expected maxload for
  a worst-case set of items is a natural measure of how well $\LH$
  distributes items amongst the bins.

  Despite $\LH$'s simplicity, bounding $\LH$'s worst-case
  maxload is extremely challenging.
  It is well-known that on random inputs $\LH$ achieves
  maxload $\Omega\left(\frac{\log n}{\log\log n}\right)$; this is
  currently the best lower bound for $\LH$'s expected maxload.
  Recently Knudsen established an upper bound of
  $\widetilde{\bigO}(n^{1 / 3})$.
  The question ``Is the worst-case expected maxload of $\LH$
  $n^{o(1)}$?'' is one of the most basic open problems in discrete
  math.

  In this paper we propose a set of intermediate open questions to
  help researchers make progress on this problem.
  We establish the relationship between these intermediate open
  questions and make some partial progress on them.
}

\section{Introduction}
The \defn{hashing problem} is to assign $n$ items from a
universe $\mathcal{U}$ to $\beta$ \defn{bins} so that all bins receive a
similar number of items. In particular, we measure the quality
of a hashing scheme's load distribution by the number of items
in the fullest bin; we refer to this quantity as the \defn{maxload}.
We desire three main properties of a hashing scheme:
(1) small expected maxload for all sets of items, (2) fast
evaluation time, and (3) small description size.

\paragraph{Related Work}
The hashing problem has been extensively studied.
We use the standard parameters $|\mathcal{U}|\in \poly(n)$,
$\beta=n$ in the following discussion. We also assume the
unit-cost RAM model, i.e., that arithmetic operations on
numbers of size $\Theta(\log n)$ can be performed in constant
time. We use the abbreviation $\ell(n) = \frac{\log n}{\log\log n}$.

The hash function which assigns each item independently
randomly to a bin achieves the smallest possible expected
maxload in general, namely $\Theta(\ell(n))$
\cite{mitzenmacher2017probability}.
However, describing a fully random function requires
$\Omega(|\mathcal{U}|\log n)$ bits which is extremely large.
Full independence is not necessary to achieve optimal maxload.
For instance, Carter and Wegman \cite{carter1977universal} show
that degree $\Theta(\ell(n))$ polynomials over a finite
field constitute a $\Theta(\ell(n))$-wise
independent hash family while still achieving maxload
$\Theta(\ell(n))$.
Improving on this result, Celis et al. \cite{celis2013balls}
demonstrate a hash family achieving maxload $\Theta(\ell(n))$
with evaluation time $\bigO(\sqrt{\log n})$.

In fact, it is even possible to achieve optimal maxload with
constant evaluation time, as demonstrated by Siegel in
\cite{siegel_universal_2004}. However, Siegel's hash function
has description size $\poly(n)$.
Furthermore, Siegel proved that it is impossible to
simultaneously achieve optimal maxload, constant evaluation
time, $n^{o(1)}$ description size, and $\Omega(\ell(n))$ independence.
However, this still leaves room for improvement if we do not
require large degrees of independence.

By itself small independence does not give any good bound on
maxload; for example, there are pairwise-independent hash
families with maxload $\Omega\left(\sqrt{n}\right)$ \cite{petershor}.
However, Alon et al. \cite{alon_is_1997} show that the
pairwise-independent hash family of multiplication by random
matrices over $\F_2$ achieves $\bigO(\ell(n) \cdot (\log \log
    n)^2)$ maxload in $\bigO(\log(n))$ evaluation time.

The following question remains open:
\begin{question}
  \label{question:fastandgood}
  Is there a hash family with $\bigO(1)$ machine
  word description whose evaluation requires $\bigO(1)$ arithmetic
  operations that has expected maxload bounded by $n^{o(1)}$?
\end{question}

\paragraph{Linear Hashing ($\LH$)}
$\LH$ \cite{motwani1995randomized, cormen2022introduction,
  sedgewick2014algorithms} is an attractive potential solution to
\cref{question:fastandgood}, trivially satisfying the conditions
of small description size and fast evaluation time.
Despite $\LH$'s simplicity, understanding its
maxload is a notoriously challenging and wide open question.
The best known lower bound on $\LH$'s maxload is $\Omega\left(\frac{\log
    n}{\log\log n}\right)$, whereas the best known upper bound is
$\widetilde{\bigO}(n^{1/3})$ due to an elegant combinatorial
argument of Knudsen \cite{knudsen_linear_2017}.

Let $\mathcal{U} = \set{0,1,\ldots, u-1}$ denote the universe.
To keep the introduction simple we abuse notation and let $x
  \bmod m$ denote the unique representative of the equivalence
class $x +m\Z$ lying in $[0,m)$.
In most textbooks (e.g., \cite{motwani1995randomized}) $\LH$ is
defined placing $x\in \mathcal{U}$ in bin
\[(ax+b)\mod p \mod \beta\]
for prime $p\in [u, 2u]$
and randomly chosen integers $a,b \in [1,p]$;
we refer to this as \defn{strided hashing}.
In \cite{dietzfelbinger1997reliable} Dietzfelbinger et al. give an
alternative definition placing $x\in \mathcal{U}$ in bin
\[  \floor{ \frac{(ax+b)\mod p}{p/\beta} };\]
we refer to this as \defn{blocked hashing}.

The \defn{maxload} of a hash family with respect to an $n$-element set
$X\subset\mathcal{U}$ is a random variable counting the number of
$x\in X$ hashed to the fullest bin under a randomly chosen hash
function. We aim to minimize the expected maxload for worst-case
$X$.

For $\beta=n$ Knudsen \cite{knudsen_linear_2017} implicitly observed that
the maxload of blocked and strided hashing differ by
at most a factor-of-$2$; this follows from our
\cref{prop:blockZisok}. Roughly this equivalence follows by observing that
if blocked hashing has large maxload for $a=a_0,$ then strided
hashing will have large maxload for $a=a_0 n \bmod p$. Similarly,
if strided hashing has large maxload for $a=a_1$ then blocked
hashing will have large maxload for $a=a_1 n^{-1}\bmod p$, where
$n^{-1}$ is the multiplicative inverse of $n$ in $\F_p$.
Thus, classically blocked and strided hashing are
essentially equivalent. On the other hand we show that blocked hashing
generalizes more readily.\footnote{In \cref{prop:blockZsucks} we
  show that strided hashing does not generalize cleanly to composite
  moduli. Furthermore, there is no natural way to generalize
  strided hashing to real numbers.}
Thus, the majority of our results will concern blocked hashing.
We further simplify blocked hashing by
removing the $+b$ \defn{``shift term''} obtaining
\begin{equation} \label{eq:LHdefn3}
  x\mapsto \floor{\frac{ax \bmod p}{p/\beta}}.
\end{equation}
Removing the shift term also will not impact the maxload by more
than a factor-of-$2$: changing the shift term at most splits
fullest bins in half or merges parts of adjacent bins
into a single new fullest bin.
For the rest of this section \defn{Simple $\LH$} will refer to
the hashing scheme defined in \cref{eq:LHdefn3}.
We propose \cref{q:isLHsubpoly} as a potential solution to
\cref{question:fastandgood}.
\begin{question}\label{q:isLHsubpoly}
  Is the worst-case expected maxload of Simple $\LH$ bounded by $n^{o(1)}$?
\end{question}


\subsection{Our Results}
In this paper we propose a set of intermediate open questions to
help researchers make progress on \cref{q:isLHsubpoly}.
We establish the relationship between these intermediate open
questions and make some partial progress on them.
Except for in \cref{sec:twobins} we take $\beta=n$ bins.

\paragraph{Connecting Prime and Integer Moduli}
In \cref{sec:Z} we consider the importance of
using a prime modulus for $\LH$.
Conventional wisdom (e.g., \cite{cormen2022introduction}) is that
using a non-prime modulus is catastrophic. Using a
non-prime modulus is complicated by the fact that in a general
ring, as opposed to a finite field, non-zero elements can
multiply to zero.
Fortunately for any $m$ there is a reasonably large subset of
$\Z_m$ which forms a group under multiplication. The subset is
$\Z_m^{\times }$: the set of integers in $\Z_m$ coprime to $m$.
We define an alternative version of $\LH$ called \defn{Smart} $\LH$
where $a$ is chosen uniformly from $\Z_m^{\times}$ rather than
$\Z_m$. We show:

\begin{customthm}{\cref{thm:LHSLH}}
  Fix integer $m\in \poly(n)$. The expected maxloads of
  Smart $\LH$ with modulus $m$ and Simple $\LH$ with modulus $m$
  differ by at most a factor-of-$n^{o(1)}$.
\end{customthm}
Intuitively, Smart $\LH$ with composite modulus behaves somewhat
similarly to Simple $\LH$ with prime modulus.
This similarity allows us to, with several new ideas, translate Knudsen's proof
\cite{knudsen_linear_2017} of a $\widetilde{\bigO}(n^{1/3})$
bound on Simple $\LH$'s maxload for prime modulus to the
composite modulus setting, giving:
\begin{customthm}{\cref{thm:Zm1_3}}
  The expected maxload of Smart $\LH$ is at most
  $\widetilde{\bigO}(n^{1/3})$.
\end{customthm}

Part of why \cref{thm:Zm1_3} is interesting is that using
\cref{thm:Zm1_3} in \cref{thm:LHSLH} gives:
\begin{customthm}{\cref{cor:translate}}
  The expected maxload of Simple $\LH$ with composite modulus is
  at most $n^{1/3+o(1)}$.
\end{customthm}
In particular, in \cref{cor:translate} we have translated the
state-of-the-art bound for maxload from the prime modulus setting
to the composite modulus setting.
This gives tentative evidence that the behavior of $\LH$ with
composite modulus may actually be the same as that of $\LH$ with
prime modulus. We leave this as an open question:
\begin{customthm}{\cref{question:equivalenceFZ}}
  Is the worst-case maxload of composite modulus $\LH$ the
  same, up to a factor-of-$n^{o(1)}$, as that of prime
  modulus $\LH$?
\end{customthm}

\paragraph{Connecting Integer and Real Moduli}
In \cref{sec:R} we consider \defn{Real $\LH$} where the
multiplier ``$a$'' in \eqref{eq:LHdefn3} is chosen
from $\R$. Initially the change to a continuous setting seems
to produce a very different problem.
In this continuous setting one equivalent way of formulating
\cref{q:isLHsubpoly} is:
\begin{question}[``Crowded Runner Problem'']
  \label{question:dual}
  Say we have $n$ runners with distinct speeds $x_1,x_2, \ldots,
    x_n \in (0,1)$ starting at the same location on a length $1$
  circular race-track. $a\in (0,1)$ is chosen randomly and all
  runners run from time $0$ until time $a$.
  Is it true that on average the largest ``clump'' of runners,
  i.e., set of runners in single interval of size $1/n$, is of
  size at most $n^{o(1)}$?
\end{question}
As formulated in \cref{question:dual} the problem becomes a
dual to the famous unsolved ``Lonely Runner Conjecture''
of Wills \cite{wills1967zwei}  and Cusick \cite{cusick1982view}
as formulated in \cite{bienia1998flows}. In the Lonely Runner
Conjecture the question is for each runner whether there is any
time such that the runner is ``lonely'', i.e., separated from all
other runners by distance at least $1/n$. Our question is whether for
most time steps there is any runner that is ``crowded'',
i.e., with many other runners within an interval of size $1/n$
around the runner.
The difficulty of the Lonely Runner Conjecture may be indicative
that the ``Crowded Runner Conjecture'' is also quite difficult.

In \cref{thm:itisreal} we show a surprising equivalence between
$\LH$ for integer and real moduli. More specifically, in our
lower bound on Real $\LH$ we compare to a potentially stronger
version of integer modulus $\LH$ termed \defn{Random Modulus
$\LH$} where the modulus is not simply the universe size $u$, but
rather a randomly chosen (and likely composite) integer in $[u/2,
u]$. Random Modulus $\LH$ is clearly at most a factor-of-$2$
worse that Simple $\LH$, but it is not obvious whether it is any
better; we leave this as an open question:
\begin{question}
  Does Random Modulus $\LH$ achieve asymptotically smaller
  expected maxload than Simple $\LH$?
\end{question}

Formally the equivalence between Real $\LH$ and Random Modulus
$\LH$ can be stated as follows:
\begin{customthm}{\cref{thm:itisreal}}
  Let $f(n)$ be a lower bound on Random Modulus $\LH$'s expected
  maxload that holds for all sufficiently large universes.
  Let $g(n)$ be an upper bound on Simple $\LH$'s expected maxload
  that holds for all sufficiently large universes.
  Let $M_\R$ denote the expected maxload of
  Real $\LH$. Then
  \[
  \Omega\paren{\frac{f(n)}{\log\log n}} \le M_\R \le \bigO(g(n)).
\]
\end{customthm}
The proof of this theorem involves several beautiful
number-theoretical lemmas and is one of our main technical
contributions.

This equivalence between the real and integer versions of $\LH$ shows
that \cref{q:isLHsubpoly} may not fundamentally be about prime
numbers or even integers.


\paragraph{A Simpler Problem: The Two Bin Case}
Finally in \cref{sec:twobins} we consider an even simpler question than
\cref{q:isLHsubpoly}:
\begin{question}\label{question:twobincase}
  What can be said about Simple $\LH$ in the case where there are only
  $\beta=2$ bins?
\end{question}

Intuitively, Simple $\LH$ should place roughly half of the balls
in each bin. In fact, one might even conjecture (especially if we
believe that Simple $\LH$ does well on many bins) that Simple
$\LH$ should achieve a Chernoff-style concentration bound on the
number of balls in each of the two bins. What makes
\cref{question:twobincase} interesting is that even analyzing the
\emph{expected} maxload of Simple $\LH$ in the two-bin case is
already a nontrivial problem (because, unlike non-simple $\LH$,
Simple $\LH$ is not pairwise independent). In fact, it may be the
simplest non-trivial problem that one can state about Simple
$\LH$.

In \cref{sec:twobins} we prove a partial result towards
\cref{question:twobincase}. We show that, even though some pairs
of elements may have probability as large as $2/3$ of colliding
in their bin assignment, one can nonetheless establish a $(1 +
o(1)) n/2$ bound on the expected maxload:
\begin{customthm}{\cref{thm:dontneedb}}
  Simple $\LH$ with $\beta=2$ bins has expected maxload at most
  $n/2 + \widetilde{\bigO}(n^{1/2}).$
\end{customthm}

We propose proving stronger results in the two-bin setting as a
fruitful direction for future research. For example, establishing
Chernoff-style concentration bounds on the maxload in the two-bin
case would constitute the strongest evidence to date that Simple
$\LH$ is a good load-balancing function.

\section{Preliminaries}
\paragraph{Set Definitions}
We write $\PRM$ to denote the set of primes, $\F_p$ for $p\in \PRM$ to denote the
finite field with $p$ elements, and $\Z_m$ for $m\in \N$ to denote
the ring $\Z / m\Z.$

For $a,b\in \R$ we define
$[a,b]=\setof{x\in \R}{a\le x\le b}.$
For $n\in \N$ we define
$[n] = \set{0,1,\ldots, n-1}$, and 
$[n]_{\setminus 0} = [n]\setminus \set{0}.$
For $p\in \N,x\in \R$ we define $\modp(x)$ as
the unique number in the set
$(x+p\Z) \cap [0,p),$
and define
$\circabs_p(x)= \min(\modp(x), \modp(-x)).$
For $x\in \Z$, $\modp(x)$ is the positive remainder
obtained when $x$ is divided by $p$ and $\circabs_p(x)$ is the
smallest distance to an element of $p\Z$ from $x$.
However these functions are also defined for $x\in \R\setminus \Z$.

For $a\in \R$ and set $X\subset \R$ we define
$a+X = \setof{a+x}{x\in X},$
$a\cdot X = \setof{a\cdot x}{x\in X},$
$\modp(X) = \setof{\modp(x)}{x\in X}.$

\paragraph{Number Theoretic Definitions}
For $x,y\in \N$ we write $x\perp y$ to denote that  $x,y$ are
coprime, and $x\mid y$ to denote that $x$ divides $y$.
We write $\gcd(x,y)$ to denote the largest $k$ satisfying both  $k\mid
  x$  and $k\mid y$.
A \defn{unit}, with respect to an implicit ring, is an element
with an inverse. For $m\in \N$ we define $\Z_m^{\times}$ as the
set of units in $\Z_m$. That is,
$\Z_m^{\times } =\setof{k \in [m]}{k\perp m}.$ We write
$\phi$ to denote the Euler-Toitent function, which is defined to
be $\phi(m)=|\Z_m^{\times }|$. We write $\tau(m)$ to denote the number of
divisors of $m$.
The following two facts (see, e.g.,
\cite{hardy1979introduction}), will be
useful in several bounds:
\begin{fact}\label{fact:numdivs} $\tau(n) \le 2^{\bigO(\log n / \log\log n)} \leq n^{o(1)}.$
\end{fact}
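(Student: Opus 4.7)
The plan is to work from the prime factorization and use a calibrated cost/benefit argument that treats small and large prime factors separately. Writing $n = \prod_{i=1}^{k} p_i^{a_i}$, multiplicativity gives $\tau(n) = \prod_{i=1}^{k}(a_i+1)$, and the factorization yields the "budget constraint" $\sum_i a_i \log p_i = \log n$. The goal is to upper-bound $\log \tau(n) = \sum_i \log(a_i+1)$ subject to this constraint.

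The key trick is to compare $\tau(n)$ with $n^{\varepsilon}$ for a small parameter $\varepsilon>0$ to be chosen. Writing
\[
\frac{\tau(n)}{n^{\varepsilon}} = \prod_i \frac{a_i+1}{p_i^{\varepsilon a_i}},
\]
one checks that for any prime with $p^{\varepsilon} \ge 2$ (equivalently $p \ge 2^{1/\varepsilon}$), the function $a \mapsto (a+1)/p^{\varepsilon a}$ is maximized at $a=0$ with value $1$, so all such factors are harmless. For each "small" prime $p < 2^{1/\varepsilon}$ dividing $n$, the exponent $a_p$ satisfies $a_p \le \log_2 n$, giving $(a_p+1) \le O(\log n)$; moreover by the prime-counting function there are only $O(2^{1/\varepsilon})$ such primes. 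This yields
\[
\tau(n) \le n^{\varepsilon} \cdot (O(\log n))^{2^{1/\varepsilon}}.
\]

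The final step is to balance $\varepsilon$. Choosing $\varepsilon = \Theta(1/\log \log n)$ makes $2^{1/\varepsilon}$ a small power of $\log n$, so the second factor is of the form $2^{(\log n)^{c} \log \log n}$ with $c < 1$, which is dominated by the target bound $2^{O(\log n/\log \log n)}$. The first factor $n^{\varepsilon} = 2^{\Theta(\log n/\log \log n)}$ already matches the target. Combining, $\tau(n) \le 2^{O(\log n/\log \log n)}$, and the second inequality $2^{O(\log n/\log \log n)} \le n^{o(1)}$ is immediate since $c/\log\log n \to 0$.

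The main obstacle is essentially cosmetic: one must verify the pointwise bound $(a+1) \le p^{\varepsilon a}$ for every $a \ge 0$ whenever $p \ge 2^{1/\varepsilon}$, which is a short monotonicity check on the ratio of consecutive values, and then pick $\varepsilon$ carefully enough that the "small-prime" overhead is genuinely absorbed into the main term rather than dominating it. No deep number theory is required beyond the trivial bound $\pi(x) \le x$ on the number of primes up to $x$.
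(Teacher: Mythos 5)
Your argument is correct: the comparison of $\tau(n)$ with $n^{\varepsilon}$, handling primes $p\ge 2^{1/\varepsilon}$ via $(a+1)\le 2^{a}\le p^{\varepsilon a}$ and the at most $2^{1/\varepsilon}$ small primes via $(a_p+1)\le O(\log n)$, followed by the choice $\varepsilon=\Theta(1/\log\log n)$ (with the constant taken large enough that $2^{1/\varepsilon}$ is a power of $\log n$ strictly below the first), gives exactly $\tau(n)\le 2^{O(\log n/\log\log n)}\le n^{o(1)}$. The paper does not prove this statement at all—it is quoted as a known fact with a citation to Hardy and Wright—and your write-up is essentially the classical proof of that result (Wigert's bound), so it matches the intended source rather than diverging from it.
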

\begin{fact}\label{fact:toitent}
  $\frac{n}{2\log\log n} \le \phi(n)< n.$
\end{fact}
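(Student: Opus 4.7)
The plan is to prove the two bounds separately. The upper bound $\phi(n) < n$ is immediate for $n \geq 2$: since $\gcd(0, n) = n > 1$, the element $0 \in [n]$ is not a unit, so $\Z_n^\times \subsetneq [n]$ and hence $\phi(n) \leq n - 1 < n$. So the content lies entirely in the lower bound, which I would reduce to two standard results from analytic number theory.

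The starting point is the multiplicative formula
\[
\phi(n) = n \prod_{p \in P}\left(1 - \frac{1}{p}\right), \qquad \text{where } P = \{p \in \PRM : p \mid n\},
\]
which rearranges to $n/\phi(n) = \prod_{p \in P}(1 - 1/p)^{-1}$. For a fixed number of prime divisors, this product is maximized when $P$ consists of the smallest primes, so the extremal case is $P = \{p \in \PRM : p \leq p^*\}$ for some threshold $p^*$. In this extremal case, $n \geq \prod_{p \leq p^*} p$, and Chebyshev's bound $\sum_{p \leq x} \log p \geq c x$ for an absolute constant $c > 0$ yields $\log n \geq c\,p^*$, i.e., $p^* \leq (\log n)/c$.

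Plugging this back in, I would bound
\[
\frac{n}{\phi(n)} \leq \prod_{p \leq (\log n)/c}\left(1 - \frac{1}{p}\right)^{-1},
\]
and apply Mertens' third theorem, which evaluates this product as $(1 + o(1))\, e^{\gamma} \log\log n$. Since $e^\gamma \approx 1.781 < 2$, the desired inequality $n/\phi(n) \leq 2 \log\log n$ follows for all sufficiently large $n$.

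The only real obstacle is constant-chasing: the factor $2$ in the statement absorbs the sharp Mertens constant $e^\gamma$, and one must take $n$ large enough that the $(1+o(1))$ in Mertens' theorem is at most $2/e^\gamma$. The finitely many small $n$ below that threshold can be handled by direct inspection (or the statement can be read as holding for $n$ sufficiently large, with the general case absorbed into an $\bigO(1)$ factor elsewhere). Beyond this constant bookkeeping, the proof is purely a repackaging of classical analytic number theory, which is presumably why the paper cites it from Hardy--Wright rather than proving it in situ.
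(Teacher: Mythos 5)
Your argument is correct in substance, but note that the paper does not prove this statement at all: it is quoted as a standard fact with a citation to Hardy--Wright, so there is no in-paper proof to match. Your route --- the Euler product $\phi(n)/n=\prod_{p\mid n}(1-1/p)$, the observation that the worst case for fixed $|P|$ is the set of smallest primes, Chebyshev's bound $\sum_{p\le x}\log p\ge cx$ to force $p^*\le (\log n)/c$, and Mertens' third theorem with $e^{\gamma}\approx 1.781<2$ --- is exactly the classical derivation of $\liminf \phi(n)\log\log n/n = e^{-\gamma}$, and it does establish the stated inequality for all sufficiently large $n$. (If you want explicit constants rather than a $(1+o(1))$, the Rosser--Schoenfeld bound $\phi(n)>n/(e^{\gamma}\log\log n+3/\log\log n)$ for $n\ge 3$ packages the same content.)

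The one genuine slip is your closing remark that the finitely many small $n$ "can be handled by direct inspection." They cannot: the inequality as literally stated is \emph{false} for many small and even moderately large $n$ --- e.g.\ $n=30$ has $\phi(30)=8$ while $30/(2\log\log 30)\approx 12.2$, and the primorials continue to violate it up to roughly $2\times 10^{11}$, since the asymptotic margin $2e^{-\gamma}\approx 1.12$ is slim and the lower-order terms work against you. So the only correct reading is your parenthetical one: the fact holds for all sufficiently large $n$, which is the regime the paper works in anyway (it assumes $n$ is at least a large constant and applies the bound to moduli of size at least $n^{2.5}$; see \cref{rmk:assumesize}). With that reading fixed, the proof is fine.
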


\paragraph{Hashing Definitions}
A \defn{hashing scheme} mapping universe $[u]$ to $\beta$ bins is a set of functions
$\setof{h_i: [u]\to [\beta]}{i\in I}$
parameterized by $i\in I$ for some set $I$.
We say $h_i$ sends element  $x$ to bin $h_i(x)$.
The \defn{maxload} of $\setof{h_i}{i\in I}$ with respect to set $X$ for
parameter choice $i_0\in I$ is
\[
\max_{k\in [\beta]} \left|\setof{x\in X}{h_{i_0}(x)=k}\right|.
\]
In other words, maxload is the number of elements mapped to the
fullest bin.
We are concerned with bounding the expected maxload of hashing
schemes with respect to uniformly randomly
chosen parameter $i\in I$ for arbitrary $X$.
We will abbreviate uniformly randomly to randomly when the
uniformity is clear from context.

\begin{rmk}\label{rmk:assumesize}
  Our analysis is asymptotic as a function of $n$, the number of
  hashed items. We assume $n$ is at least a sufficiently large constant.
  We also require the universe size $u$ to satisfy
  $n^{6}\le u \le \poly(n)$.
  We adopt these restrictions for the following reasons:
  \begin{itemize}
    \item If $u$ is too small then bounding maxload is not
          interesting. For instance, if $u\in \Theta(n)$ linear
          hashing trivially achieves maxload $\bigO(1)$.
          It is standard to think of the universe as being much larger
          than the number of bins. Our specific choice $u\ge \Omega(n^{6})$ is
          arbitrary, but simplifies some analysis.
    \item If $u$ is too large then constant-time arithmetic
          operations becomes an unreasonable assumption. Thus, we require
          ${u\le \poly(n)}$.
  \end{itemize}
\end{rmk}

\section{Composite Moduli}
\label{sec:Z}
In this section we study the effect of replacing the standard
prime modulus in $\LH$ with a composite modulus.
In addition to being an intrinsically interesting question, we
see in \cref{sec:R} that $\LH$ with composite modulus arises
naturally in the study of $\LH$ over $\R$.
Primes are more well-behaved than composite numbers when used as
moduli because in $\F_p$ all non-zero elements have inverses.
However, \cref{fact:numdivs} and \cref{fact:toitent} indicate that
while $\Z_m$ could have a large quantity of elements with varying
degrees of ``degeneracy'', there are also guaranteed to
be a substantial number of relatively well-behaved elements.

To bound the extent to which $\Z_m$  is worse than $\F_p$ we
begin by defining \defn{Smart $\LH$} ($\SLH$) where the multiplier is chosen
randomly from $\Z_m^\times$ rather than $\Z_m$. Then, we show
that the maxload of modulus $m$ $\LH$ is at most
$\tau(m)$-times larger than that of modulus $m$ $\SLH$.
We finish the section by demonstrating that the standard $\bigO(\sqrt{n})$ maxload
bound for prime moduli $\LH$, and even Knudsen's beautiful
$\widetilde{\bigO}(n^{1/3})$ bound \cite{knudsen_linear_2017} can be translated with
several modifications to the composite integer setting.

Now we formally discuss our hash functions.
\begin{defin}
  Fix appropriate $n\in \N, m\in \Z$.
  We define two hash families parameterized by $a\in
    [m]_{\setminus 0}$ consisting of functions
  $h_{a} : [m] \to [n]$.
  \begin{enumerate}
    \item In \defn{Blocked Hashing}, denoted $\blockZm$,
          $h_{a}(x) = \floor{\frac{\posmod_m(ax)}{m / n}}.$
    \item In \defn{Strided Hashing}, denoted $\strideZm$,
          $h_{a}(x)= \posmod_n(\posmod_m(ax)).$
  \end{enumerate}

  We define $\blockFp, \strideFp$ to be  $\blockZm,\strideZm$ for
  $m=p\in \PRM$.
\end{defin}

In \cite{knudsen_linear_2017} Knudsen gives the necessary idea to
show an equivalence up to a factor-of-$2$ between $\blockFp$ and
$\strideFp$; this fact also follows immediately from our
\cref{prop:blockZisok}.
For composite integers the situation is
more delicate. In particular, if $\gcd(m,n)$ is large then
$\strideZm$ behaves extremely poorly for some
$X$ while $\blockZm$ does not.

\begin{prop}\label{prop:blockZsucks}
  Let $m=k\cdot n$ for some $k>n$.
  There exists an $n$-element set
  $X\subset [m]$ on which $\strideZm$ has maxload  $n$.
\end{prop}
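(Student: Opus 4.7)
The plan is to exhibit an explicit $n$-element set $X \subset [m]$ such that for \emph{every} choice of multiplier $a \in [m]_{\setminus 0}$, all elements of $X$ get hashed to the same bin under $\strideZm$. This will immediately imply maxload $n$ (in the strongest possible sense: deterministically, for every $a$).

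The set I would take is the arithmetic progression $X = \set{0, n, 2n, \ldots, (n-1)n}$. First I would verify $X \subset [m]$: the largest element is $(n-1) n < n^2 < k n = m$ since by assumption $k > n$, so $X$ is a valid $n$-element subset of the universe. Next I would compute $h_a(x)$ for $x = in \in X$ and arbitrary $a \in [m]_{\setminus 0}$. Writing $ai = qk + r$ with $r = \posmod_k(ai) \in [0,k)$, we get
\[
a \cdot in = q k n + r n, \qquad rn \in [0, kn) = [0, m),
\]
so $\posmod_m(ain) = rn$, which is a multiple of $n$. Therefore
\[
h_a(in) = \posmod_n(\posmod_m(ain)) = \posmod_n(rn) = 0.
\]

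Thus every element of $X$ maps to bin $0$ regardless of the choice of $a$, so $\strideZm$ has maxload exactly $n$ on $X$. This also makes clear what goes wrong with strided hashing over composite moduli: multiples of $n$ in the universe are fixed points of the outer $\posmod_n$ operation, so whenever $n \mid m$ the map $\posmod_n \circ \posmod_m$ annihilates the subgroup $n \Z_m$ entirely, and the multiplier $a$ never has a chance to scramble elements within this subgroup.

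There is essentially no technical obstacle here: the only thing to be careful about is the fit into $[m]$ (which uses $k > n$) and the factoring $\posmod_m(ain) = n \cdot \posmod_k(ai)$, which relies on the clean divisibility $m = kn$. The content of the proposition is really conceptual rather than computational — it motivates why the subsequent results of the section focus on $\blockZm$ rather than $\strideZm$.
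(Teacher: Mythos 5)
Your proposal is correct and uses exactly the same witness set $X = n\cdot[n]$ and the same observation as the paper, namely that $\posmod_n(\posmod_m(ax)) = 0$ for every $x \in X$ and every $a$; you merely spell out the divisibility computation that the paper leaves implicit. No issues.
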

\begin{proof}
  Let  $X=n\cdot [n]$. Then  $\posmod_n(\modm(ax)) = 0$
  for all $x\in X$ regardless of $a$. Thus, all $x\in X$
  always hash to bin $0$ so the maxload is $n$ deterministically.
\end{proof}

On the other hand, as long as $\gcd(m,n)$ is small then
$\strideZm,\blockZm$ achieve similar maxload.
\begin{prop}\label{prop:blockZisok}
  Let $m\perp n$. For any $n$ element set $X\subset [m]$ the expected maxload of
  $\strideZm$ and $\blockZm$ on $X$ differ by at most a factor-of-$2$.
\end{prop}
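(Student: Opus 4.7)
The plan is to exhibit an explicit bijection on the multiplier space $[m]_{\setminus 0}$ that converts $\blockZm$-assignments into $\strideZm$-assignments up to a global permutation of bin labels. Since $\gcd(m,n)=1$, the map $\sigma(a) = \posmod_m(an)$ is a bijection of $[m]_{\setminus 0}$ onto itself and therefore preserves the uniform distribution, so it suffices to compare $\blockZm$ at multiplier $a$ with $\strideZm$ at multiplier $\sigma(a)$ pointwise in $a$.

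Fix $a \in [m]_{\setminus 0}$ and $x \in X$, and set $z = \posmod_m(ax) \in [0,m)$. Write $nz = qm + r$ with $q = \floor{nz/m}$ and $r = \posmod_m(nz) \in [0,m)$; note $q \in [n]$ since $z < m$ forces $nz < nm$. By definition the block bin of $x$ under multiplier $a$ is $\floor{nz/m} = q$. The stride bin of $x$ under multiplier $\sigma(a)$ is $\posmod_n(\posmod_m(nz)) = \posmod_n(r) = \posmod_n(nz - qm) = \posmod_n(-qm)$, using $\posmod_n(nz) = 0$.

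Since $\gcd(m,n)=1$, the map $\pi : q \mapsto \posmod_n(-qm)$ is a bijection $[n] \to [n]$. Hence the block bin assignment under $a$ and the stride bin assignment under $\sigma(a)$ agree after a single relabeling of bins that depends on neither $a$ nor $x$; they induce the same partition of $X$, and therefore have identical maxload, for every $a$. Combined with the fact that $\sigma$ preserves the uniform distribution on multipliers, this shows the expected maxloads of $\blockZm$ and $\strideZm$ on $X$ are in fact exactly equal, which is strictly stronger than the stated factor-of-$2$ bound.

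The coprimality hypothesis $m \perp n$ enters in exactly two places: to make $\sigma$ a bijection on $[m]_{\setminus 0}$, and to make $\pi$ a bijection on $[n]$. Both properties fail without coprimality, consistent with the strong separation in \cref{prop:blockZsucks}. I do not foresee any serious obstacle beyond correctly tracking the integer arithmetic of the division $nz = qm + r$ and verifying that the induced bin permutation is truly independent of $x$.
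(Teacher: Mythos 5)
Your proof is correct, and it actually establishes something stronger than the paper does, by a genuinely tighter version of the same underlying idea. The paper's proof is a two-directional ``transfer the fullest bin'' argument: it takes the set $Y$ landing in a fullest block bin under $a_0$, covers the corresponding pre-bin by $\ceil{m/n}$ consecutive integers, and multiplies by $n$; since that covering interval has length $n\ceil{m/n}$, which can exceed $m$, its image can wrap around $\Z_m$ and hit two residues mod $n$, forcing the factor-of-$2$ loss and requiring a symmetric argument with $n^{-1}$ for the other direction. Your computation avoids the wraparound entirely: for $z=\posmod_m(ax)$ in block bin $k$ one has $km\le nz<(k+1)m$ exactly, so $\posmod_m(nz)=nz-km$ and the stride value under multiplier $\posmod_m(an)$ is the single value $\posmod_n(-km)$. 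Hence blocked hashing at $a$ and strided hashing at $\sigma(a)=\posmod_m(an)$ induce the identical partition of $X$ up to the fixed bin relabeling $q\mapsto\posmod_n(-qm)$ (a bijection of $[n]$ since $m\perp n$), and since $\sigma$ permutes $[m]_{\setminus 0}$, the two maxload random variables are equal in distribution, not merely within a factor of $2$ in expectation. Both arguments rest on multiplication by $n$ (resp.\ $n^{-1}$) modulo $m$; what yours buys is the exact correspondence, which also makes the reverse direction unnecessary. The only points worth stating explicitly in a write-up are the identity $\floor{z/(m/n)}=\floor{nz/m}$ (real division by $m/n$, as in the paper's definition) and the observation that $\pi$ depends only on $m,n$, both of which you have.
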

\begin{proof}
  Because $m,n$ are coprime $n$ has a multiplicative inverse
  $n^{-1}\in \Z_m$.
Assume $Y\subset X$ is the set of elements mapping to a fullest bin under
$\blockZm$ with $a=a_0$. We claim that for $a=\modm(n\cdot a_0)$, $\strideZm$
has maxload at least $|Y|/2$.
Indeed, let $k$ be the bin that $Y$ maps to under $a_0$.
We have 
\[ \ceil{k\frac{m}{n}} \le \modm(Ya_0) < (k+1)\frac{m}{n}. \] 
Consider the set 
 \[  Y' = n\cdot\left(\ceil{k\frac{m}{n}}+[\ceil{m/n}]\right). \] 
The difference between the maximum and the minimum elements of $Y'$ is at most
$m$. Thus, $\posmod_n(\modm(Y'))$ takes at most two distinct values.
In particular this implies that $\posmod_n(\modm(Ya_0n))$ takes on at most two
values, so there is a subset of $Y$ with size at least $|Y|/2$ that all hash to
the same bin under  $a=\modm(n a_0)$.

  Assume $S\subset X$ is the set of elements mapping to a fullest bin under
$\strideZm$ with $a=a_0$. Then, by similar reasoning to the above case, for
$a=\modm(n^{-1}\cdot a_0)$ $\blockZm$ has maxload at least  $|S|/2$.

  Multiplication by $n$ or $n^{-1}$ modulo $m$ permutes $\Z_m$. The result follows.
\end{proof}

\cref{prop:blockZisok} and \cref{prop:blockZsucks} teach us that
for composite integer $\LH$ it is more robust to consider
$\blockZm$ than $\strideZm$, but essentially equivalent as long
as $\gcd(m,n)$ is small. For the remainder of the paper we
restrict our attention to $\blockZm$ and $\blockFp$ which we
abbreviate to $\ZH_m$, $\FH_p$.
Now we formally define the variant of $\ZH_m$ that partially
solves the problem of $m$ being composite.
\begin{defin}\label{defn:slh}
  In \defn{Smart $\LH$} ($\SLH_m$) we randomly select
  $a\in \Z_m^{\times }$ and place $x\in [m]$ in bin $\floor{\frac{\modm(ax)}{m/n}}.$
\end{defin}
Surprisingly, we will show that the performance of $\ZH_m$ is not
too far from that of $\SLH_m$, especially if $m$ has few divisors.
We use the following notation:

\begin{defin}
  Let random variable $M_{\ZH}(m,X)$ denote the maxload incurred by
  $\ZH_m$ on $X$, and let $M_{\ZH}(m,n)$ denote the worst-case expected value of
  $M_{\ZH}(m,X)$ over all $n$-element sets $X\subset [m]$.
  Analogously define $M_{\SLH}(m,X), M_{\SLH}(m,n)$.
\end{defin}

\begin{theorem}\label{thm:LHSLH}
  Fix $m\ge n^{6}$ with $m\in \poly(n)$.
  Let $f$ be a monotonically increasing concave function with
  $M_{\SLH}(m_0,n_0)\le f(n_0)$ for all $n_0$ and all $m_0\ge n^{2.5}$.
  Then
  \[
  M_{\ZH}(m,n) \le \tau(m) \cdot f(n)+1.
\]
\end{theorem}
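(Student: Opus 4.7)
The plan is to decompose the random choice of $a \in [m]_{\setminus 0}$ by $d := \gcd(a,m)$. For each divisor $d$ of $m$ with $d<m$, I would write $a = d a'$ with $a'$ uniformly distributed in $\Z_{m/d}^{\times}$; the probability of this event is $\phi(m/d)/(m-1)$. The key identity to establish first is
\[\floor{\frac{ax \bmod m}{m/n}} = \floor{\frac{d(a'x \bmod (m/d))}{m/n}} = \floor{\frac{a' x \bmod (m/d)}{(m/d)/n}},\]
which shows that conditioned on $\gcd(a,m) = d$, hashing $X$ with $\ZH_m$ using $a=da'$ is exactly the same as hashing the \emph{multiset} $X \bmod (m/d)$ with $\SLH_{m/d}$ using $a'$.

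The next step is to bound the multiset maxload in terms of a set maxload on the distinct residues. Let $k_d$ denote the number of distinct residues in $X \bmod (m/d)$. Each residue class mod $m/d$ has at most $d$ preimages in $[m]$, so the maximum multiplicity is bounded by $d$. Hence the multiset maxload is at most (max multiplicity) $\times$ (set maxload on distinct residues), which in expectation over $a'$ is at most $d \cdot \E_{a'}[\text{set maxload}]$. When $m/d \geq n^{2.5}$, the hypothesis combined with monotonicity of $f$ gives $\E_{a'}[\text{set maxload}] \leq M_{\SLH}(m/d, k_d) \leq f(k_d) \leq f(n)$, yielding conditional expectation at most $d \cdot f(n)$. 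For $d$ with $m/d < n^{2.5}$ (so $d > n^{3.5}$, since $m \geq n^{6}$), I would fall back on the trivial bound $n$.

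Summing using $\Pr[\gcd(a,m) = d] = \phi(m/d)/(m-1) \leq (m/d)/(m-1)$, the small-$d$ contribution is at most
\[\sum_{\substack{d \mid m \\ m/d \geq n^{2.5}}} \frac{d \cdot \phi(m/d)}{m-1} f(n) \;\leq\; \tau(m) \cdot \frac{m}{m-1} f(n) \;=\; \tau(m) f(n) + o(1),\]
while the large-$d$ contribution is at most $\tau(m) \cdot n^{3.5}/(m-1) = o(1)$, using $m \geq n^{6}$, $\tau(m) \leq n^{o(1)}$, and $f(n) \leq n$. Together these yield $M_{\ZH}(m, n) \leq \tau(m) f(n) + 1$. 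The main obstacle will be the multiset step: the bound $\mu_{\max} \leq d$ introduces a seemingly-wasteful factor of $d$, but it is precisely cancelled by $\phi(m/d) \leq m/d$ inside the probability weighting, so that each divisor of $m$ contributes roughly $f(n)$ to the total, and summing over $\tau(m)$ divisors produces exactly the claimed $\tau(m) f(n)$ factor.
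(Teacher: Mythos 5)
Your proposal is correct, and it reaches the bound by a genuinely different mechanism than the paper, even though both start from the same conditioning on $d=\gcd(a,m)$ and the same observation that, given $\gcd(a,m)=d$, the hash $\ZH_m$ collapses to $\SLH_{m/d}$ (your identity $\floor{(ax \bmod m)/(m/n)} = \floor{(a'x \bmod (m/d))/((m/d)/n)}$ is exactly the reduction used in the paper). Where you diverge is in how the factor of $d$ is absorbed: the paper partitions $X$ into the $d$ contiguous blocks $X_{i,d}=X\cap(i\cdot m/d+[m/d])$, each of which maps injectively into $[m/d]$, union-bounds the maxload over blocks, and then uses concavity of $f$ via Jensen to turn $\frac1d\sum_i f(|X_{i,d}|)$ into $f(n/d)\le f(n)$; you instead view $X \bmod (m/d)$ as a multiset, bound the multiplicity of each residue by $d$, and pay the factor $d$ there, cancelling it against $\Pr[\gcd(a,m)=d]\le\frac{m/d}{m-1}$. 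A notable consequence is that your argument never uses concavity of $f$ at all, only monotonicity, so it proves a formally more general statement; on the other hand the paper's route yields the intermediate bound $\sum_{d\in D_m} f(n/d)$, which can be substantially sharper than $\tau(m)f(n)$ when $f$ grows polynomially, before being relaxed to the stated form. Two small points you should make explicit: (i) your final arithmetic uses $f(n)\le n$, which is not among the hypotheses, but this is harmless since if $f(n)>n$ the claim is trivial because the maxload never exceeds $n$ (or replace $f$ by $\min(f,n)$, which stays increasing); (ii) the slack terms $\tau(m)f(n)/(m-1)$ and $\tau(m)n^{3.5}/(m-1)$ must jointly be at most $1$, which indeed holds for sufficiently large $n$ under $m\ge n^6$, $m\in\poly(n)$, matching the ``$+1$'' in the statement and the same role the excluded event $\gcd(a,m)\ge m/n^{2.5}$ plays in the paper.
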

\begin{proof}
  Fix any $n$-element set $X\subset [m]$.
  For $d\mid m$, define $\setof{I_{i,d}}{i\in [d]}$ as the
  following partition of $[m]$ into $d$ size $m/d$ blocks:
  \[I_{i,d} = i\cdot m/d + [m/d].\]
  Define $X_{i,d} = X\cap I_{i,d}$
  and let $G_d$ be the event  $\gcd(a,m)=d$.
  We will bound the expected maxload of $\ZH_m$ by conditioning
  on $G_d$. However, if $\gcd(a,m)$ is very large then  $\ZH_m$ will
  necessarily incur large maxload; thus, we first exclude
  this case by showing it is very unlikely.
  For any $d\mid m$
  \begin{equation}\label{eq:prGd}
    \Pr[G_d]\le 1/d
  \end{equation}
  because there are $m/d$ multiples of $d$ in $[m]$.
  There are at most $n^{2.5}$ divisors $d\mid m$ with  $d\ge
    m/n^{2.5}$, because such divisors are in bijection with
  divisors $d'\mid m$ satisfying  $d'\le n^{2.5}$.
  By \cref{eq:prGd} each of these large divisors
  $d$ has $\Pr[G_d] \le n^{2.5}/m$. Thus we have:
  \begin{equation}\label{eq:gcdbignoway}
    \Pr[\gcd(a,m) \ge m/n^{2.5}] \le n^{2.5}\cdot \frac{n^{2.5}}{m} \le
    \frac{1}{n},
  \end{equation}
  where the final inequality follows by the assumption that $m\ge n^{6}$.
  By \cref{eq:gcdbignoway} the case $\gcd(m,a)\ge m/n^{2.5}$ contributes at most $1$ to
  the expected maxload.
  Let
  \[D_m = \setof{d\mid m}{d<m/n^{2.5}}.\]
  We claim the following chain of inequalities:

\begin{minipage}{0.4\textwidth}
\begin{align}
     &\E[M_{\ZH}(m,X) \mid \gcd(m,a)\in D_m] \nonumber \\
     &= \sum_{d\in D_m}\Pr[G_d]\cdot \E[M_{\ZH}(m,X) \mid G_d] \label{eqchainSLH1} \\
     &\le \sum_{d\in D_m} \frac{1}{d}\cdot \E[M_{\ZH}(m,X) \mid G_d] \label{eqchainSLH2} \\
     &\le \sum_{d\in D_m}\frac{1}{d}\sum_{i\in [d]}\E[M_{\ZH}(m,X_{i,d}) \mid G_d]\label{eqchainSLH3} \\
     &\le \sum_{d\in D_m}\frac{1}{d}\sum_{i\in [d]}\E[M_{\SLH}(m/d,X_{i,d})] \label{eqchainSLH4} \\
     &\le \sum_{d\in D_m}\frac{1}{d}\sum_{i\in
    [d]}M_{\SLH}(m/d,|X_{i,d}|) \label{eqchainSLH5} \\
     &\le \sum_{d\in D_m} \frac{1}{d}\sum_{i\in
    [d]}f(|X_{i,d}|) \label{eqchainSLH6} \\
     &\le \sum_{d\in D_m}f(n/d) \label{eqchainSLH7} \\
     &\le \tau(m)\cdot f(n) \label{eqchainSLH8}.
\end{align}
\end{minipage}%
\begin{minipage}{0.6\textwidth}
\begin{itemize}
    \item \cref{eqchainSLH1}: Law of total expectation.
    \item \cref{eqchainSLH2}: $\Pr[G_d]\le 1/d$, because there are $m/d$ multiples of $d$ in $[m]$.
    \item \cref{eqchainSLH3}: We can ``union bound''  because $\bigsqcup_{i\in [d]}X_{i,d} = X.$
    \item \cref{eqchainSLH4}: Recall that $X_{i,d}\subset I_{i,d}$, where
          $I_{i,d}$ is a contiguous interval of size $m/d$.
          Because we are conditioning on $\gcd(m,a)=d$,
          $\modm(a\cdot I_{i,d})$ consists of every $d$-th element of
          $[m]$ starting from $0$, i.e., is $\modm(d[m])$. Having
          elements which are spaced out by $d$ grouped into
          intervals of length $m/n$ per bin is
          equivalent to having elements spaced out by $1$
          grouped into intervals of length $(m/d)/n$ per
          bin. Formally this is because
          $\modm(x\cdot d\cdot j) = d\cdot \posmod_{m/d}(x\cdot j).$
          The restriction  $\gcd(m,a)=d$ can also be expressed as
          $\gcd(m/d, a/d)=1$, i.e., $a/d \perp m/d$.
          Hence, the expected maxload of $\ZH_m$ on $X_{i,d}$ conditional on
          $G_d$ is the same as the expected maxload of $\SLH_{m/d}$
          on $X_{i,d}$.
    \item \cref{eqchainSLH5}: $M_\SLH(m,n)$ is by definition the
          worst-case value of $\E[M_\SLH(m,X)]$ over all $n$-element
          sets $X$.
    \item \cref{eqchainSLH6}: By assumption $f$ is an upper
          bound on $M_{\SLH}$ as long as the modulus $m/d$ is
          sufficiently large. Because $d\in D_m$ we have  $m/d >
            n^{2.5}$, so the upper bound $f$ holds.
    \item \cref{eqchainSLH7}: $f$ is concave.
    \item \cref{eqchainSLH8}: $f$ is increasing, $\tau$ counts
          the divisors of $m$.
\end{itemize}
\end{minipage}

  We have shown the bound \cref{eqchainSLH8} for arbitrary
  $X$, so in particular the bound must hold for worst-case  $X$.
  Adding $1$ the for the event $\gcd(a,m)\ge m/n^{2.5}$ we have
  \[ M_{\ZH}(m,n) \le \tau(m)\cdot f(n)+1.\]
\end{proof}
\begin{rmk}
  \cref{thm:LHSLH} says that increasing concave bounds for
  $M_\SLH$ can be translated to bounds for $M_{\ZH}$ except
  weakened by a factor-of-$\tau(m)$.
  If $m$ is a power of $2$, a natural setting, then
  $\tau(m) = \log m$.
  Even for worst-case $m$ \cref{fact:numdivs} asserts $\tau(m)\le
    m^{o(1)}$. So, $\SLH_m$ and $\ZH_m$ have quite similar
  behavior.
\end{rmk}

Now we analyze the performance of $\SLH_m$. First we give an
argument based on the trivial $\bigO(\sqrt{n})$ bound for $\FH$.


\begin{theorem} \label{prop:sqrtnZ}
  $M_{\SLH}(m,n) \le \bigO(\sqrt{n\log\log n}).$
\end{theorem}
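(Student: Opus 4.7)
The plan is to apply the second-moment method, paralleling the classical $\bigO(\sqrt{n})$ proof for $\FH_p$ and paying a $\sqrt{\log\log n}$ penalty to account for the fact that $\Z_m^{\times}$ can be strictly smaller than $\Z_m$. Fix an arbitrary $n$-element $X\subset[m]$, let $M=M_{\SLH}(m,X)$, and let $C=\sum_{i<j}\mathbf{1}[h_a(x_i)=h_a(x_j)]$ count colliding pairs. Since $M^2\le\sum_{k\in[n]}|h_a^{-1}(k)\cap X|^2=n+2C$, Jensen's inequality gives $\E[M]\le\sqrt{n+2\E[C]}$, so it suffices to show $\E[C]=\bigO(n\log\log n)$.

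I will upper-bound the per-pair collision probability by $\bigO(\log\log m/n)$ and sum. Fix $i\ne j$, set $z=x_i-x_j$ and $g=\gcd(z,m)$; since $|z|\le m-1$ we have $g<m$, hence $m/g\ge 2$. Write $z=gz'$ with $z'\in\Z_{m/g}^{\times}$. A collision forces $\circabs_m(az)<m/n$; using the identity $\posmod_m(az)=g\cdot\posmod_{m/g}(az')$, this is equivalent to the event $\circabs_{m/g}(az')<m/(gn)$.

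The key step is to show that for $a$ uniform on $\Z_m^{\times}$, the reduction $a\bmod(m/g)$ is uniform on $\Z_{m/g}^{\times}$. This holds because the reduction is a surjective group homomorphism $\Z_m^{\times}\twoheadrightarrow\Z_{m/g}^{\times}$ (surjectivity via a CRT lift of any target unit), so all of its fibers have the common size $\phi(m)/\phi(m/g)$. Since multiplication by the unit $z'$ is a bijection of $\Z_{m/g}^{\times}$, the residue $\posmod_{m/g}(az')$ is also uniform on $\Z_{m/g}^{\times}$. When $m/g\ge n$, there are at most $2m/(gn)$ elements of $\Z_{m/g}^{\times}$ within distance $m/(gn)$ of $0$, and \cref{fact:toitent} gives $\phi(m/g)\ge(m/g)/(2\log\log m)$, yielding collision probability at most $4\log\log m/n$. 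In the remaining regime $m/g<n$ we have $m/(gn)<1$, and since $0\notin\Z_{m/g}^{\times}$ the relevant set is empty, so the collision probability is $0$.

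Summing over the $\binom{n}{2}$ pairs gives $\E[C]\le 2n\log\log m$. Since $m\in\poly(n)$ implies $\log\log m=\bigO(\log\log n)$, substituting into the second-moment inequality yields $\E[M]\le\bigO(\sqrt{n\log\log n})$. The main obstacle is the distributional claim that reducing a uniform unit of $\Z_m$ modulo a divisor of $m$ yields a uniform unit of the quotient; the cleanest route is the surjective-homomorphism argument above, after which everything else amounts to routine interval-counting and an application of \cref{fact:toitent}.
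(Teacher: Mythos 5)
Your proposal is correct and follows essentially the same route as the paper: second-moment/collision counting, a dichotomy on $g=\gcd(x_i-x_j,m)$ in which pairs with $g$ exceeding roughly $m/n$ never collide, and a per-pair collision probability of $\bigO(\log\log n/n)$ for the rest, paying the $\log\log$ factor via \cref{fact:toitent}. The only difference is cosmetic: where the paper bounds the unlinked collision probability by comparing the uniform distribution on $\Z_m^{\times}$ pointwise to the uniform distribution on $\Z_m$, you instead observe that reducing a uniform unit of $\Z_m$ modulo $m/g$ gives a uniform unit of $\Z_{m/g}$ and count units near $0$, which is a slightly more self-contained justification of the same estimate.
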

\begin{proof}
  We say $x,y$ \defn{collide}, with respect to $a=a_0$,
  if they hash to the same bin for $a=a_0$.
  As in the $\bigO(\sqrt{n})$ bound for $\FH_p$ we bound the
  maxload by counting the expected number of collisions and
  comparing this with the number of collisions guaranteed by a
  certain maxload.
  The difficulty in the proof for $\SLH_m$ is that the
  probability of  $x,y\in X$ colliding is not as simple as in
  $\FH_p$ where all pairs collide with probability $\bigO(1/n)$.
  To handle this we consider two types of pairs $x,y$:
  \begin{defin}
    Distinct $x,y\in X$ are
    \defn{linked} if
    $\gcd(x-y,m) > \ceil{m/n},$
    and \defn{unlinked} otherwise.
  \end{defin}

  \begin{claim}\label{clm:linkedcollidesqrt}
    Linked $x,y$ never collide.
  \end{claim}
  \begin{proof}
    If $\gcd(x-y,m)=k > \ceil{m/n}$ and $a$ is the randomly chosen
    multiplier from $\Z_m^{\times}$ then
    \[
    \gcd(\modm(a\cdot (x-y)),m) = k
  \]
    because $a\perp m$.
    But then
    $\circabs_m(ax - ay) \ge k > \ceil{m/n},$
    so $x,y$ fall in different bins.
  \end{proof}
  \begin{claim}\label{clm:unlinkedcollidesqrt}
    Unlinked $x,y$ collide with probability at most
    $\bigO\left(\frac{\log\log n}{n}\right).$
  \end{claim}
  \begin{proof}
    When $x,y$ are unlinked, $\modm(a(x-y))$ would have probability $\bigO(1/n)$
    of landing in each bin if $a$ were chosen uniformly from $\Z_m$.
    However, in $\SLH_m$ this uniformity is not obvious.
    Fortunately, \cref{fact:toitent} ensures that each $a\in
      \Z_m^{\times}$ occurs with probability at most $\frac{2\log\log
        m}{m}$, which is not much larger than $\frac{1}{m}$.
    In particular, this implies that the probability of $x,y$
    colliding is at most
    \[
    \bigO(1/n)\cdot m \cdot \frac{2\log\log m}{ m} \le
      \bigO\left(\frac{\log\log n}{n}\right).
    \]
  \end{proof}

  Now that we have shown
  \cref{clm:linkedcollidesqrt} and \cref{clm:unlinkedcollidesqrt}
  the proof continues in the same
  way as for $\FH_p$.
  If maxload is $m$ then there must be at least
  $\binom{m}{2} = \Theta(m^{2})$ collisions.
  By Jensen's inequality and the convexity of $x\mapsto x^2$ we
  have $\E[M_{\SLH}(m,X)^2]\ge \E[M_{\SLH}(m,X)]^2.$
  We can also count the expected number of collisions directly
  using \cref{clm:linkedcollidesqrt} and \cref{clm:unlinkedcollidesqrt};
  doing so, we find that the expected number of collisions is
  $\bigO(n \log\log n).$
  Comparing our two methods of counting collisions gives:
  \[
  \E[M_{\SLH}(m,X)]\le \bigO(\sqrt{n\log\log n})
\]
  for any $X$, and in particular for worst-case $X$.
\end{proof}

In \cref{sec:formalpfZm13} we strengthen \cref{prop:sqrtnZ} to 
\begin{theorem}\label{thm:Zm1_3}
  $M_{\SLH}(m,n)\le \widetilde{\bigO}(n^{1/3}).$
\end{theorem}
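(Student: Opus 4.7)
The plan is to adapt Knudsen's third-moment argument \cite{knudsen_linear_2017} from the prime-modulus setting to $\SLH_m$. For random $a \in \Z_m^{\times}$, let $T$ count the number of ordered triples $(x, y, z) \in X^3$ with $h_a(x) = h_a(y) = h_a(z)$, and let $M_b$ denote the load of bin $b$. We have $T = \sum_b M_b^3 \geq M_{\SLH}(m, X)^3$, so Jensen's inequality applied to $x \mapsto x^3$ gives $\E[T] \geq \E[M_{\SLH}(m, X)]^3$. It therefore suffices to prove $\E[T] \leq \widetilde{\bigO}(n)$, from which the theorem follows by taking cube roots.

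For a non-degenerate triple $(x, y, z) \in X^3$, set $u = x - y$ and $v = y - z$. The three items can land in a common bin only if both $\circabs_m(au) \leq m/n$ and $\circabs_m(av) \leq m/n$, up to a $\bigO(1)$ boundary loss that splits a bin into constant many intervals. Generalizing \cref{clm:linkedcollidesqrt} to the triple setting, if $\gcd(u, m) > \ceil{m/n}$ or $\gcd(v, m) > \ceil{m/n}$, no such collision can occur, so these "linked" triples contribute nothing. For the remaining "doubly unlinked" triples, a strengthening of \cref{clm:unlinkedcollidesqrt} that uses \cref{fact:toitent} bounds the collision probability by $\widetilde{\bigO}(1/n^2) \cdot c(u, v)$, where the corrective factor $c(u, v)$ measures the additive interaction of $u$ and $v$ relative to the divisors of $m$.

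The main step, which is also the principal obstacle, is to bound the weighted sum $\E[T] \leq \sum_{u, v} r_X(u, v) \cdot \Pr[\text{differences } u, v \text{ collide}]$, where $r_X(u, v)$ counts pairs $(x, y, z) \in X^3$ with $x - y = u$ and $y - z = v$. Knudsen's argument relates this sum to the additive energy of $X$ and, via an incidence argument on $\F_p^{\times} \times \F_p^{\times}$, yields the desired $\widetilde{\bigO}(n)$ bound in the prime-modulus setting. The new ingredient for composite $m$ is a decomposition of the triple sum by the divisor class of $\gcd(u, v, m)$; on each class, Knudsen's argument adapts via the coset structure of $d\Z_m$ inside $\Z_m$, reducing to an incidence problem over a smaller cyclic group. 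The number of classes is $\tau(m) = n^{o(1)}$ by \cref{fact:numdivs}, and combined with the $\log\log m$ non-uniformity factor from $a \in \Z_m^{\times}$, all composite-modulus losses amount to only $n^{o(1)}$ overall, which are absorbed by the $\widetilde{\bigO}$ notation. The most delicate part of this step is verifying that Knudsen's incidence bound survives the projection onto each divisor class, since the underlying additive structure of $X$ may concentrate in a single class; handling this requires a pigeonhole argument analogous to the $\sum_{i \in [d]} f(|X_{i,d}|) \leq f(n)$ concavity step of \cref{thm:LHSLH}.
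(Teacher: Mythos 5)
Your outline reduces the theorem to the single claim $\E[T]\le\tilo(n)$ for worst-case $X$, where $T$ counts triples landing in a common bin, and then asserts that this claim is what ``Knudsen's argument'' delivers. That is the gap: Knudsen's proof (and the adaptation in \cref{sec:formalpfZm13}) is not a third-moment/additive-energy/incidence computation at all. It is a threshold argument: one fixes $\alpha$, defines the bad multipliers $\mathcal{A}=\set{a: M>4\alpha}$, and shows that each bad $a$, when composed with any of the $\Omega(\alpha/\log\alpha)$ auxiliary prime multipliers $b\in(\alpha,2\alpha)$ coprime to $m$, fractures its heavy pre-bin into at most $b$ tiny intervals and hence produces $\Omega(\alpha)$ ``close pairs'' (\cref{clm:prebintiny}, \cref{clm:oneBclosePairs}); a Cauchy--Schwarz/disjointness argument (\cref{lem:closepairs}, resting on the claim that $\modm(b^{-1}I_b)\cap\modm(c^{-1}I_c)$ sits in a single tiny interval) shows these pairs are nearly distinct across different $b$, and this is compared against a purely \emph{pairwise} count of $\bigO\paren{\frac{n}{\alpha}\log\log n}$ expected close pairs (\cref{clm:ezclosepairs}), giving $\Pr[M>4\alpha]\le\tilo(n)/\alpha^3$ (\cref{cor:finisher}). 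The whole point of that structure is to obtain third-moment-strength tail bounds while only ever having to control pair statistics; the quantity you propose to bound directly is exactly what the known proof avoids.

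Concretely, the weighted sum $\sum_{u,v}r_X(u,v)\Pr[\text{differences }u,v\text{ collide}]$ is genuinely dangerous: for differences with a low-height relation $sv\equiv tu\pmod m$ the joint collision probability is of order $\frac{1}{n\max(s,t)}+\frac{1}{n^2}$ (e.g.\ $v=u$ gives $\Theta(1/n)$), so one must show that for every $K$ the number of triples of $X$ admitting a height-$K$ relation is $\tilo(n^2K)$; the trivial count is $n^2K^2$, which would only give $\E[T]\le\bigO(n^2)$, and your proposal offers no mechanism (additive energy, incidences, or otherwise) to rule out worst-case $X$ concentrating on low-height relations --- nor can this be cited from \cite{knudsen_linear_2017}, where no such bound appears. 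The composite-modulus garnishes you describe (splitting by $\gcd(u,v,m)$, the $\log\log m$ non-uniformity from \cref{fact:toitent}, a concavity/pigeonhole step as in \cref{thm:LHSLH}) are secondary and plausible, but they decorate an engine that is missing; the issues the actual proof must confront --- choosing auxiliary multipliers that are primes coprime to $m$ (costing the $-\log n$ in \cref{clm:sizeB}), the linked/unlinked dichotomy via $\gcd(x-y,m)>w$, and the intersection claim needing $m>n^{2.5}$ --- never arise in your sketch because the reduction it relies on is unproven. If you could genuinely establish $\E[T]\le\tilo(n)$ for all $X$, the Jensen step would indeed finish the proof, but as written that claim is the theorem's entire difficulty, not a corollary of prior work.
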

The proof is a modification of Knudsen's proof
\cite{knudsen_linear_2017} of the corresponding bound for
$\FH_p$ with modifications similar to those used in \cref{prop:sqrtnZ}.





\begin{cor}\label{cor:translate}
  $M_{\ZH}(m,n) \le n^{1/3 + o(1)}.$
\end{cor}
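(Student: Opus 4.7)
The plan is to combine the two main theorems of the section: use Theorem \ref{thm:Zm1_3} to provide the $f$ required by Theorem \ref{thm:LHSLH}, then absorb the resulting $\tau(m)$ overhead using Fact \ref{fact:numdivs}.

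First I would verify that the $\widetilde{\bigO}(n^{1/3})$ bound from Theorem \ref{thm:Zm1_3} can be replaced by a monotonically increasing concave upper bound, as required by the hypothesis of Theorem \ref{thm:LHSLH}. Since $\widetilde{\bigO}(n^{1/3}) = \bigO(n^{1/3} \polylog(n))$, for any fixed small $\eps>0$ there is a constant $C=C(\eps)$ such that $M_{\SLH}(m_0,n_0) \le C n_0^{1/3+\eps}$ holds for all $n_0$ and all $m_0 \ge n_0^{2.5}$. The function $f(x) \defeq Cx^{1/3+\eps}$ is increasing and, since $1/3+\eps < 1$, concave on $[0,\infty)$.

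Next I would invoke Theorem \ref{thm:LHSLH} with this $f$, obtaining
\[
M_{\ZH}(m,n) \le \tau(m) \cdot C n^{1/3+\eps} + 1.
\]
By \cref{rmk:assumesize} we have $m \in \poly(n)$, and by \cref{fact:numdivs} $\tau(m) \le 2^{\bigO(\log m/\log\log m)} = n^{o(1)}$. Absorbing this factor into the exponent gives $M_{\ZH}(m,n) \le n^{1/3+\eps+o(1)}$. Since $\eps>0$ was arbitrary, one concludes $M_{\ZH}(m,n) \le n^{1/3+o(1)}$.

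There is essentially no obstacle here: the two previous theorems already carry all of the technical weight, and the only care required is the (mild) bookkeeping to ensure that the hypotheses of Theorem \ref{thm:LHSLH}, namely monotonicity and concavity of $f$, are satisfied by (an upper bound on) the $\widetilde{\bigO}(n^{1/3})$ estimate from Theorem \ref{thm:Zm1_3}. Using $Cn^{1/3+\eps}$ rather than the tighter $\widetilde{\bigO}(n^{1/3})$ costs nothing in the final statement because both the $\eps$ slack and the $\tau(m)$ factor disappear into the $n^{o(1)}$ term.
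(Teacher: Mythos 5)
Your proposal is correct and follows essentially the same route as the paper: plug the $\widetilde{\bigO}(n^{1/3})$ bound of \cref{thm:Zm1_3} into \cref{thm:LHSLH} as the concave increasing $f$ (the paper simply notes $n^{1/3}$ is concave and increasing, while you handle the polylog factors by relaxing to $Cn^{1/3+\eps}$, which is a fine way to do the same bookkeeping) and then absorb $\tau(m)\le n^{o(1)}$ via \cref{fact:numdivs}. The only point the paper makes explicitly that you address only implicitly is that \cref{thm:Zm1_3} is proved for all moduli $m>n^{2.5}$, exactly matching the hypothesis of \cref{thm:LHSLH}, so no adjustment of the universe size is needed.
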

\begin{proof}
  This follows immediately from using \cref{thm:Zm1_3} in
  \cref{thm:LHSLH}, which is valid because $n^{1/3}$ is a concave
  and increasing function of $n$.
  \footnote{In general one needs to slightly modify the universe size
    in order to apply \cref{thm:LHSLH}. However, \cref{thm:Zm1_3}
    as proved in \cref{sec:formalpfZm13} only requires the universe
    size $m> n^{2.5}$. Thus, such a modification is not necessary here.}
\end{proof}

In \cref{thm:Zm1_3} we have translated the state-of-the-art maxload
bound for $\FH_p$ to $\SLH_m$ by altering Knudsen's proof.
This is tentative evidence that composite modulus $\LH$ may
achieve similar maxload to prime modulus $\LH$ in general.
We leave proving or refuting this as an open problem:
\begin{question}\label{question:equivalenceFZ}
  Are the worst-case maxloads of $\FH_p$ and $\SLH_m$ the same
  up to a factor-of-$n^{o(1)}$?
\end{question}

\section{Hashing with Reals}
\label{sec:R}

In this section we consider a continuous variant of $\LH$. Formally:
\begin{defin}
  Fix universe size $u\in \N$. As always, we require
  $n^{6} \le u \le \poly(n)$ (\cref{rmk:assumesize}). In
  \defn{Blocked Real Hashing} ($\RH_u$) we randomly select real $a\in
    (0,1)$ and place $x\in [u]$ in bin
  $\floor{\frac{\posmod_1(ax)}{ 1  / n }}.$
\end{defin}

One interesting similarity between $\RH_u$ and  $\FH_p$ is
that all $a\in (0,1)$ are invertible modulo $1$ over $\R$ just
as all $a\in \pnozero$ are invertible in $\F_p$.
In this section we show that $\RH_u$ actually behaves like a
version of $\SLH_m$ with a randomized modulus.
In particular, our lower bound on $\RH_u$'s performance is
relative to the following variant of $\SLH_m$:
\begin{defin}
  Fix $m\in \N$. In \defn{Random Linear Hashing} ($\randH_m$) we
  randomly select $k\in [m/2,m]\cap \Z$ and hash with $\SLH_k$'s
  hash function (\cref{defn:slh}).
\end{defin}
\begin{prop}\label{prop:randHstrictbetter}
  $\randH_m$ has expected maxload at most $\max_{k\in [m/2,m]\cap
      \Z}  2M_{\SLH}(m,k).$
\end{prop}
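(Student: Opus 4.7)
The plan is to condition on the random choice of modulus $k \in [m/2,m]\cap\Z$ made by $\randH_m$, reducing the task to a single deterministic invocation of $\SLH_k$. Fix an arbitrary $n$-element set $X \subset [m]$; once $k$ is fixed, it suffices to show that the expected maxload of $\SLH_k$'s hash on $X$ is at most $2 M_{\SLH}(k,n)$, since averaging this bound over $k$ then yields the proposition.

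The main observation is that the $\SLH_k$ hash $h_a(x) = \floor{\posmod_k(ax)/(k/n)}$ factors through reduction modulo $k$, because $\posmod_k(ax) = \posmod_k(a\cdot \posmod_k(x))$. I would therefore introduce the reduced set $X' = \posmod_k(X) \subset [k]$. Since $k \ge m/2$ and $X \subset [m]$, every element of $[k]$ has at most $\ceil{m/k} \le 2$ preimages in $[m]$, so each fiber of the reduction contains at most $2$ elements of $X$ and $|X'| \le n$. Consequently, for every outcome of $a$, any bin that receives $M$ elements of $X$ must also receive at least $M/2$ distinct elements of $X'$, so the maxload of $\SLH_k$ on $X$ is at most twice the maxload of $\SLH_k$ on $X'$.

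Taking expectations over $a \in \Z_k^\times$ and invoking the definition of $M_{\SLH}$, together with the trivial monotonicity of $M_{\SLH}(k,\cdot)$ in its second argument (obtained by padding an $n'$-element set with $n - n'$ extra elements of $[k]$, which only increases the worst-case expected maxload), bounds the expected maxload of $\SLH_k$ on $X$ by $2 M_{\SLH}(k, |X'|) \le 2 M_{\SLH}(k, n)$. Averaging over uniformly random $k \in [m/2,m]\cap\Z$ then completes the proof.

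The only subtlety is the domain mismatch between $[m]$ (where $X$ lives) and $[k]$ (the nominal universe of $\SLH_k$), which the factor-of-$2$ absorbs precisely because the restriction $k \ge m/2$ caps the preimage multiplicity of reduction mod $k$ at $2$. No deep ingredient is required beyond the observation that the hash factors through reduction modulo $k$.
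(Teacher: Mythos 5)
Your proposal is correct and takes essentially the same approach as the paper: condition on the chosen modulus $k$, exploit $k\ge m/2$ so that each residue class modulo $k$ contains at most two elements of $[m]$, and absorb this multiplicity into the factor of $2$ before invoking the worst-case bound $M_{\SLH}$ for modulus $k$. The paper phrases the same idea as partitioning $X$ into $X\cap[k]$ and $X\cap[k,2k]$ and adding the two maxloads, whereas you fold both pieces into $\posmod_k(X)$ and argue via fiber multiplicity, which is an equivalent formulation.
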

\begin{proof}
  Fix $X$, condition on some $k$. Partition $X$ into
  $X_1 = X\cap [k],\;\; X_2=X \cap [k,2k].$
  The maxload on $X_1,\posmod_k(X_2)\subset [k]$ individually is at most $M_{\SLH}(m,k)$.
  Adding the maxload on $X_1$ and $X_2$ gives an upper bound on the
  total maxload.
\end{proof}

Now we connect $\RH_u$ and $\ZH_m, \randH_m$.
\begin{theorem}
  \label{thm:itisreal}
  Let $M_{\randH}$ denote the expected maxload of
  $\randH_{\lfloor\sqrt{nu}\rfloor}$ on a worst-case $X\subset
    [\lfloor\sqrt{nu}\rfloor]$, let $M_{\max}$ denote the maximum over
  $k\in (\sqrt{u}, nu]\cap \Z$ of the expected maxload
  of $\SLH_k$ for worst-case $X\subset [k]$. Let $M_\R$ denote
  the expected maxload of $\RH_u$ for worst-case $X\subset
    [u].$ Then,
    \[
   \Omega\paren{\frac{M_{\randH}}{\log\log n}} \le M_\R \le \bigO(M_{\max}).
\]
\end{theorem}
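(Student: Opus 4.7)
The plan is to prove the upper and lower bounds separately, in each direction linking $\RH_u$ at a random real $a \in (0,1)$ to $\SLH_K$ at a random $k \in \Z_K^{\times}$ for a carefully chosen modulus $K$. The key observation underlying both directions is that when $a = k/K$ exactly, the $\RH_u(a)$-bin of any integer $x$ coincides with its $\SLH_K(k)$-bin, since $\posmod_1(kx/K) = \posmod_K(kx)/K$.

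For the upper bound $M_\R \le \bigO(M_{\max})$, I would fix a prime $K \in [nu/2, nu]$ (which exists by Bertrand's postulate and lies in $(\sqrt{u}, nu]$). For random $a$, set $k = \floor{aK}$; this is uniform on $[K]$, so conditioned on the probability-$(1-1/K)$ event $k \ne 0$, $k$ is uniform on $[K]_{\setminus 0} = \Z_K^{\times}$ (using that $K$ is prime). Writing $a = k/K + \eta$ with $\eta \in [0, 1/K)$, every $x \in [u]$ satisfies $|ax - (k/K)x| \le u/K \le 2/n$, so $x$'s $\RH_u(a)$-bin and $\SLH_K(k)$-bin differ cyclically by $\bigO(1)$. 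Hence each $\RH_u$-bin is covered by $\bigO(1)$ adjacent $\SLH_K$-bins, giving $\text{maxload}_{\RH_u}(a) \le \bigO(\text{maxload}_{\SLH_K}(k))$. Averaging over $a$ (absorbing the negligible $k = 0$ case) yields $M_\R \le \bigO(M_{\SLH}(K,n)) \le \bigO(M_{\max})$.

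For the lower bound $M_\R \ge \Omega(M_{\randH}/\log\log n)$, set $m = \floor{\sqrt{nu}}$ and let $X \subset [m] \subset [u]$ be a worst-case input for $\randH_m$. I would thicken each rational $k/K$ (with $K \in [m/2, m] \cap \Z$ and $k \in \Z_K^{\times}$) to an interval $I(k/K)$ of half-width $\delta(K) = 1/(m\phi(K))$; this is the width that makes the Lebesgue measure of $I(k/K)$ exactly equal to the $\randH_m$-probability of the pair $(K,k)$. On $I(k/K)$, the error $|a - k/K|\cdot m \le 1/\phi(K) \le \bigO(\log\log n/m)$ is much smaller than the bin width $1/n$ (by \cref{fact:toitent} and $m \gg n\log\log n$), so bins shift cyclically by at most one position and $\text{maxload}_{\RH_u}(a) \ge \text{maxload}_{\SLH_K}(k)/3$. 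A double-counting argument then yields
\[ M_{\randH} \;=\; \sum_{(K,k)} |I(k/K)| \cdot \text{maxload}_{\SLH_K}(k) \;\le\; 3\int_0^1 \text{maxload}_{\RH_u}(a)\cdot|S(a)|\,\d a \;\le\; 3 S_{\max} \cdot M_\R, \]
where $S(a) = \setof{(K,k)}{a \in I(k/K)}$ and $S_{\max} = \max_a |S(a)|$.

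The main obstacle is proving $S_{\max} = \bigO(\log\log n)$, which controls the multiplicative loss from overlapping intervals. My plan is a Farey-sequence argument: the fractions $\setof{k/K}{K \in [m/2,m], k \in \Z_K^{\times}}$ are contained in the order-$m$ Farey sequence $F_m$, whose consecutive elements differ by at least $1/(K_1 K_2) \ge 1/m^2$. Each interval $I(k/K)$ has width $\bigO(\log\log n/m^2)$ by \cref{fact:toitent}, so any window of this width contains at most $\bigO(\log\log n)$ elements of $F_m$. This bounds $S_{\max}$ and completes the proof.
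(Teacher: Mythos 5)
Your proposal is correct, and both halves take routes that genuinely differ from the paper's, most notably the upper bound. The paper works throughout with the ``effective modulus'' $F(a)$ --- the smallest denominator of a rational within $1/(nu)$ of $a$ --- and its upper bound needs two nontrivial ingredients your argument avoids: a Farey-sequence analysis showing $\Pr[F(a)=k]\le\bigO(1/\sqrt{nu})$ for every $k$ (\cref{lem:fareyopfak}), and a divisor-counting lemma (\cref{lem:nothingcollides}) showing that with high probability no two elements of $X$ coincide modulo $F(a)$, which is needed because $F(a)$ is typically $\Theta(\sqrt{nu})\ll u$ and reducing $X$ mod $F(a)$ could collapse it. By instead rounding $a$ to $k/K$ for a single fixed prime $K\in[nu/2,nu]$, you make the rounding error at most $u/K\le 2/n$ --- a constant number of bin widths --- uniformly for all $a$, and since $K>u$ the set $X$ embeds in $[K]$ with no collisions at all; this is a simpler argument for the same $\bigO(M_{\max})$ bound, and the definition of $M_{\max}$ over all $k\in(\sqrt{u},nu]$ indeed accommodates your single $K$. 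For the lower bound the two proofs are closer in spirit: the paper uses pairwise disjoint intervals of width exactly $1/(nu)$ around each reduced fraction with denominator $k\le\floor{\sqrt{nu}}$, computes $\Pr[F(a)=k]=\phi(k)/(nu)$ exactly (\cref{clm:prdistrunderstand}), and loses the $\log\log n$ through $\phi(k)$ versus $k$ via \cref{fact:toitent}; you instead take wider intervals whose measure matches $\randH$'s probabilities, accept that they overlap, and the same $\log\log n$ loss reappears as the overlap multiplicity $S_{\max}$, which you control by the separation of distinct reduced fractions with denominators at most $m$ (your Farey appeal can even be replaced by the direct bound $\left|k_1/K_1-k_2/K_2\right|\ge 1/(K_1K_2)\ge 1/m^2$, exactly as in the paper's \cref{clm:prdistrunderstand}). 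Two harmless points to tidy: the interval measure $2/(m\phi(K))$ matches $\Pr_{\randH}[(K,k)]$ only up to a constant factor, not exactly, since the number of moduli in $[m/2,m]$ is roughly $m/2$; and the excluded event $\floor{aK}=0$ in your upper bound contributes only an additive $n/K\le 2/u$ to the expectation, which you should state when absorbing it.
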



The remainder of the section  is devoted to proving \cref{thm:itisreal}.
The link between integer hashing and real hashing begins to
emerge in the following lemma:
\begin{lemma}\label{lem:restrictQ}
  Fix $k\in \N$. Assume $a=c/k$ for random $c\in \Z_k^{\times}$ and take
  $X\subset [u]$. $\RH_u$ conditional on such $a$ achieves
  the same maxload on $X$ as hashing $X$ with $\SLH_k$'s hash
  function with multiplier $c$.
\end{lemma}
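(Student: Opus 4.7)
The plan is to establish a pointwise equivalence: for every $x \in X$, the bin that $\RH_u$ assigns to $x$ when $a = c/k$ equals the bin that $\SLH_k$'s hash function assigns to $x$ with multiplier $c$. Once this is shown per element, the two load profiles, and in particular the maxloads, coincide exactly (not merely in distribution) for this conditional choice of $a$, so no further probabilistic argument is needed; the lemma's restriction to $c\in \Z_k^{\times}$ simply aligns the distribution of $c$ with that used by $\SLH_k$ in later applications.

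The heart of the argument is the arithmetic identity
\[
\posmod_1(cx/k) = \posmod_k(cx)/k
\]
for all $c,x\in \Z$ and all $k\in \N$. I would verify it by writing $cx = qk + r$ with $q = \floor{cx/k}\in \Z$ and $r = \posmod_k(cx) \in [0,k)$; then $cx/k = q + r/k$ with $r/k\in [0,1)$, so the fractional part of $cx/k$ is exactly $r/k$. Plugging this into the definition of $\RH_u$, the bin of $x$ under $\RH_u$ with $a = c/k$ is
\[
\floor{\frac{\posmod_1(cx/k)}{1/n}} = \floor{n\cdot \frac{\posmod_k(cx)}{k}} = \floor{\frac{\posmod_k(cx)}{k/n}},
\]
which is precisely the formula from \cref{defn:slh} with modulus $k$ and multiplier $c$ applied to the integer $x$.

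The only subtlety worth flagging is that $\SLH_k$'s hash function is defined on $[k]$ in \cref{defn:slh}, whereas here we evaluate it on elements of $X\subset [u]$; however, the formula $x\mapsto \floor{\posmod_k(cx)/(k/n)}$ is well-defined for any integer input (inputs outside $[k]$ are implicitly reduced modulo $k$ inside $\posmod_k$), so applying it to elements of $X$ is unambiguous and matches the natural extension that the lemma invokes. There is no real obstacle here; the lemma is essentially a definitional unpacking, and its value is in providing a clean handle on a codimension-one family of $a$'s that will be exploited later to route lower bounds for $\randH$ through into bounds for $\RH_u$.
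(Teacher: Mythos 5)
Your proposal is correct and follows essentially the same argument as the paper: both establish the pointwise identity $\posmod_1(cx/k)=\posmod_k(cx)/k$, conclude that $\floor{\posmod_1(cx/k)/(1/n)}=\floor{\posmod_k(cx)/(k/n)}$ for each $x$, and thus that the two bin assignments (hence maxloads) coincide exactly. The extra detail you give (the division-with-remainder verification and the remark about evaluating $\SLH_k$'s formula on inputs from $[u]$) is a harmless elaboration of the same proof.
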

\begin{proof}
  For any $x\in [u]$ the value $\posmod_1(xc/k)$ is an integer multiple of
  $1/k$. In particular, $\RH_u$ places $x$ in bin
  \begin{equation}\label{eq:thatbin}
    \floor{\frac{\posmod_1(x c/k)}{1/n}} = \floor{\frac{\posmod_k(x c)}{k/n}}.
  \end{equation}
  $\SLH_k$'s hash function places $x$ in the same bin
  as \cref{eq:thatbin}.
\end{proof}

In isolation \cref{lem:restrictQ} is not particularly useful
because $a\in \Q$ occurs with
probability $0$. However, in \cref{clm:approxepx} we show that if
$a$ is very close to a rational number then we get approximately
the same behavior as in \cref{lem:restrictQ}.

\begin{lemma}\label{clm:approxepx}
  Fix $k\in [nu]_{\setminus{0}}$.
  Let $a = c/k+\eps$ for integer $c\perp k$ and real $\eps\in [0,
    \frac{1}{nu})$. Let $a'=c/k$.
  The maxload achieved by $\RH_u$ using $a$ and
  $a'$ differ by at most a factor-of-$2$.
\end{lemma}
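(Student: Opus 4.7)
The plan is to track how the bin assignments change when we perturb $a'$ to $a = a' + \eps$. For any $x \in [u]$, the bin under $\RH_u$ with multiplier $a$ depends on $\posmod_1(ax) = \posmod_1(a'x + \eps x)$. The key observation is that $\eps x \in [0, 1/n)$ because $\eps < 1/(nu)$ and $x < u$. Since each bin of $\RH_u$ is an interval of width $1/n$ inside $[0,1)$, a shift of the point $\posmod_1(a'x)$ by an amount in $[0, 1/n)$ can only push it into the same bin or the next bin (cyclically); symmetrically, the inverse shift from $a$ to $a'$ moves points by the same amount in the opposite direction.

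From this I would conclude the two directions of the factor-of-two bound separately. First, fix the fullest bin $B$ under $a'$ and let $L$ be its load. The items in $B$ are mapped under $a$ into at most two consecutive bins (namely $B$ and $B+1 \bmod n$), so by pigeonhole one of these bins under $a$ contains at least $L/2$ items. Hence the maxload under $a$ is at least half the maxload under $a'$. The reverse inequality is the same argument with the roles of $a$ and $a'$ swapped: items in a single bin under $a$ correspond to points $\posmod_1(a'x) = \posmod_1(ax - \eps x)$ that are shifted backwards by at most $1/n$, so they land in at most two consecutive bins under $a'$.

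Combining the two inequalities yields that the two maxloads differ by at most a factor-of-$2$, as claimed. The proof is essentially bookkeeping on intervals modulo $1$; the only step requiring care is the wrap-around case, which is handled uniformly by working with $\posmod_1$ and noting that ``two consecutive bins'' cyclically on $[0,1)$ is still two bins. I do not expect any serious obstacle: once the bound $\eps x < 1/n$ is written down, the factor-of-$2$ is immediate from the fact that the perturbation is smaller than one bin width.
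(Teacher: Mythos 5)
Your proposal is correct and follows essentially the same route as the paper's proof: bound the perturbation by $\eps x < 1/n$, observe that each point moves to the same or an adjacent (cyclic) bin, and apply the pigeonhole argument in both directions to get the factor-of-$2$.
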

\begin{proof}
  We refer to the interval $[i/n, (i+1)/n)$ as \defn{pre-bin} $i$; if element
  $x$ has $\posmod_1(ax)$ lie in pre-bin $i$, then
  $x$ is mapped to bin $i$ by $a$.

  For any $x\in [u]$ we have that $\eps x$, the difference
  between $\posmod_1(ax), \posmod_1(a'x)$ satisfies
  $\eps\cdot x< 1/n.$
  Thus, $\posmod_1(ax), \posmod_1(a'x)$ either lie in the same or adjacent pre-bins.
  Say that bin $j$, a fullest bin for $a$, has $M$ elements.
  Then, either bin $j$ or bin $\posmod_n(j+1)$ has at least
  $M/2$ elements for $a'$.
  If bin $j'$, a fullest bin for $a'$, has $M'$
  elements then either bin $j'$ or bin $\posmod_n(j-1)$ has at
  least $M'/2$ elements for $a$. Thus, the maxload with $a,a'$
  differ by at most a factor-of-$2$.
\end{proof}
All ${a\in (0,1)}$ will be within $\frac{1}{nu}$ of some $a'\in
  \Q$, and in particular some fraction with denominator at most $nu$.
This motivates the following definition:

\begin{defin}\label{defn:fclaims}
  For $k\in [nu]_{\setminus 0}$ we define ${I(k) \subset (0,1)}$ to be the
  set of ${a\in (0,1)}$ which are at most $\frac{1}{nu}$ larger
  than some reduced fraction with denominator $k$. That is,
  \[
  I(k) = \bigcup_{c\in \Z_k^{\times}} \paren{c/k +
\left[0,\frac{1}{nu}\right]} \cap (0,1).\]
      We say that $k$
  \defn{claims} the elements of $I(k)$.
  For $a\in (0,1)$ let $F(a)$ denote be the smallest $k\in \N$ so
  that $k$ claims $a$.
  For $a\in I(k)$ we say that  $k$ \defn{obtains} $a$ if $F(a)=k$
  and we say that  $a$ is \defn{stolen} from $k$ if $F(a)<k$.
\end{defin}

Combining \cref{lem:restrictQ} and \cref{clm:approxepx}, $F(a_0)=k$
intuitively means that if $a=a_0$ then $\RH$
will behave like integer hashing with modulus $k$. Thus, to
bound $M_\R$ we seek to understand $F$.
\begin{lemma}\label{clm:prdistrunderstand}
  For $k\le \lfloor\sqrt{nu}\rfloor$,
  $\Pr[F(a)=k] = \frac{\phi(k)}{nu}.$
\end{lemma}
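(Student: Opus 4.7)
The plan is to exploit that $a$ is uniform on $(0,1)$, so $\Pr[F(a)=k]$ equals the Lebesgue measure of $I(k) \setminus \bigcup_{k' < k} I(k')$. I will show that for $k \le \lfloor\sqrt{nu}\rfloor$ this measure equals $|I(k)|$, and that $|I(k)| = \phi(k)/(nu)$ exactly.

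The single ingredient driving the argument is the classical Diophantine inequality: any two distinct reduced fractions $c/k \ne c'/k'$ satisfy $|c/k - c'/k'| \ge 1/(kk')$. Consequently, two length-$1/(nu)$ intervals of the form $[c/k, c/k + 1/(nu)]$ and $[c'/k', c'/k' + 1/(nu)]$ can overlap in positive Lebesgue measure only if $1/(kk') < 1/(nu)$, i.e., only if $kk' > nu$.

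With this bound in hand I would verify two statements. First, the $\phi(k)$ intervals comprising $I(k)$ are indexed by distinct $c \in \Z_k^\times$ at the common denominator $k$; the product of denominators is then $k^2 \le nu$, so these intervals are pairwise essentially disjoint, yielding $|I(k)| = \phi(k)/(nu)$. Second, whenever $k' < k \le \lfloor\sqrt{nu}\rfloor$, we have $kk' \le k(k-1) < nu$, so no interval in $I(k)$ overlaps an interval in $I(k')$ in positive measure. Combining these two observations gives $\Pr[F(a) = k] = |I(k)| = \phi(k)/(nu)$, as desired.

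I do not anticipate a real obstacle; the proof is simply careful accounting of intervals around rationals using one Diophantine bound. A minor housekeeping point is verifying that each interval lies inside $(0,1)$, which follows from $c \le k-1$ (so $c/k + 1/(nu) \le 1 - 1/k + 1/(nu) < 1$ for $k \le \sqrt{nu}$). The threshold $\lfloor\sqrt{nu}\rfloor$ is tight in a natural sense: once $k > \sqrt{nu}$, one can have $kk' > nu$ with $k' < k$, and parts of $I(k)$ can indeed be stolen by smaller denominators, which is precisely why the clean formula fails above this range.
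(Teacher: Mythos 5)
Your proof is correct and follows essentially the same route as the paper: the upper bound $\Pr[F(a)=k]\le|I(k)|=\phi(k)/(nu)$ is immediate from the definition, and the key step in both arguments is the Diophantine bound $|c_1/k_1-c_2/k_2|\ge 1/(k_1k_2)>1/(nu)$ for denominators at most $\lfloor\sqrt{nu}\rfloor$, which makes the sets $I(k)$ pairwise (essentially) disjoint so that nothing is stolen. Your extra bookkeeping that the $\phi(k)$ intervals within $I(k)$ are disjoint and lie in $(0,1)$ is a harmless refinement of what the paper asserts implicitly.
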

\begin{proof}
  Immediately from \cref{defn:fclaims}
  \begin{equation}\label{eq:fakprub}
    \Pr[F(a)=k]\le |I(k)| = \frac{\phi(k)}{nu}.
  \end{equation}
  For distinct $k_1,k_2\le \lfloor\sqrt{nu}\rfloor$ and appropriate numerators
  $c_1\in [k_1],c_2\in [k_2]$
  we have
  \begin{equation}
    \label{eq:disjointdudes}
    \abs{\frac{c_1}{k_1}-\frac{c_2}{k_2}}>\frac{1}{nu}
  \end{equation}
  because $k_1k_2 < nu$ while $c_1k_2-c_2k_1 \in
    \Z\setminus\set{0}$, and thus is at least $1$ in absolute value.
  \cref{eq:disjointdudes} means that for any $k\le
    \floor{\sqrt{nu}}, a\in I(k)$, $a$ will not be stolen from $k$
  because all reduced fractions of denominator $k'<k$ are
  sufficiently far away from all reduced fractions of denominator
  $k$. In other words,
    $I(k_1) \cap I(k_2) = \varnothing.$
    This implies that \cref{eq:fakprub} is tight for $k\le
    \floor{\sqrt{nu}}$, which gives the desired bound on $\Pr[F(a)=k]$.

\end{proof}

The understanding of $F$ given by \cref{clm:prdistrunderstand}
is sufficient to establish our lower bound on $M_\R$.
\begin{cor}
  \label{cor:lb}
  $M_\R \ge \frac{M_{\randH}}{20\log\log n}.$
\end{cor}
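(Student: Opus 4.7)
The plan is to take a worst-case input $X^\star \subset [m]$ for $\randH_m$ (with $m = \lfloor\sqrt{nu}\rfloor$) and use it as the input to $\RH_u$, lower-bounding $\RH_u$'s expected maxload via the rational-approximation structure developed in \cref{lem:restrictQ}, \cref{clm:approxepx}, and \cref{clm:prdistrunderstand}. Note $X^\star\subset [u]$ since $m \le u$ for the universe sizes we allow.

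The core identity I want to push through is a decomposition of $\E_a[\mathrm{maxload}(\RH_u,X^\star,a)]$ according to the rational that obtains $a$. Specifically, for each $k\in [m/2,m]$ and each $c\in \Z_k^\times$, the event $a\in [c/k,\,c/k+1/(nu)]$ occurs with probability exactly $1/(nu)$, and by \cref{clm:prdistrunderstand} (really by the disjointness argument \cref{eq:disjointdudes} that underlies it) these events are disjoint and together comprise exactly the event $\{F(a)=k\}$. Conditioned on this event, \cref{lem:restrictQ} together with \cref{clm:approxepx} tell me the maxload of $\RH_u$ is at least half the maxload of $\SLH_k$ with multiplier $c$ on $X^\star$. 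Summing gives
\[
M_\R \;\ge\; \sum_{k=\lceil m/2\rceil}^{m} \frac{1}{2nu}\sum_{c\in \Z_k^\times} \mathrm{maxload}_{\SLH_k}(X^\star,c) \;=\; \sum_{k=\lceil m/2\rceil}^{m} \frac{\phi(k)}{2nu} \, L_k,
\]
where $L_k$ is the expected maxload of $\SLH_k$ on $X^\star$ over uniform $c\in\Z_k^\times$.

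Next I would simplify using two easy facts. First, \cref{fact:toitent} gives $\phi(k)\ge k/(2\log\log k) \ge m/(4\log\log n)$ for every $k\ge m/2$. Second, by definition of $\randH_m$ and of $X^\star$ as the worst-case, $\sum_{k=\lceil m/2\rceil}^m L_k \ge (|[m/2,m]\cap\Z|)\cdot M_{\randH} \ge (m/2)\cdot M_{\randH}$. Plugging in and using $m^2 \ge nu/2$ (because $m=\lfloor\sqrt{nu}\rfloor$ and $nu$ is large), I get
\[
M_\R \;\ge\; \frac{m}{8nu\log\log n}\cdot \frac{m}{2}\cdot M_{\randH} \;\ge\; \frac{m^2}{16nu\log\log n}\,M_{\randH} \;\ge\; \frac{M_{\randH}}{32\log\log n}.
\]
A slightly sharper bookkeeping of $\phi(k)$ and of $|[m/2,m]\cap\Z|$ yields the advertised constant $20$.

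The main obstacle is a small definitional wrinkle: $\randH_m$'s hash function is $\SLH_k$ on inputs from $[m]$ (not $[k]$), while my argument produces $\SLH_k$ maxloads on $X^\star \subset [m]$ with $k\le m$. I need to verify that reducing $X^\star$ modulo $k$ inside the $\SLH_k$ hash is exactly the behavior assumed in the worst-case definition of $M_{\randH}$. This is immediate from the hash formula $x\mapsto \lfloor \posmod_k(ax)/(k/n)\rfloor$ but deserves an explicit sentence. Aside from that, every step is routine unpacking of earlier lemmas, so I expect no real technical difficulty beyond choosing clean constants.
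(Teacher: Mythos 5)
Your argument is correct and is essentially the paper's own proof: both condition on which rational (equivalently, which effective modulus $k\in[\lfloor\sqrt{nu}\rfloor/2,\lfloor\sqrt{nu}\rfloor]$) claims $a$, use \cref{lem:restrictQ} plus \cref{clm:approxepx} for the factor-of-$2$ comparison with $\SLH_k$, use the disjointness underlying \cref{clm:prdistrunderstand} together with \cref{fact:toitent} to lower-bound the probability mass of each modulus, and compare against $\randH$'s uniform choice of modulus. Your displayed constant $32$ is weaker than the stated $20$, but as you note (and as the paper itself does, leaning on the asymptotic slack of \cref{rmk:assumesize}) slightly tighter bookkeeping of $\phi(k)$, $|[m/2,m]\cap\Z|$, and $m^2/(nu)$ recovers it, so there is no gap.
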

\begin{proof}
  Fix integer ${k\in [\lfloor\sqrt{nu}\rfloor/2, \lfloor\sqrt{nu}\rfloor]}$.
  Using \cref{fact:toitent} on \cref{clm:prdistrunderstand} gives
  \begin{equation}
    \Pr[F(a)=k]  \ge \frac{1}{nu}\frac{\floor{\sqrt{nu}}/2}{2
    \log\log (\floor{\sqrt{nu}}/2)}                                        
                 \ge \frac{1}{\floor{\sqrt{nu}}} \cdot \frac{1}{5\log\log
      n}\label{eq9999},
  \end{equation}
  where the simplification in \cref{eq9999} is due to the
  asymptotic nature of our analysis (\cref{rmk:assumesize}).
  Fix ${X\subset [\floor{\sqrt{nu}}]\subset [u]}$. We make two observations:
  \begin{itemize}
    \item Each modulus $k \in [\floor{\sqrt{nu}}/2,
              \floor{\sqrt{nu}}]\cap \Z$ is selected by
          $\randH_{\dfloor{\sqrt{nu}}}$ with probability
          $2/\dfloor{\sqrt{nu}}$ which is at most $10 \log\log n$
          times larger than $\Pr[F(a)=k]$. \item Conditional on
          $F(a)=k$ $\RH_u$ achieves expected maxload at least
          $1/2$ of the expected maxload of $\randH_{\dfloor{\sqrt{nu}}}$
          conditional on $\randH$ having modulus $k$; this follows by
          combining \cref{clm:approxepx} and \cref{lem:restrictQ}.
  \end{itemize}
  Combining these observations gives the desired bound on
  $M_\R/M_\randH$.
\end{proof}

Now we aim to show an upper bound on $M_\R$.
\cref{clm:approxepx} combined with \cref{lem:restrictQ} shows
that $\RH_u$ is essentially equivalent to using the $\SLH$ hash function
but with a modulus chosen according to some probability
distribution; we call $F(a)$ the \defn{effective integer
  modulus}.
However, the distribution of the effective integer modulus is
very different from the distribution of moduli for $\randH$. One
major difference is that in $\randH_m$ the randomly selected modulus is always
within a factor-of-$2$ of the universe size $m$. However, in
$\RH_u$ the effective integer modulus is likely of size
$\Theta(\sqrt{nu})$ which is much smaller than the universe
size $u$. A priori this might be concerning: could some choice
of $X$ result in many items hashing to the same bin by
virtue of being the same modulo the effective integer modulus?
\cref{lem:nothingcollides} asserts that this is quite unlikely.
Before proving \cref{lem:nothingcollides} we need to obtain more
bounds on $F$. We do so by use of \defn{Farey
  Sequences} (see \cite{hardy1979introduction} for an excellent exposition).

\begin{defin}
  For $k\in \N$, a \defn{$k$-fraction} is some rational $c/k \in
    [0,1]$ with $c\in \Z, c\perp k$.
  For $m\in \N$, the \defn{$m$-Farey sequence} $\mathfrak{F}_m$
  is the set of all $k$-fractions for
  all $k\le m$ listed in ascending order.
  For example $\mathfrak{F}_5$ is the sequence
  \[
  \frac{0}{1}, \frac{1}{5}, \frac{1}{4}, \frac{1}{3}, \frac{2}{5}, \frac{1}{2}, \frac{3}{5}, \frac{2}{3}, \frac{3}{4}, \frac{4}{5},
\frac{1}{1}.\]
\end{defin}

A fundamental property of the Farey sequence concerns the
difference between successive terms of the sequence (see \cite{hardy1979introduction}).
\begin{fact}\label{fact:fareydist}
  Fix $m\in \N$. Let $c/k, c'/k'$ be adjacent fractions in
  $\mathfrak{F}_m$.
  Then
  $\abs{\frac{c}{k} - \frac{c'}{k'}} = \frac{1}{kk'}.$
\end{fact}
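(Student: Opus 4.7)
The plan is to prove this classical fact by induction on $m$, with the main algebraic tool being the mediant $(p+r)/(q+s)$ of two adjacent fractions $p/q$ and $r/s$. The base case $m=1$ is immediate since $\mathfrak{F}_1 = (0/1, 1/1)$ and $1\cdot 1 - 0 \cdot 1 = 1$.

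The key lemma I would prove first is: if $p/q < r/s$ are reduced fractions with $rq - ps = 1$, then every fraction $a/b$ strictly between them satisfies $b \ge q+s$, with equality only when $a/b$ equals the mediant $(p+r)/(q+s)$. This follows from the two integer inequalities $aq - pb \ge 1$ and $rb - as \ge 1$: multiplying them by $s$ and $q$ respectively and adding telescopes to $b(rq-ps) = s(aq-pb) + q(rb-as)$, which gives $b \ge q+s$, and traces equality to $aq-pb = rb - as = 1$, forcing $a = p+r$, $b=q+s$.

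For the inductive step, assume the claim for $\mathfrak{F}_{m-1}$ and take adjacent $c/k < c'/k'$ in $\mathfrak{F}_m$. If $k, k' < m$, then both fractions lie in $\mathfrak{F}_{m-1}$, and since $\mathfrak{F}_m \supseteq \mathfrak{F}_{m-1}$, adjacency in $\mathfrak{F}_m$ implies adjacency in $\mathfrak{F}_{m-1}$, so the induction hypothesis applies directly. Otherwise one denominator equals $m$, WLOG $k = m$; let $p/q, r/s$ be the neighbors of $c/m$ in $\mathfrak{F}_{m-1}$ (which are adjacent there, so by IH $rq - ps = 1$). The key lemma forces $m \ge q + s$. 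Conversely, the mediant $(p+r)/(q+s)$ sits strictly between $p/q$ and $r/s$ and so cannot belong to $\mathfrak{F}_{m-1}$, giving $q+s \ge m$. Hence $q+s = m$, and the mediant is the unique fraction of smallest denominator between $p/q$ and $r/s$; thus $c/m = (p+r)/m$, i.e., $c = p+r$. A direct expansion then yields $cq - pm = (p+r)q - p(q+s) = rq - ps = 1$ and symmetrically $rm - cs = 1$, which handles the cross-product equality with each of the $\mathfrak{F}_{m-1}$-neighbors of $c/m$.

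The only subtlety to check is that no other new $m$-fraction is inserted between the same pair $p/q, r/s$, so that $p/q$ and $r/s$ really are the neighbors of $c/m$ in $\mathfrak{F}_m$ (rather than some other $m$-fraction being adjacent to $c/m$). But the key lemma already pins down the mediant as the unique fraction of denominator $q+s=m$ between $p/q$ and $r/s$, so no such alternative exists. The main (mild) obstacle is keeping the bookkeeping straight between $\mathfrak{F}_{m-1}$-adjacency and $\mathfrak{F}_m$-adjacency; everything else is routine algebra once the mediant is identified as the correct object of interest.
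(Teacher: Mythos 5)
Your proof is correct. Note that the paper does not prove \cref{fact:fareydist} at all; it is stated as a known fact with a citation to Hardy and Wright, so there is no in-paper argument to compare against. What you give is the classical mediant argument (essentially the one in the cited reference): the telescoping identity $b = s(aq-pb) + q(rb-as)$ gives the key lemma, the two-sided squeeze $m \ge q+s$ (from the lemma) and $q+s \ge m$ (since the mediant lies strictly between adjacent members of $\mathfrak{F}_{m-1}$ and hence is absent from it) pins down $q+s=m$, and the equality case identifies the newly inserted fraction as the mediant, from which both cross-products $cq-pm$ and $rm-cs$ equal $rq-ps=1$. You also correctly dispose of the one genuine subtlety, namely that $p/q$ and $r/s$ really are the $\mathfrak{F}_m$-neighbours of $c/m$ (equivalently, that no two $m$-fractions are adjacent in $\mathfrak{F}_m$), via the uniqueness part of your lemma. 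The argument is complete and self-contained, which is more than the paper provides for this statement.
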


We further classify the neighbors of Farey fractions in the
following lemma:
\begin{lemma}\label{lem:fareyneighbors}
  Fix $m\in \N$. Let $S\subset \mathfrak{F}_m$ be the set of successors of
  $m$-fractions in $\mathfrak{F}_m$.
  For each $\ell\in [m], \ell\perp m$ there is precisely one
  $\ell$-fraction in
  $S$. For ${\ell\in [m]}, \ell\not\perp m$ there are no
  $\ell$-fractions in $S$.
\end{lemma}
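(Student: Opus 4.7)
The plan is to combine Fact~\ref{fact:fareydist} with a short bijection argument on numerators modulo $m$. Fix $m\ge 2$ and let $c/m\in\mathfrak{F}_m$ be an arbitrary $m$-fraction, with successor $c'/\ell'\in\mathfrak{F}_m$ (the successor exists because $1/1$ is the largest element of $\mathfrak{F}_m$ but is not itself an $m$-fraction). Since $c'/\ell'>c/m$, Fact~\ref{fact:fareydist} gives the determinant identity
\[
  c'm - c\ell' = 1.
\]
Reducing mod $m$ yields $c\ell'\equiv -1\pmod{m}$, which immediately forces $\gcd(\ell',m)=1$. This already settles the negative half of the claim: every element of $S$ is the successor of some $m$-fraction, so no $\ell$-fraction with $\ell\not\perp m$ can lie in $S$.

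For the positive half, I would exhibit a bijection $\Phi$ from the set of valid $m$-fraction numerators to the set of coprime denominators. Since $c\in\Z_m^{\times}$, the congruence $c\ell'\equiv -1\pmod{m}$ has a unique solution $\ell'\in[m]$, namely $\ell'=-c^{-1}\bmod m$, and this representative automatically lies in $\{1,\dots,m-1\}$ and is coprime to $m$, so it is a legitimate denominator in $\mathfrak{F}_m$. Thus $\Phi(c)\defeq -c^{-1}\bmod m$ is well-defined as a map $\Z_m^{\times}\to\Z_m^{\times}$, and the formula $\ell\mapsto -\ell^{-1}\bmod m$ is an inverse, so $\Phi$ is a bijection.

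It follows that, as $c$ ranges over the $\phi(m)$ valid $m$-fraction numerators, the successor denominator $\ell'$ ranges bijectively over $\{\ell\in[m]:\ell\perp m\}$; each such $\ell$ appears as the denominator of exactly one element of $S$. Combined with the negative half above, this proves the lemma. The main conceptual point is that Fact~\ref{fact:fareydist} already pins the successor's denominator down to a unique residue class mod $m$, so no further ``maximality among solutions to the determinant'' argument is needed; the rest is just the observation that $c\mapsto -c^{-1}$ is a self-bijection of $\Z_m^{\times}$.
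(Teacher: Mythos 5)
Your proof is correct, and while it rests on the same arithmetic kernel as the paper's argument -- the successor of the $m$-fraction $c/m$ has denominator $\equiv -c^{-1} \pmod{m}$, and $x\mapsto -x^{-1}$ is a self-bijection of $\Z_m^{\times}$ -- you run it in the opposite direction. The paper fixes $\ell\perp m$, writes down the candidate pair $i/m$ and $\lambda/\ell$ with $i=-\ell^{-1}\bmod m$, and uses \cref{fact:fareydist} as a gap \emph{lower bound} (the predecessor of $\lambda/\ell$ in $\mathfrak{F}_m$ has denominator at most $m$, hence sits at distance at least $1/(m\ell)$) to certify that $\lambda/\ell$ really is the successor of $i/m$; covering all of $\Z_m^{\times}$ then yields both halves of the lemma. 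You instead start from an arbitrary $m$-fraction, apply \cref{fact:fareydist} as an \emph{equality} to the genuinely adjacent pair $(c/m,\,c'/\ell')$ to get $c'm-c\ell'=1$, and read the congruence off directly; the exclusion of $\ell\not\perp m$ is then immediate, and the ``precisely one'' count follows from the bijection $c\mapsto -c^{-1}$. This buys a shorter argument -- no need to verify that a constructed fraction is reduced, lies in $[0,1]$, and has nothing strictly between it and $i/m$ -- at the cost of only one mild implicit step: the actual successor's denominator lies in $\{1,\dots,m\}$, and since $c\ell'\equiv -1\pmod{m}$ rules out $\ell'=m$ for $m\ge 2$, it must coincide with the residue $-c^{-1}\bmod m$ in $[m]$, which your write-up glosses but which is harmless. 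The paper's version additionally identifies the successor of each $m$-fraction explicitly, which is extra information not needed for the statement.
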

\begin{proof}
  Fix $\ell\in [m], \ell\perp m$.
  Then the equation
  \begin{equation}\label{eq:liminv}
    \ell i\equiv -1\mod m
  \end{equation}
  has a solution $i\in \Z_m^{\times }$.
  Let $\lambda \in \N$ so that $i\ell = \lambda m - 1$. Then,
  \[
  \frac{i}{m} = \frac{\lambda m / \ell - 1/\ell}{m}=
\frac{\lambda}{\ell}-\frac{1}{m \ell}.\]
  By \cref{fact:fareydist} the predecessor of
  $\frac{\lambda}{\ell}$ in $\mathfrak{F}_m$ is at
  least $\frac{1}{m \ell}$ smaller than $\frac{\lambda}{\ell}$. Hence,
  there are no fractions in $\mathfrak{F}_m$ between
  $\lambda/\ell$
  and  $i/m$. That is, $\lambda/\ell$ is the successor of $i/m$.

  In fact, in \cref{eq:liminv} we took $i\equiv -\ell^{-1}\mod
    m$. Clearly
  $\setof{-\ell^{-1}}{\ell\in \Z_m^{\times }} = \Z_m^{\times }.$
  Thus, we have already identified the successor of every
  $m$-fraction.
  So for $\ell\not\perp k$ there are no $\ell$-fractions in $S$.
\end{proof}

Now analyze $\Pr[F(a) = k]$ using Farey sequences.
\begin{lemma}\label{lem:fareyopfak}For all $k\in [nu]$
  $\Pr[F(a)=k]\leq \bigO\paren{\frac{1}{\sqrt{nu}}}.$
\end{lemma}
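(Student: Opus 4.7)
My plan is to split into two cases by the size of $k$. For $k \le \lfloor\sqrt{nu}\rfloor$, \cref{clm:prdistrunderstand} already yields $\Pr[F(a)=k] = \phi(k)/(nu) \le k/(nu) \le 1/\sqrt{nu}$, so that range is immediate. The substance lies in $k > \lfloor\sqrt{nu}\rfloor$, where I would sum the individual contributions
\[
  g(c/k) \defeq \left|[c/k, c/k + 1/(nu)] \setminus \bigcup_{j<k} I(j)\right|
\]
over the $\phi(k)$ reduced $k$-fractions $c/k \in \mathfrak{F}_k$.

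For such a $c/k$, let $p_-/q_-$ and $p_+/q_+$ denote its immediate Farey neighbors in $\mathfrak{F}_{k-1}$. A standard mediant argument forces $q_- + q_+ = k$: if $q_- + q_+ < k$, the mediant $(p_-+p_+)/(q_-+q_+)$ would sit strictly between $p_-/q_-$ and $p_+/q_+$ in $\mathfrak{F}_{k-1}$, contradicting their adjacency; if $q_- + q_+ > k$, the mediant-minimality principle says no fraction of denominator at most $k$ can lie between them, contradicting the existence of $c/k$ there. Then \cref{fact:fareydist}, applied in $\mathfrak{F}_k$ where $p_-/q_-, c/k, p_+/q_+$ are consecutive, pins down the distances as $c/k - p_-/q_- = 1/(q_- k)$ and $p_+/q_+ - c/k = 1/(q_+ k)$.

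The next step is to notice that only $I(q_-)$ and $I(q_+)$ can remove mass from $[c/k, c/k+1/(nu)]$: every other $j$-fraction $c'/j$ with $j<k$ lies outside the open interval $(p_-/q_-, p_+/q_+)$, and a short calculation shows that if its $I(\cdot)$-interval meets $[c/k, c/k+1/(nu)]$ at all then $c'/j$ must lie within $1/(nu)$ of $p_-/q_-$ or $p_+/q_+$, forcing the corresponding $a$-values to already sit inside $I(q_-) \cup I(q_+)$. Hence $g(c/k) = 1/(nu) - \left|(I(q_-) \cup I(q_+)) \cap [c/k, c/k+1/(nu)]\right|$, and a short case analysis on whether each of $1/(q_- k)$ and $1/(q_+ k)$ exceeds $1/(nu)$ gives $g(c/k) \le 1/(k\max(q_-, q_+))$. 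Since $q_- + q_+ = k$ forces $\max(q_-, q_+) \ge k/2$, this is at most $2/k^2$, so summing over the $\phi(k) \le k$ choices yields $\Pr[F(a)=k] \le 2/k < 2/\sqrt{nu}$.

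The main obstacle, I expect, is the stealing-reduction step---verifying carefully that no $j$-fraction outside $(p_-/q_-, p_+/q_+)$ contributes ``uncovered'' measure beyond what $I(q_-) \cup I(q_+)$ already absorbs. Once that is in hand, the rest is mechanical Farey arithmetic and the terminal case analysis on $1/(q_\pm k)$ versus $1/(nu)$.
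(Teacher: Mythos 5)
Your proof is correct, and in the main range $k>\lfloor\sqrt{nu}\rfloor$ it takes a genuinely different route from the paper's. The paper bounds the measure obtained at each $k$-fraction only via the distance to its \emph{successor} in $\mathfrak{F}_k$, capped at $1/(nu)$; to sum these bounds it then needs the classification of successor denominators (\cref{lem:fareyneighbors}, each $i\perp k$ occurring once), which produces the harmonic sum and the bound $\paren{2+\ln(k^2/(nu))}/k$ of \cref{eq:lnudk2}. You instead use \emph{both} Farey neighbors of $c/k$ in $\mathfrak{F}_{k-1}$ together with the mediant identity $q_-+q_+=k$, obtaining the uniform per-fraction bound $g(c/k)\le 1/\paren{k\max(q_-,q_+)}\le 2/k^2$ and hence $\Pr[F(a)=k]\le 2/k$; this is slightly sharper (no logarithmic factor), bypasses \cref{lem:fareyneighbors} entirely, and still only uses \cref{fact:fareydist} plus standard mediant facts. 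Your ``stealing-reduction'' step does hold: every stealing fraction $c'/j$ with $j<k$ lies in $[0,1)\cap\mathfrak{F}_{k-1}$, hence outside $(p_-/q_-,p_+/q_+)$, and the $a$-values it steals from $[c/k,c/k+1/(nu)]$ are contained in $[p_-/q_-,p_-/q_-+1/(nu)]$ or $[p_+/q_+,p_+/q_++1/(nu)]$, i.e.\ in $I(q_-)\cup I(q_+)$. The one boundary point to keep in mind when writing this up is the successor $1/1$ (for $c/k=(k-1)/k$), whose attached interval is not part of $I(1)$ as defined; this causes no trouble because that successor sits at distance $1/k>1/(nu)$, so your case analysis never invokes its coverage there — the predecessor, with $q_-=k-1=\max(q_-,q_+)$, already yields the claimed bound. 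With that observation the argument is complete and, I would say, a little cleaner than the one in the paper.
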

\begin{proof}
  For $k\le \floor{\sqrt{nu}}$ the conclusion is immediate by
  \cref{clm:prdistrunderstand}. Fix integer $k\ge \sqrt{nu}$.
  To bound the measure of $\setof{a\in (0,1)}{F(a)=k}$ we can
  take the measure claimed by $k$ and subtract the measure
  stolen from $k$ by any $k'<k$.


  Fix a $k$-fraction $c/k$. The interval touching $c/k$ claimed
  by $k$ is $c/k+[0,1/(nu)]$.
  The amount of this interval which is stolen is determined by
  the distance from $c/k$ to $(c/k)$'s successor in $\mathfrak{F}_k$.
  Let $i<k$ denote the denominator of $(c/k)$'s successor.
  As depicted in \cref{fig:fareypf} there are two cases:
  \begin{itemize}
    \item If the successor is close to $c/k$ then all but a
          length-$\frac{1}{ik}$ prefix of the interval is stolen, where
          $\frac{1}{ik}$ is the distance to the successor by
          \cref{fact:fareydist}.
    \item If the successor of $c/k$ occurs after distance more than
          $\frac{1}{nu}$ then it will not steal any of the interval.
  \end{itemize}

  \begin{figure}[h]
    \centering
    \includegraphics[width=0.45\textwidth]{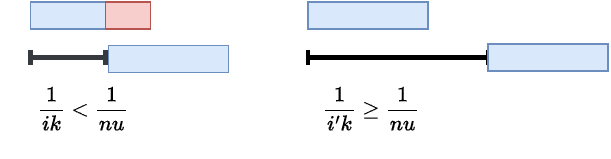}
    \caption{$k$-fraction and its successor.\\
      Left: close successor. Right: far successor.\\
      Blue box: obtained $a$.  Red box: stolen $a$.}
    \label{fig:fareypf}
  \end{figure}

  Now, we bound the measure obtained by $k$ by summing over the two
  cases represented in \cref{fig:fareypf}.
  \begin{equation}
    \Pr[F(a)=k]  \le\frac{nu}{k} \cdot \frac{1}{nu} + \sum_{i=
    \floor{nu/k}}^k \frac{1}{ik}                                
                 \le \paren{2+\ln \paren{\frac{k^2}{nu}}}
    \frac{1}{k}.\label{eq:lnudk2}
  \end{equation}
  Let $k=\alpha\sqrt{nu}$ for some $\alpha\ge 1$.
  Using $\alpha$ in \cref{eq:lnudk2} gives:
  \[\frac{2+2\ln\alpha}{\alpha} \cdot \frac{1}{\sqrt{nu}} \le
  \bigO\paren{\frac{1}{\sqrt{nu}}},\]
  the desired bound on $\Pr[F(a)=k]$.

\end{proof}

Now we are prepared for the following lemma:
\begin{lemma}\label{lem:nothingcollides}
  With probability at least $1-1/n$ all
  pairs $x,y\in X$ with $x\neq y$ satisfy
  $x\not\equiv y \mod F(a).$
\end{lemma}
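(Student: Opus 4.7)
The plan is to bound the bad event by a union bound over pairs, using the pointwise upper bound on $\Pr[F(a)=k]$ from \cref{lem:fareyopfak} together with the divisor bound from \cref{fact:numdivs}.

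First I would fix distinct $x,y\in X$ and let $d=x-y$, so $|d|\in [1,u]$. The key observation is that $x\equiv y\pmod{F(a)}$ happens if and only if $F(a)$ is a divisor of $|d|$, since $F(a)\ge 1$. Hence
\[
\Pr[F(a)\mid d] \;=\; \sum_{k\mid |d|} \Pr[F(a)=k] \;\le\; \tau(|d|)\cdot \bigO\!\paren{\frac{1}{\sqrt{nu}}},
\]
where the equality uses that the events $\{F(a)=k\}$ are disjoint and the inequality applies \cref{lem:fareyopfak} to each of the $\tau(|d|)$ divisors of $|d|$. (Note that all divisors are at most $|d|\le u\le nu$, so \cref{lem:fareyopfak} indeed applies.)

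Next I would bound $\tau(|d|)$. Since $|d|\le u\le \poly(n)$, \cref{fact:numdivs} gives $\tau(|d|)\le |d|^{o(1)}\le n^{o(1)}$. Therefore for every fixed pair $(x,y)$,
\[
\Pr[F(a)\mid (x-y)] \;\le\; \frac{n^{o(1)}}{\sqrt{nu}}.
\]
Applying the union bound over the $\binom{n}{2}\le n^2/2$ distinct ordered-unordered pairs in $X$ yields
\[
\Pr\!\left[\exists x\neq y\in X:\; x\equiv y \pmod{F(a)}\right] \;\le\; \frac{n^2}{2}\cdot \frac{n^{o(1)}}{\sqrt{nu}} \;=\; \frac{n^{1.5+o(1)}}{\sqrt{u}}.
\]
Finally, invoking the standing assumption $u\ge n^{6}$ from \cref{rmk:assumesize}, we have $\sqrt{u}\ge n^{3}$, so the right-hand side is at most $n^{-1.5+o(1)}\le 1/n$ for sufficiently large $n$, giving the desired conclusion.

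There is no real obstacle here: the proof is essentially a divisor-counting union bound. The only subtle point to double-check is that \cref{lem:fareyopfak}'s bound $\bigO(1/\sqrt{nu})$ really is valid for the entire range of $k$ we sum over (namely all $k\le u$), and that the generous universe-size assumption $u\ge n^6$ provides enough slack to absorb both the $n^2$ pairs and the $n^{o(1)}$ divisor count. Both are immediate from the stated results.
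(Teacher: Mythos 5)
Your proof is correct and follows essentially the same route as the paper: bound the per-pair collision probability by combining the divisor bound $\tau(x-y)\le n^{o(1)}$ with the $\bigO(1/\sqrt{nu})$ bound on $\Pr[F(a)=k]$, then sum over the $\binom{n}{2}$ pairs and use $u\ge n^6$. The paper phrases the final step as ``expected number of colliding pairs plus Markov'' rather than a union bound, but for indicator events these are the same estimate, so there is no substantive difference.
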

\begin{proof}
  Take distinct $x,y\in X$. We say $x,y$ \defn{collide} if
  $x\equiv y \bmod F(a)$. If $x,y$ collide we must
  have ${F(a) \mid (x-y)}$. By \cref{fact:numdivs}, $x-y$ has at most
  $u^{o(1)}$ divisors. By \cref{lem:fareyopfak} $F(a)$ will be one
  of these divisors with probability at most
  $u^{o(1)}/\sqrt{nu}$. That is, $x,y$ collide with probability
  at most $u^{o(1)}/\sqrt{nu}$.
  By linearity of expectation the expected
  number of pairs $x,y$ which collide is at most
  \[
  \frac{\binom{n}{2}u^{o(1)}}{\sqrt{nu}} \le
    \frac{n^{2}n^{o(1)}}{n^{7/2}} \le \frac{1}{n^{3/2-o(1)}} \le
    \frac{1}{n},
  \]
  which we have simplified using \cref{rmk:assumesize}.
  The number of colliding pairs is a non-negative
  integer random variable. Thus, by Markov's inequality the
  probability of having at least $1$ collision is at most $1/n$.
  Equivalently, with probability at least $1-1/n$ there are $0$
  colliding pairs.
\end{proof}

\begin{cor}\label{cor:ub}
  $M_\R \le \bigO(M_{\max}).$
\end{cor}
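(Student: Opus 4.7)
The plan is to partition the probability space by the effective integer modulus $F(a)$, reducing each slice to $\SLH_{F(a)}$ via the preceding lemmas while absorbing a low-probability bad event trivially. Combining \cref{lem:restrictQ} and \cref{clm:approxepx} already supplies the key pointwise estimate: if $k := F(a)$ is well-defined and $c_a$ is the unique numerator with $a \in [c_a/k, c_a/k + 1/(nu)]$, then the maxload of $\RH_u$ with multiplier $a$ is at most twice the maxload of $\SLH_k$ on $X \bmod k$ with multiplier $c_a$.

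First, I isolate the bad event
\[ B \;=\; \{F(a) \le \sqrt u\} \;\cup\; \{\exists\text{ distinct } x, y \in X : x \equiv y \pmod{F(a)}\}. \]
By \cref{clm:prdistrunderstand} and the assumption $u \ge n^6$, $\Pr[F(a) \le \sqrt u] \le \sum_{k=1}^{\floor{\sqrt u}} \phi(k)/(nu) \le u/(nu) = 1/n$; combined with \cref{lem:nothingcollides}, $\Pr[B] = \bigO(1/n)$. Since $M_\R(a) \le n$ always, $B$ contributes at most $\bigO(1) \le \bigO(M_{\max})$ to $\E[M_\R]$.

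Under $\neg B$, $F(a) \in (\sqrt u, nu] \cap \Z$ and $X \bmod F(a)$ is an $n$-element subset of $[F(a)]$, so by the definition of $M_{\max}$ the expected $\SLH_{F(a)}$ maxload on $X \bmod F(a)$ over uniform $c \in \Z_{F(a)}^\times$ is at most $M_{\max}$. I would integrate the pointwise bound over $a$ uniform in $(0,1) \cap \neg B$, partitioning by $(F(a), c_a) = (k, c)$; each such region has measure at most $1/(nu)$ by \cref{defn:fclaims}. For $k \le \floor{\sqrt{nu}}$, the proof of \cref{clm:prdistrunderstand} shows no intervals are stolen, so the conditional law of $c_a$ given $F(a) = k$ is uniform on $\Z_k^\times$; thus these $k$ contribute at most $2 M_{\max} \cdot \Pr[F(a) \le \floor{\sqrt{nu}}] \le 2 M_{\max}$ to $\E[M_\R \mathbf{1}_{\neg B}]$.

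The main obstacle is the regime $k > \sqrt{nu}$, where the conditional law of $c_a$ given $F(a) = k$ is non-uniform and could in principle concentrate on multipliers $c$ with large $\SLH_k$ maxload on $X \bmod k$. Here I would invoke \cref{lem:fareyopfak} for the bound $\Pr[F(a) = k] \le \bigO(1/\sqrt{nu})$, and use the Farey structure (e.g., \cref{lem:fareyneighbors}) to show that the set of surviving numerators $c$ — those for which a positive measure of $a$'s satisfies $F(a) = k$ with $c_a = c$ — depends only on $k$ and the universe size, not on $X$. This rules out an adversarial alignment between $X$ and the conditional law, so that the conditional expected maxload given $F(a) = k$ remains $\bigO(M_{\max})$ in the worst case; summing $\bigO(M_{\max}) \Pr[F(a) = k]$ over $k > \sqrt{nu}$ then contributes $\bigO(M_{\max})$, completing the bound.
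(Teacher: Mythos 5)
Your overall architecture is the same as the paper's: excise a bad event ($F(a)\le\sqrt u$, or a collision modulo $F(a)$) via \cref{clm:prdistrunderstand} and \cref{lem:nothingcollides} at a cost of $\bigO(1)$, and on the good event reduce $\RH_u$ pointwise to $\SLH_{F(a)}$'s hash function with multiplier $c_a$ via \cref{lem:restrictQ} and \cref{clm:approxepx}. Your treatment of $k\le\floor{\sqrt{nu}}$ is correct: since no measure is stolen in that range, $c_a$ is conditionally uniform on $\Z_k^{\times}$, and the comparison with $M_{\max}$ is immediate.

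The gap is in the regime $k>\sqrt{nu}$. You correctly flag that the conditional law of $c_a$ given $F(a)=k$ need not be uniform, but your proposed repair --- that the set of surviving numerators depends only on $k$ and the universe size, not on $X$ --- is a non sequitur. The maxload is taken over worst-case $X$, and the adversary choosing $X$ knows $n$ and $u$, hence knows exactly which numerators survive and with what conditional weight; $M_{\max}$ only controls the \emph{average} of the $\SLH_k$ maxload over a uniform multiplier in $\Z_k^{\times}$, so a conditional law concentrated on a small, $X$-independent but adversary-known subset of numerators is not bounded by $\bigO(M_{\max})$. \cref{lem:fareyopfak} bounds $\Pr[F(a)=k]$ but says nothing about how that mass is spread among numerators, and the obtained measure of an individual numerator genuinely varies with the local Farey gaps, so ruling out concentration requires an argument (e.g., something like $\Pr[c_a=c\mid F(a)=k]\le\bigO(1/\phi(k))$, or an averaging argument over $k$), which your write-up does not supply. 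For comparison, the paper does not split at $\sqrt{nu}$ at all: it conditions on $F(a)=k$ for every $k\in(\sqrt u,nu]$ and directly asserts the factor-of-$2$ comparison of the conditional expectation with the expected maxload of $\SLH_k$, leaving the distribution of $c_a$ implicit; the extra step you introduce to justify the large-$k$ case is exactly where your proposal, as written, does not go through.
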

\begin{proof} If the effective integer modulus $F(a)$ of $\RH$ is very
  small then $\RH$ may perform quite poorly. Luckily, $F(a)$ is
  likely not too small. By \cref{clm:prdistrunderstand} we have
  \[
  \Pr[F(a) \le \sqrt{u}] \le \sum_{k\le \sqrt{u}}\frac{k}{nu}
    \le 1/n,
  \]
  so the contribution to $M_\R$ of $a$ with $F(a)\le \sqrt{u}$ is
  $\bigO(1)$. Thus, it suffices to consider $F(a) \in (\sqrt{u},
    nu]$.
  Fix $X\subset [u]$, and condition on $F(a)=k$ for some $k\in (\sqrt{u},
    nu]$.
  By \cref{clm:approxepx} and \cref{lem:restrictQ} the expected
  maxload of $\RH_u$ on  $X$ is at most twice the expected
  maxload if we hash $X$ with $\SLH_k$'s hash function.
  By \cref{lem:nothingcollides} with probability at least $1-1/n$
  we have
  $|\posmod_k(X)| = n.$
  The case where $|\posmod_k(X)| < n$ contributes at most
  $\bigO(1)$ to the maxload. Otherwise we have a set
  $\posmod_k(X) \subset [k]$ on which  $\RH_u$ has expected
  maxload at most twice that of $\SLH_k$.
  This yields the desired bound on $M_\R$.
\end{proof}

Together \cref{cor:lb}, \cref{cor:ub} prove \cref{thm:itisreal}.
As a bonus, applying \cref{thm:Zm1_3} to \cref{thm:itisreal} gives:
\begin{cor}
  The expected maxload of $\RH_u$ is at most $\widetilde{\bigO}(n^{1/3}).$
\end{cor}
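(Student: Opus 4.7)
The plan is to chain together the two main theorems: apply \cref{thm:itisreal} to translate the problem into a question about integer hashing, and then apply \cref{thm:Zm1_3} to control the resulting $\SLH$ maxloads. Concretely, \cref{thm:itisreal} gives $M_\R \le \bigO(M_{\max})$, where $M_{\max}$ is the supremum over integer moduli $k \in (\sqrt{u}, nu]$ of the worst-case expected maxload of $\SLH_k$ on $n$-element subsets of $[k]$.

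Next I would verify that \cref{thm:Zm1_3} applies uniformly to every $k$ in the range $(\sqrt{u}, nu]$. By \cref{rmk:assumesize} we have $u \le \poly(n)$, so every such $k$ is also $\poly(n)$, matching the hypothesis of \cref{thm:Zm1_3}. Moreover, $k > \sqrt{u} \ge n^{3}$, which is comfortably above the $n^{2.5}$ modulus-size threshold required in the formal proof of \cref{thm:Zm1_3} (referenced in the footnote of \cref{cor:translate}). Applying the theorem gives $M_{\SLH}(k, n) \le \widetilde{\bigO}(n^{1/3})$ uniformly in $k$, hence $M_{\max} \le \widetilde{\bigO}(n^{1/3})$.

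Combining the two bounds yields $M_\R \le \bigO(M_{\max}) \le \widetilde{\bigO}(n^{1/3})$, as desired. There is no real obstacle here since the hard work is done in the two theorems being invoked; the only thing to check carefully is that the range of effective integer moduli produced by the upper-bound direction of \cref{thm:itisreal} lies entirely within the regime where \cref{thm:Zm1_3} is applicable, which follows directly from the standing assumption $n^6 \le u \le \poly(n)$.
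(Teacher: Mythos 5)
Your proposal is correct and matches the paper's own argument: the corollary is obtained exactly by plugging the uniform $\widetilde{\bigO}(n^{1/3})$ bound of \cref{thm:Zm1_3} into the upper-bound direction $M_\R \le \bigO(M_{\max})$ of \cref{thm:itisreal}. Your extra check that every effective modulus $k \in (\sqrt{u}, nu]$ satisfies $k > n^{2.5}$ and $k \le \poly(n)$ is the right (and only) verification needed, and it is consistent with the footnote accompanying \cref{cor:translate}.
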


\section{Two Bins}
\label{sec:twobins}
In this section we consider the simplest possible setting for
$\LH$: hashing $n$ items to $2$ bins.
If $\LH$ performs well on $n$ bins then it is reasonable to conjecture that the
maxload of $\LH$ with two bins is tightly concentrated around $n/2$.  We propose
as an open problem proving a Chernoff-style concentration bound on the maxload:
\begin{conj}\label{conj:reductiontwobins}
  Two bin $\LH$ incurs maxload larger than $n/2+k\sqrt{n}$
  with probability at most $2^{-\Theta(k^2)}+1/p.$
\end{conj}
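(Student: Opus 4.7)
The plan is to establish Chernoff-type concentration by the method of moments: bound the even moments of the centered load and apply Markov's inequality, optimizing over the moment exponent. Let $Y_x := \one[h_a(x) = 1] - \tfrac{1}{2}$ where $h_a$ is the Simple $\LH$ function of \cref{eq:LHdefn3} with $\beta = 2$, and set $S := \sum_{x \in X} Y_x$, so that $S = L_1 - n/2$ and the maxload minus $n/2$ equals $|S|$. Concretely, I would aim for a moment bound of the form
\[
\E[S^{2k}] \le (Ckn)^k + \frac{n^{2k}}{p}
\]
for an absolute constant $C > 0$ and all integer $k$ with $1 \le k \le n$. Markov then gives $\Pr[|S| > t] \le (Ckn/t^2)^k + n^{2k}/(p\, t^{2k})$; taking $t = k\sqrt n$ and optimizing over $k$ recovers exactly the claimed $2^{-\Theta(k^2)} + 1/p$ tail.

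To obtain the moment bound I would Fourier-expand the hash function over $\F_p$. The indicator of $[p/2, p) \subset \F_p$ has Fourier coefficients $\hat c(\xi)$ satisfying $|\hat c(\xi)| = \bigO(1/\circabs_p(\xi))$, which is the crucial analytic input. Writing $Y_x = \sum_{\xi \in \F_p \setminus \{0\}} \hat c(\xi)\, e^{2\pi i \xi a x / p}$, expanding $S^{2k}$, and averaging over $a \in \F_p^\times$, we get
\[
\E[S^{2k}] = \sum_{\mathbf i \in [n]^{2k}} \sum_{\boldsymbol \xi \in (\F_p \setminus 0)^{2k}} \prod_{j=1}^{2k} \hat c(\xi_j) \cdot \E_a\!\left[ e^{2\pi i a \sum_j \xi_j x_{i_j} / p} \right].
\]
The inner expectation equals $1$ when $\sum_j \xi_j x_{i_j} \equiv 0 \pmod p$ and is at most $1/(p-1)$ otherwise. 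The off-resonant terms contribute the $n^{2k}/p$ error, while the resonant terms split into a principal part coming from pair-partition tuples (where the $2k$ indices pair up with $\xi_j + \xi_{j'} \equiv 0$) that mimic the independent setting and give the $(Ckn)^k$ term, plus lower-order contributions from ``non-pair'' tuples corresponding to genuine linear relations in $X$.

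The main obstacle is controlling the non-pair contributions. Because Simple $\LH$ omits the random shift, some pairs $x,y \in X$ can collide with probability as large as $2/3$, so pairwise correlations are not asymptotically independent, and analogous higher-order correlations enter the $2k$-th moment expansion; the conjecture posits that despite this, these correlations average out sufficiently. I expect the polynomial Fourier decay $|\hat c(\xi)| = \bigO(1/\circabs_p(\xi))$, combined with a careful count of tuples $(x_{i_1}, \ldots, x_{i_{2k}}) \in X^{2k}$ realizing a prescribed linear relation $\sum \xi_j x_{i_j} \equiv 0 \pmod p$, to absorb the non-pair contributions into the main bound for generic $X$. Structured $X$ (such as long arithmetic progressions) are the most dangerous case, potentially producing many non-trivial solutions to these linear equations; handling them may require an induction on $k$ with \cref{thm:dontneedb} as the base case, together with a combinatorial decomposition separating ``structured'' configurations from ``random'' ones. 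Pushing the structured contributions into the $1/p$ error term (rather than into the Gaussian body) is likely the crux of the argument.
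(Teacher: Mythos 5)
You should note first that \cref{conj:reductiontwobins} is stated in the paper as an open conjecture: the paper offers no proof of it, and its only result in the two-bin setting is the much weaker expectation bound \cref{thm:dontneedb}. Your submission is likewise not a proof but a program, and you candidly flag its crux (``controlling the non-pair contributions'' for structured $X$) as unresolved. That flagged step is not a technicality --- it is essentially the entire content of the conjecture. Already the case $j=1$ of your proposed moment bound, i.e.\ $\E[S^2]\le Cn+n^2/p$, is equivalent to bounding the total pairwise collision excess $\sum_{x,y\in X}\lap(\cdot)$, which is exactly \cref{cor:kindaepicevenifweak}; in the paper this requires the pigeonhole representation $x=\modp(\sigma m^{-1}k)$ of \cref{lem:pigeons} and the delicate cycle/step/revolution analysis of \cref{lem:epicbound}, precisely because for $z=xy^{-1}$ with good small-denominator rational structure the resonant sum $\sum_{\xi}\abs{\hat c(\xi)\hat c(\xi z)}$ is $\Theta(1)$ rather than $o(1)$ (this is the Fourier-side face of the ``$1$ and $3$ collide with probability $2/3$'' phenomenon). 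Your plan needs the $2j$-th order analogue: a bound on the weighted count of solutions to $\sum_{\ell}\xi_\ell x_{i_\ell}\equiv 0 \pmod p$ with weights $\prod_\ell 1/\circabs_p(\xi_\ell)$, uniformly over adversarial $X$ and for $j$ growing with $n$. No mechanism for this is given, and an ``induction on $k$ with \cref{thm:dontneedb} as the base case'' does not obviously propagate, since higher resonances are not controlled by lower moments for structured sets.

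Two smaller points. First, your optimization is off as written: with moment order $k$ and deviation $t=k\sqrt n$, Markov yields $(C/k)^k=2^{-\Theta(k\log k)}$, not $2^{-\Theta(k^2)}$; to reach the conjectured sub-Gaussian tail you must take the moment order $j=\Theta(t^2/n)=\Theta(k^2)$, so the moment bound must be established for $j$ up to roughly $k^2$, which only raises the stakes on the unresolved step. Second, the off-resonant error should be tracked more carefully: the character-sum expectation over $a\in\pnozero$ is $-1/(p-1)$, and the weighted number of off-resonant tuples grows with $j$, so pushing that contribution into a clean additive $1/p$ term also needs an argument once $j$ is large. In short, the approach (moments plus Fourier expansion of the half-interval indicator) is a reasonable and standard angle of attack, but as it stands it reproduces, rather than resolves, the difficulty that makes this a conjecture.
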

We propose that showing a result such as \cref{conj:reductiontwobins}
may be relatively tractable compared to analysis of the full $n$ bin case.
Establishing such a conjecture would constitute the strongest evidence to date
that $\LH$ is a good load-balancing function.
As partial progress towards understanding $\LH$ in the $2$-bin case we analyze
its expected maxload.
Formally our hash function is defined as follows:
\begin{defin}
  Let $p\in \PRM$. In \defn{Multiplicative Two Bin} $\LH$
  ($\MLH$) we choose random $a\in \pnozero$ and place $x\in
    [p]$ in bin
  $\floor{\frac{\modp(ax)}{p/2}} \in \set{0,1}.$
\end{defin}

We prove
\begin{theorem}\label{thm:dontneedb}
  $\MLH$ has expected maxload at most
  $n/2+\widetilde{\bigO}(\sqrt{n}).$
\end{theorem}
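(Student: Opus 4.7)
The plan is a second-moment argument. Let $S$ denote the number of items placed in bin $0$, so the maxload equals $\max(S, n-S) = n/2 + |S - n/2|$ and it suffices to show $E[|S - n/2|] = \tilo(\sqrt n)$. Assume $X\subset \pnozero$ (if $0\in X$ it always lands in bin $0$ and costs at most $1$ in maxload). Let $X_i$ be the indicator that $x_i$ falls in bin $0$; because $\modp(ax_i)$ is uniform over $\pnozero$, we get $E[X_i]=1/2$ exactly, so $E[S]=n/2$. By Jensen's inequality $E[|S-n/2|]\le\sqrt{\Var(S)}$, reducing the theorem to showing $\Var(S) = \tilo(n)$.

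I set up a sign-function calculation. Define $\chi:\Z_p\to\{-1,+1\}$ by $\chi(z)=+1$ if $z<p/2$ and $-1$ otherwise, so $2S-n = \sum_i \chi(\modp(ax_i))$. Expanding the square and using that $z=ax_j$ is uniform on $\pnozero$ yields $4\Var(S) = n + \sum_{i\neq j} c(y_{ij})$, where $y_{ij}\defeq\modp(x_i x_j^{-1})$ and
\[
c(y) \;\defeq\; \frac{1}{p-1}\sum_{z\in\pnozero}\chi(z)\,\chi(\modp(yz)).
\]
This reduces all pairwise correlations $E[\chi(\modp(ax_i))\chi(\modp(ax_j))]$ to the single-parameter quantity $c(\cdot)$.

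The technical heart is the decay bound $|c(y)| = O(1/\circabs_p(y))$. I argue geometrically. Writing $N(y) = |\{z\in [0,p/2) : \modp(yz)\in[p/2,p)\}|$, one checks $c(y)=1-4N(y)/(p-1)$ up to $O(1/p)$. Multiplication by $y$ partitions $[0,p)$ into $y$ subintervals of length $p/y$, on each of which $\modp(yz)$ is an affine bijection onto $[0,p)$; the preimage of $[p/2,p)$ in each subinterval is an interval of length $p/(2y)$. Intersecting the union with $[0,p/2)$ yields total measure exactly $p/4$ plus at most one partial interval of length $O(p/y)$, giving the claim for $y\le p/2$. For $y>p/2$ the identity $c(p-y) = -c(y)$ (which reflects $\chi(p-z)=-\chi(z)$) then packages the bound as $O(1/\circabs_p(y))$.

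To finish, I use an elementary counting lemma: for each fixed $i$ and integer $k\ge 1$, the equation $\circabs_p(y_{ij})=k$ forces $x_j\equiv\pm k^{-1}x_i\pmod p$, so at most two indices $j$ qualify. Hence $\sum_{j\neq i}1/\circabs_p(y_{ij})\le 2\sum_{k=1}^{(p-1)/2}1/k = O(\log p)$, and summing over $i$ gives $\sum_{i\neq j}|c(y_{ij})| = O(n\log p)$. Therefore $4\Var(S)\le n + O(n\log p) = \tilo(n)$ since $p=\poly(n)$, completing the argument. I expect the main obstacle to be the careful proof of $|c(y)| = O(1/\circabs_p(y))$: the paper stresses that some pairs collide with probability as high as $2/3$ (so individual covariances are $\Omega(1)$), and the nontrivial point one must nail down is that these large covariances occur only when $\circabs_p(y_{ij})$ is very small, of which the counting lemma shows there are too few per element to inflate the variance beyond $\tilo(n)$.
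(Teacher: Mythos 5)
Your overall reduction is fine and in fact parallels the paper's: maxload $= n/2 + |S-n/2|$, Jensen, and the inversion trick that turns every pairwise correlation into the one-variable quantity $c(y)$ with $y=\modp(x_ix_j^{-1})$ is the same move the paper makes when it reduces everything to collisions of arbitrary $y$ with the element $1$ (its excess overlap $\lap$). The gap is the technical heart you yourself flagged: the decay bound $|c(y)|\le \bigO(1/\circabs_p(y))$ is false. Take $y\equiv 3^{-1}\pmod p$: substituting $z\mapsto \modp(3w)$ in your sum gives $c(3^{-1})=c(3)\approx 1/3$, while $\circabs_p(3^{-1})=\Theta(p)$. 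More generally $c(y)=c(\modp(y^{-1}))$, so every spike at a small integer $y$ is mirrored at $y^{-1}$, which generically lies at circular distance $\Theta(p)$ from $0$; this is exactly the phenomenon the paper warns about when it observes that $1$ and the solution of $3x\equiv 1$ collide with probability about $2/3$. The flaw in your geometric proof is that $z$ ranges over integers, not reals: for $y=\Theta(p)$ the $y$ subintervals have length $\bigO(1)$, and the integer-point count of the preimage of $[p/2,p)$ can deviate from its Lebesgue measure by $\bigO(1)$ per subinterval, i.e.\ by $\bigO(y)$ in total. What your argument honestly yields is $|c(y)|\le \bigO(1/y + y/p)$, which is vacuous for $y=\Theta(p)$, and the reflection $c(p-y)=-c(y)$ only rescues $y$ near $p$, not $y$ near $p/3$, $2p/5$, etc.

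Because the decay claim fails, the counting lemma is attached to the wrong parameter: the elements strongly correlated with a fixed $x_i$ are not just the two with $\circabs_p(y_{ij})=k$, but a two-parameter family indexed by the best rational approximation, namely the smallest $m\in[n]$, $k\in[\ceil{p/n}]$ with $x_ix_j^{-1}\equiv \pm m^{-1}k \pmod p$ (\cref{lem:pigeons}). The paper's analogue of your decay bound is \cref{lem:epicbound}, $\lap(x)\le \bigO(k+(m+\frac{p}{mk})\gcd(k,m))$, proved by the cycle/step/revolution decomposition, and the analogue of your counting step is the sum over triples $(m,k,\sigma)$ in \cref{cor:kindaepicevenifweak}, which still comes out to $\tilo(p)$ (equivalently, row sums of correlations $\tilo(1)$, so your variance bound $\Var(S)=\tilo(n)$ would follow). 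So the skeleton of your argument can be salvaged, but the per-element estimate must be restated and proved in terms of the pair $(m,k)$ rather than $\circabs_p(y)$, and that estimate is where essentially all the work of the paper's proof lies.
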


The proof of \cref{thm:dontneedb} uses the standard technique of analyzing the expected
number of \defn{collisions}: pairs of elements that hash to
the same bin. However, without pairwise independence computing
the expected number of collisions is challenging.
In fact, some elements collide with probability much
larger than $1/2$. For instance, $1$ and $3$ collide with probability
$2/3$. 
More generally for any small integer $k$, $1$ and $2k+1$ will
collide with probability approximately $(k+1)/(2k+1) > 1/2$.
If small odd numbers were the only numbers with probability
much larger than $1/2$ of colliding with $1$ then the analysis
would be fairly easy. However, there can be other numbers which
are very likely to collide with $1$. For instance, imagine
$x\in[p]$ satisfies  $3x\equiv 1 \mod p$. Then $1$ and $x$ also
have an approximately $2/3$ chance of colliding.

Our bound on the expected number of collisions intuitively
works as follows: for any particular element $x\in [p]$ there
are very few $y\in [p]$ where $x,y$ collide with probability
much larger than $1/2$. By symmetry (or more precisely the
existence of multiplicative inverses in $\F_p$), it does not
matter which $x$ we choose to compare with. So it will suffice
to analyze the probability of elements $y$ colliding with
$x=1$.

For sake of combinatorics we work with the following
transformed version of collision probabilities:
\begin{defin}
  The \defn{overlap} of $x\in [p]$ is the number of $a\in \halfp$
  such that  $1,x$ collide.
  Equivalently, the overlap of $x$ is the number of $a\in \halfp$ where
  $\modp(ax) < p/2.$
  The \defn{excess overlap} of $x$, denoted $\lap(x)$, is the
  overlap of $x$ minus $p/4$.
  The \defn{contribution} of a set ${A\subset \halfp}$ to $\lap(x)$
  is the difference between the number of $a\in A$
  with $\modp(ax) <p/2$ and the number of  $a\in A$ with
  $\modp(ax) > p/2$.
  We will bound $\lap(x)$ by partitioning $\halfp$ into
  disjoint subsets $A_1,A_2,\ldots$ and summing the
  contributions of each $A_i$ to $\lap(x)$.
\end{defin}

Now we give a bound on $\sum_{x\in X}\lap(x)$. Our key insight is
that $\lap(x)$ is best understood by finding a small
number $m$ so that $k=\circabs_p(xm)$ is small and using this
$m,k$ to partition $\halfp$ into parts that each have small
contribution to $\lap(x)$. 

\begin{lemma}\label{lem:pigeons}
  For any $x\in \Z_p$ there exist $m\in [n], k\in [\ceil{p/n}],
  \sigma\in \pm 1$ such that $x = \modp(\sigma m^{-1}k)$.
\end{lemma}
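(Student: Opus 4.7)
The plan is to recognize this as a standard Dirichlet-type simultaneous-approximation statement for $x/p$, and prove it via a pigeonhole argument on small integer multiples of $x$ modulo $p$.

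First I would consider the $n+1$ values $\modp(i x)$ for $i = 0, 1, \ldots, n$ (alternatively, $n$ values for $i \in [n]$, depending on the precise boundary interpretation of $[n]$ in the statement). These all lie in the interval $[0, p)$. Partition $[0, p)$ into $n$ half-open intervals each of length at most $\ceil{p/n}$. By the pigeonhole principle, two of the values $\modp(i_1 x)$ and $\modp(i_2 x)$ with $i_1 < i_2$ lie in the same interval, so
\[
\abs{\modp(i_2 x) - \modp(i_1 x)} < \ceil{p/n}.
\]

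Next I would set $m = i_2 - i_1 \in \{1, 2, \ldots, n\}$ and analyze $\modp(m x)$. Depending on the sign of $\modp(i_2 x) - \modp(i_1 x)$, the value $\modp(m x)$ is either less than $\ceil{p/n}$ (in the positive case) or greater than $p - \ceil{p/n}$ (in the negative case, since reducing a negative residue wraps by $+p$). Equivalently, $\circabs_p(m x) < \ceil{p/n}$, so there exist $k \in [\ceil{p/n}]$ and $\sigma \in \{+1, -1\}$ with $m x \equiv \sigma k \pmod{p}$. Since $m \in \pnozero$ and $p$ is prime, $m^{-1}$ exists in $\F_p$, yielding
\[
x \equiv \sigma m^{-1} k \pmod{p},
\]
i.e., $x = \modp(\sigma m^{-1} k)$, as required.

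This argument is essentially just Dirichlet's approximation theorem, so there is no real obstacle — the only thing to be careful about is the off-by-one arithmetic in the pigeonhole count (making sure the bucket width is indeed at most $\ceil{p/n}$ and that $m$ lies in the range claimed). If the statement insists on $m \in [n] = \{0, \ldots, n-1\}$ in the paper's convention, I would use $n$ values $\modp(ix)$ for $i \in [n]$ with $n-1$ buckets and verify that the resulting bound is still $\ceil{p/n}$ for the parameter regime considered; otherwise interpreting $[n]$ as $\{1, \ldots, n\}$ (as seems natural given $m^{-1}$ must make sense) makes the above pigeonhole argument go through cleanly and immediately.
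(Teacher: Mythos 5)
Your proposal is correct and is essentially the paper's own argument: pigeonhole on the residues $\modp(ix)$ for small $i$, take $m$ as the difference of the two colliding indices, let $\sigma$ record the sign and $k=\circabs_p(mx)<\ceil{p/n}$, then invert $m$ modulo the prime $p$. The only difference is your more careful handling of the $[n]$ indexing/off-by-one detail, which the paper glosses over but which does not change the substance.
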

\begin{proof}
  By the pigeonhole principle the set $\setof{\modp(x\cdot i)}{i\in
  [n]}$ must have two numbers within distance $p/n$ of each
  other.
  Let $i_1,i_2\in [n]$ be distinct indices such that
  $\modp(xi_1-xi_2)\in [0, p/n]$.
  Set $m = |i_1-i_2| \in [n]$, and set $\sigma$ to be the sign of
  $i_1-i_2$. Then $\modp(x\sigma m)\in [\ceil{p/n}]$. Define $k$ to be
  $\modp(x\sigma m)$. Clearly we have $x = \modp(\sigma m^{-1}
  k)$, for $m,k,\sigma$ with the desired properties.
\end{proof}

\begin{lemma}\label{lem:epicbound}
  Let $x= \modp(\sigma m^{-1}k)$ for
  $\sigma\in \pm 1,m\in [n],k \in [\ceil{p/n}]$.
  Then, \[\lap(x) \le \bigO\left(k+\left(m + \frac{p}{mk}\right)\cdot
    \gcd(k,m)\right).\]
\end{lemma}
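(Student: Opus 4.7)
The plan is to partition $\halfp$ by residue modulo $m$ and use the relation $mx\equiv\sigma k\pmod p$ to convert each residue class into an arithmetic progression on the circle $\Z/p\Z$. By the symmetry $a\leftrightarrow p-a$, it suffices to treat $\sigma=1$. For $r\in\{1,\ldots,m\}$, let $A_r=\{a\in\halfp:a\equiv r\pmod m\}$, of size $L_r\approx p/(2m)$. Writing $a=r+jm$ with $j\in\{0,\ldots,L_r-1\}$ gives $\modp(ax)=\modp(rx+jk)$, so inside $A_r$ the images form an AP of length $L_r$, step $k$, starting at $s_r:=\modp(rx)$. Thus $\lap(x)$ is, up to a factor of $2$ and an additive $\bigO(1)$, the sum over $r$ of the signed counts $\#\{j:(s_r+jk)\bmod p<p/2\}-\#\{j:(s_r+jk)\bmod p\ge p/2\}$.

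First, I would bound the contributions from ``full wraps'' of each AP around the circle. A single full wrap traverses angular distance $p$ in $\approx p/k$ steps, visiting $\approx p/k$ points equally spaced by $k$; these split between $[0,p/2)$ and $[p/2,p)$ up to $\bigO(1)$ discrepancy, so each wrap contributes $\bigO(1)$ to the signed count. The number of full wraps of $A_r$'s AP is $Q_r=\floor{L_rk/p}$; summing the full-wrap contributions over all $r\in[1,m]$ gives $\bigO(\sum_r L_rk/p)=\bigO(k)$, the first term of the claimed bound.

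What remains is the residual (partial-wrap) contribution from each class, which naively per-class sums to $\bigO(mp/k)$---so cancellation across classes is essential. Let $c=\gcd(k,m)$, $m'=m/c$, $k'=k/c$. The key structural fact is $m'x\equiv k'\pmod p$: partitioning $r$ by residue modulo $m'$ groups the starting points $\{s_r\}$ into $m'$ ``clusters'' of $c$ points each, where each cluster is a tight AP of step $k'$ on $\Z/p\Z$ spanning angular length $ck'=k$. Within one cluster, the $c$ component residual APs merge (since $k=ck'$) into one AP of step $k'$, and a three-distance or Pick-style count bounds this combined residual contribution by $\bigO(m+p/(mk))$ per cluster; summing over the $m'$ clusters yields the remaining $\bigO(mc+pc/(mk))$ terms of the bound.

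The main technical obstacle is justifying the per-cluster bound $\bigO(m+p/(mk))$ and then aggregating across clusters cleanly. The most transparent avenue I see is to reformulate the whole problem as a lattice-point count of $L=\{(a,b)\in\Z^2:b\equiv ax\pmod p\}$ inside $R=[1,p/2)\times[0,p/2)$, using a reduced basis of $L$ whose first short vector is $(m/c,k/c)$: the three summands in the claimed bound correspond to three distinct portions of $\partial R$ in a Pick-type count, and the factor $c$ appears precisely because the genuinely short vector of $L$ is $(m/c,k/c)$ rather than $(m,k)$, so the natural residue-mod-$m$ partition is $c$ times coarser than what the lattice geometry adapts to. Additional care is needed to handle the regimes $k\lessgtr m$ (which controls whether each AP wraps at all), the edge effects at the endpoints of $\halfp$, and the near-boundary alignment of the cluster APs at $0$ and $p/2$.
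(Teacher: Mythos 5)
Your setup is sound and matches the paper's opening move: partition $\halfp$ into residue classes mod $m$, observe that each class maps under multiplication by $x$ to an arithmetic progression of step $k$ on the circle $\Z_p$, and charge $\bigO(1)$ per full wrap to obtain the $\bigO(k)$ term. The congruence $m'x\equiv\sigma k'\pmod p$ with $c=\gcd(k,m)$, $m'=m/c$, $k'=k/c$ is also correct, and it encodes the same structure the paper uses (its anchors $\modp(p\frac{k}{m}\Z)$, of which there are $m/c$). The gap is in the partial-wrap accounting. First, the per-cluster claim is false as stated: a merged cluster is an AP of step $k'$ with roughly $pc/(2m)$ points spanning an arc of length roughly $pk/(2m)$, and its leftover partial wrap can sit entirely inside one half of the circle, so its signed contribution can be as large as roughly $\min(pc/k,\,pc/m)$, which in general far exceeds $\bigO(m+p/(mk))$ (take $c=1$, $m\approx k\approx\sqrt p$). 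Second, even granting that per-cluster bound, the bookkeeping does not close: summing $\bigO(m+p/(mk))$ over the $m'=m/c$ clusters gives $\bigO(m^2/c+p/(ck))$, not the needed $\bigO(mc+pc/(mk))$; these differ by a factor of order $m/c^2$. What is missing is cancellation \emph{across} clusters: the leftover points of the different residue classes must be regrouped transversally (the paper's ``steps''), shown to concentrate with $c$ points near each of the $m/c$ equally spaced anchors, and then cancelled across consecutive steps within each revolution; that is precisely where the terms $m\gcd(k,m)$ and $\frac{p}{mk}\gcd(k,m)$ come from, and no counterpart of it appears in your argument.

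Your fallback, counting points of the lattice $\setof{(a,b)}{b\equiv ax \bmod p}$ in a quadrant-sized rectangle via a Pick-type estimate, also does not reach the bound as described: the generic error for such counts is of order perimeter divided by the length of the shortest vector, i.e.\ $\bigO(pc/\max(m,k))$, which is weaker than the lemma whenever $m$ and $k$ are both moderately large (for $m\approx k\approx p^{0.3}$, $c=1$ this gives about $p^{0.7}$ against the lemma's $p^{0.4}$). Extracting the stronger bound from the lattice picture would require exactly the directional discrepancy cancellation described above, so the reformulation defers rather than supplies the hard step. A minor additional point: the reduction to $\sigma=1$ via $x\mapsto p-x$ flips the sign of $\lap$ up to $\bigO(1)$, so it is only legitimate if your argument bounds the contributions in absolute value, which should be said explicitly.
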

\begin{proof}
  We partition $\halfp$ into $m$ \defn{groups}, where for each
  $i\in [m]$, group $i$ consists of the values $G_i = {(m\Z+i)\cap \halfp}$.
  We further split groups into \defn{cycles}, where cycle $C_{i,j}$
  is defined to be 
  \[
    C_{i,j} = \setof{m j' + i}{j' \in [j\ceil{p/k}, (j+1)\ceil{p/k})} \cap
    \halfp \subseteq G_i.
  \]
  We say that a cycle $C_{i,j}$ is a \defn{full cycle} if
  $|C_{i,j}| = \ceil{p/k}$.
  For any full cycle $C_{i,j}$ the set $\modp(x\cdot C_{i,j})$
  consists of $\ceil{p/k}$ points, with consecutive points
  separated by distance $k$; this is due to the fact that
  $x=\modp(\sigma m^{-1} k)$. In particular, this means that the
  points go slightly past a full revolution of the circle $\Z_p$.
  Thus, each full cycle contributes at most $\bigO(1)$ to $\lap(x)$.
  The total contribution to $\lap(x)$ from all full cycles is thus 
  bounded by 
  \[
    \sum_{i\in [m]} \floor{\frac{|G_i|}{\ceil{p/k}} } \cdot
    \bigO(1)\le \bigO\left(m \cdot \frac{p/m}{p/k}\right) \le \bigO(k).
  \]
  Now it suffices to bound the contribution from non-full
  cycles. Observe that each group $G_i$ has a single non-full
  cycle, which we call its \defn{final cycle}, or $F_i$ (if $G_i$ has no non-full
  cycle, then its final cycle is $\varnothing$). When bounding the
  contribution from the final cycles it is no longer a good idea
  to analyze the groups separately. 
  Instead, for each $j\in [|F_{m-1}|]$ we will group together the
  $j$-th largest values in each of the $F_i$'s into a
  \defn{step}, denoted $S_j$.
  Note that this might not quite capture all the points in all
  the final cycles because there may be some $i$ such that
  $|F_{i}| = |F_{m-1}|+1$. 
  To handle this we remove the largest value from each final
  cycle $i$ with $|F_i| = |F_{m-1}|+1$. This results in contribution
  $\bigO(m)$ to $\lap(x)$.
  This done, it suffices to consider the contribution of
  $\bigcup_{i\in [L]}S_i$ where $L$ is the number of steps.
  Note that $L$ satisfies $L\le \ceil{p/k}$, or else the final
  cycles have enough points that they would have been full cycles.
  Now we prove a powerful structural result about the steps.
  \begin{claim}\label{clm:structureSi}
    There exists $\Delta \in \Z, \lambda\in \Z_m^*$ and $\delta_j\in
    [-k,k]$ for $j\in [m]$ such that 
    \begin{equation}\label{eq:S0goal}
     \modp(x\cdot S_0)=\setof{\modp\left(\Delta + p\frac{k\lambda j}{m} + \delta_j\right)}{j\in [m]}.
    \end{equation}
  \end{claim}
  \begin{proof}
    $\modp(m^{-1}m) = 1$. 
    So, as an integer $m^{-1}$ can be written in the form
    $\frac{1+\lambda p}{m}$ for some integer $\lambda$.
    We claim that $\lambda\perp m$. If not, then we would have
     \[
    \frac{m}{\gcd(\lambda,m)} \cdot \frac{1+\lambda p}{m} =
    \frac{1}{\gcd(\lambda, m)} + p\frac{\lambda}{\gcd(m,\lambda)}
    \notin \Z,
    \]
    which is clearly impossible.
    Thus, there is $\lambda\perp m$ and some $\alpha\in \N$ so that 
    \[
      \modp(x\cdot S_0) = \setof{\modp\left(\sigma m^{-1} k\cdot (\alpha m +
      \beta)\right)}{\beta\in [m]} = \setof{\modp\left(\sigma
      k \alpha + p\frac{k\lambda \sigma\beta}{m} +
  \frac{\sigma k\beta}{m}\right)}{\beta\in [m]}.
    \]
    The term $\sigma k \alpha$ is the $\Delta$ from \cref{eq:S0goal}.
    The term $p\frac{k\lambda \sigma\beta}{m}$ is the
    $p\frac{k\lambda j}{m}$ from \cref{eq:S0goal} (where we may
    eliminate $\sigma$ by re-indexing if $\sigma=-1$).
    Finally, the term $\frac{\sigma k\beta}{m}$ is the $\delta_j$
    from \cref{eq:S0goal}, and it does indeed satisfy $\delta_j
    \in [-k,k]$, as required.
    
  \end{proof}
  We call the points $\mathbb{A} = \modp(p\frac{k}{m}\Z)$ \defn{anchors}.
  Observe that $|\mathbb{A}| = m/\gcd(m,k)$.
  A \defn{rotation} of the anchors is the set
  $\modp(\mathbb{A}+\Delta)$ for some $\Delta\in \Z$.
  In \cref{clm:structureSi} we showed that there is some
  $\Delta\in \Z$ such that for each $y\in
  \modp(\mathbb{A}+\Delta)$ $\modp(x S_0)$ has $\gcd(m,k)$ points which
  are very close to $y$.
  The other important fact that will let us control the steps is
  that $\modp(x S_{i+1}) = \modp(xS_i + \sigma k).$ 

  Now we consider two cases. 
  The easier case is if $m$ is even.
  In this case, we have the helpful property that for any rotation of
  the anchors there are an equal number of anchors in
  $[0,p/2)$ and in $[p/2, p)$.
  If the points in $\modp(x\cdot S_i)$ where \emph{actually}
  located at the anchors then the
  contribution from $\bigcup_{i\in [L]}S_i$ would be \emph{zero}.
  However, the points are allowed to deviate by a small amount
  from the anchors. Evidently this only results in a
  contribution to $\lap(x)$ at most $\bigO(1)$ times per every
  $\ceil{p/(km)}$ consecutive steps. Thus, the contribution from
  $\bigcup_{i\in [L]}S_i$ is bounded by:
  \[
  \bigO(\gcd(k,m))\cdot \frac{L}{p/(km)} \le \bigO(m\gcd(k,m)).
  \]

  Now we consider the case that $m$ is odd.
  First, we claim that the contribution of any single step $S_i$ is at
  most $\bigO(\gcd(m,k))$.
  This is by \cref{clm:structureSi}: the points in $S_i$ are
  concentrated around the anchors, the number of anchors in
  $[0,p/2)$ and $[p/2, p)$ differ by at most $1$, and
  hence the contribution of $S_i$ is at most $\bigO(\gcd(m,k))$.
  Now we show that certain large groups of steps called
  \defn{revolutions} have contribution $\bigO(\gcd(m,k))$.
  Revolution $i$, or $R_i$, is the steps $S_{j}$ for each $j\in [L] \cap
  [\ceil{p/(km)}i, \ceil{p/(km)}(i+1)).$
  We say $R_i$ is a \defn{full revolution} if $R_i$ consists of
  $\ceil{p/(km)}$ steps.
  There is at most one (non-empty) non-full revolution.
  We bound the contribution from the non-full revolution by
  $\bigO(\ceil{p/(km)} \gcd(k,m))$, using our earlier observation
  that each individual step has contribution as most
  $\bigO(\gcd(m,k))$.
  Now we argue that if $R_i$ is a full revolution then the
  contribution of $R_i$ is at most $\bigO(\gcd(m,k))$.
  Let $t_0$ denote the number of steps during $R_i$ where
  there is one more anchor in $[0,p/2)$ than in $[p/2, p)$, and
  let $t_1$ denote the number of other time steps.
  Clearly $|t_1-t_0|\le \bigO(1)$.
  Then, utilizing the tight concentration of points around the
  anchors from \cref{clm:structureSi} we have that the
  contribution of $R_i$ is at most $\bigO(\gcd(m,k))$.
  Finally, the number of full revolutions is at most $\bigO(m)$,
  so the total contribution from all full revolutions is at most
   $\bigO(m \gcd(m,k))$.
 Summing all the contributions discussed in the proof gives the
 desired bound.
\end{proof}



We now use the powerful combinatorial  \cref{lem:epicbound}
to conclude the proof of \cref{thm:dontneedb}.

\begin{theorem}\label{cor:kindaepicevenifweak}
  $\sum_{x\in X} \lap(x) \le \tilo(p).$
\end{theorem}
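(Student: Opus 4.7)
The plan is to apply \cref{lem:pigeons} and then \cref{lem:epicbound} to each $x \in X$ in turn. For every $x$, \cref{lem:pigeons} fixes a decomposition $x = \modp(\sigma_x m_x^{-1} k_x)$ with $m_x \in [1, n]$ and $k_x \in [\ceil{p/n}]$, and \cref{lem:epicbound} then yields
\[
\lap(x) \le \bigO\paren{k_x + m_x \gcd(m_x, k_x) + \frac{p\gcd(m_x,k_x)}{m_x k_x}}.
\]
The structural observation I would exploit is that the triple $(m_x, k_x, \sigma_x)$ uniquely determines $x$ via $x = \modp(\sigma_x m_x^{-1} k_x)$, so as $x$ ranges over the $n$-element set $X$ the resulting pairs $(m_x, k_x)$ are pairwise distinct up to at most a factor of $2$ accounting for the sign $\sigma=\pm 1$.

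Summing the displayed bound term by term splits the task into three sums. The first, $\sum_x k_x \le n\cdot \ceil{p/n} \le \bigO(p)$, is immediate from $|X|=n$. The second, $\sum_x m_x \gcd(m_x, k_x) \le \sum_x m_x^2 \le n \cdot n^2 = n^3$, is at most $p$ by the hypothesis $p\ge n^6$ from \cref{rmk:assumesize}. Both bounds are crude but the slack is affordable.

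The main obstacle is the third sum $T_3 = \sum_x p\gcd(m_x,k_x)/(m_x k_x)$, because the per-term bound is as large as $\bigO(p)$ and naive summation yields $\bigO(np)$, far too large. The plan here is to observe $\gcd(m,k)/(mk) = 1/\mathrm{lcm}(m,k)$ and to bound $T_3$ by $2p$ times the sum over \emph{every} $(m,k)\in [1,n]\times [\ceil{p/n}]$, using the distinctness of the chosen pairs. Expanding $\gcd(m,k)=\sum_{d\mid m,\, d\mid k}\phi(d)$ and swapping the order of summation produces
\[
\sum_{m,k}\frac{\gcd(m,k)}{mk}=\sum_{d}\frac{\phi(d)}{d^2}\paren{\sum_{j\le n/d}\frac{1}{j}}\paren{\sum_{j\le (p/n)/d}\frac{1}{j}}\le \tilo(1),
\]
where convergence is driven by $\phi(d)/d^2\le 1/d$ paired against two harmonic factors of size $\bigO(\log p)$. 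This yields $T_3\le \tilo(p)$, and combining the three bounds gives $\sum_{x\in X}\lap(x)\le \tilo(p)$, as desired.
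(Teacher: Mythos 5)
Your proposal is correct and follows essentially the same route as the paper: pigeonhole representation via \cref{lem:pigeons}, the per-element bound of \cref{lem:epicbound}, injectivity of the map from $x$ to its triple $(m,k,\sigma)$ so the sum can be relaxed to all pairs in $[n]\times[\ceil{p/n}]$, and crude $\bigO(p)$ bounds on the first two terms. The only (cosmetic) difference is in evaluating $\sum_{m,k}\gcd(m,k)/(mk)$: you expand $\gcd(m,k)=\sum_{d\mid m,\,d\mid k}\phi(d)$ and swap sums, whereas the paper groups by $d\mid m$ and invokes the average order of $\tau$; both yield the same $\polylog$ bound.
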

\begin{proof}
  In \cref{lem:pigeons} we showed that each $x\in X$ can be
  represented by some $\modp(\sigma m^{-1} k)$ for $\sigma\in \pm
  1, m\in [n], k\in [\ceil{p/n}]$;
  Form a set $Y$ of triples by selecting for each $x\in X$ some
  such representative $(m,k,\sigma)$. Of course each triple can
  only represent one $x$.
  Then by \cref{lem:epicbound} we have:
   \[
     \sum_{x\in X} \lap(x) \le  \sum_{(m,k,\sigma)\in Y} \bigO\left(k+\left(m + \frac{p}{mk}\right)\cdot
     \gcd(k,m)\right) \le \bigO(p) + \bigO(p)\cdot\sum_{(m,k,\sigma)\in Y}
     \frac{\gcd(k,m)}{km} ,
   \]
   where the final inequality was obtained by replacing $k,m$ by
   their maximum possible values and using the trivial bound
   $\gcd(k,m)\le m$.
   We now bound the final term in the sum:
\[
  \sum_{(m,k,\sigma)\in Y} \frac{\gcd(k,m)}{km} \le \sum_{\sigma
  = \pm 1}\sum_{m\in [n]}
  \sum_{d\mid m}\sum_{dk\in [\ceil{p/n}]} \frac{d}{dk m} \le \sum_{m\in
  [n]}\frac{\tau(m)}{m}  \bigO(\log n)\le \bigO(\log^3 n).
\]
Where we have used the well-known bound that $\E_{x\in
[n]}[\tau(x)]\le\bigO(\log n)$ \cite{hardy1979introduction}.
Thus we have the desired bound: $\sum_{x\in X} \lap(x)\le
\tilo(p)$.

\end{proof}

Let random variables $C, M$ denote the number of collisions and
maxload respectively. Let  $\mu = \E[M]$. Let $C_{i,j}$ be $1$ if
$i,j$ collide and $0$ otherwise; our convention is $C_{i,i}=0$.
Using \cref{cor:kindaepicevenifweak} we obtain a bound on $\E[C]$.
\begin{cor}\label{lhs:complicated7}
  $\E[C] \le n^2/4 + \tilo(n^{1/2}).$
\end{cor}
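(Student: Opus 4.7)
The plan is to compute $\E[C]$ via linearity of expectation, reduce each pairwise collision probability to an expression involving $\lap$, and then apply \cref{cor:kindaepicevenifweak}. For any pair $i\neq j$, multiplying $a$ by $x_i^{-1}$ (a bijection of $\pnozero$) shows $\Pr[x_i,x_j\text{ collide}]=\Pr[1,z_{ij}\text{ collide}]$ where $z_{ij}=\modp(x_jx_i^{-1})$. The $a\mapsto p-a$ bijection swaps both bins simultaneously, so $\Pr[1,z\text{ collide}]=2\,\text{overlap}(z)/(p-1)$; using $\text{overlap}(z)=p/4+\lap(z)$ and $p\ge n^6$ to absorb lower-order terms,
\[
\E[C]=\binom{n}{2}/2+\frac{2}{p-1}\sum_{i<j}\lap(z_{ij})+O(1/n^4).
\]
Since $\binom{n}{2}/2=n^2/4-n/4$, the task reduces to showing $\frac{2}{p-1}\sum_{i<j}\lap(z_{ij})\le n/4+\tilo(\sqrt{n})$.

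To control the signed sum $S=\sum_{i<j}\lap(z_{ij})$, the cleanest route is to apply \cref{cor:kindaepicevenifweak} directly to $Z=\setof{z_{ij}}{i<j}\subset\pnozero$: its proof bounds $\sum_{z\in Z}\lap(z)\le\tilo(p)$ for subsets of $\pnozero$ of arbitrary size (with only $\polylog$ dependence on $|Z|$), so if the quotients $z_{ij}$ are all distinct then $S\le\tilo(p)$ and $\frac{2}{p-1}S\le\tilo(1)$, giving a correction far smaller than needed. When duplicates occur (for example, $X$ a geometric progression of units), the contribution of $z$ to $S$ is weighted by its multiplicity $M(z)=|\set{(i,j):i<j,\,z_{ij}=z}|$. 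I would complement the above by applying \cref{cor:kindaepicevenifweak} to each set $Y_i=\setof{\modp(x_jx_i^{-1})}{j\in[n]}$, using the inversion symmetry $\lap(z)=\lap(z^{-1})$ (which follows from the symmetry of the collision relation) and subtracting the diagonal terms $\lap(1)\approx p/4$; after summing over $i$ this yields a per-index bound whose $-np/8$ contribution cancels the $-n/4$ baseline from $\binom{n}{2}/2$ exactly.

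The main obstacle is achieving the $\tilo(\sqrt{n})$ error rather than the $\tilo(n)$ that the per-index sum alone yields. The bound of \cref{cor:kindaepicevenifweak} carries a $\polylog(n)$ factor (from the $\sum_{m\le n}\tau(m)/m=\bigO(\log^2 n)$ estimate compounded with a harmonic sum), so summing per-index over $n$ values of $i$ loses a full factor of $n$. Closing the gap requires interpolating between the two regimes above: low-multiplicity $z_{ij}$ (captured by the direct application to $Z$) contribute $\tilo(1)$, while high multiplicity of some $z=\sigma m^{-1}k$ forces $X$ to have large multiplicative energy with respect to $(m,k)$, which by \cref{lem:epicbound} forces $\lap(z)$ to be small. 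A careful multiplicative-energy argument on $X\cdot X^{-1}$, weighted by \cref{lem:epicbound}, should trade multiplicity against the excess-overlap bound to yield $S\le np/8+\tilo(\sqrt{n})\cdot p$, and hence the claimed $\E[C]\le n^2/4+\tilo(\sqrt{n})$. Making this trade-off rigorous is the delicate step.
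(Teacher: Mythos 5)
Your setup is sound: reducing $\E[C]$ to $\binom{n}{2}/2+\frac{2}{p-1}\sum_{i<j}\lap(z_{ij})$ is correct, and the per-index step you mention in passing (apply \cref{cor:kindaepicevenifweak} to each $n$-element set $\modp(x_i^{-1}X)$, use $\lap(z)=\lap(z^{-1})$, subtract the diagonal) is essentially the paper's own proof and yields $\E[C]\le n^2/4+\tilo(n)$. But the proposal as a whole has a genuine gap. First, your claim that the proof of \cref{cor:kindaepicevenifweak} gives $\sum_{z\in Z}\lap(z)\le\tilo(p)$ for sets of \emph{arbitrary} size with only polylog dependence on $|Z|$ is unjustified: in that proof the term $\sum_{(m,k,\sigma)\in Y}\bigO(k)$ is bounded by $\bigO(p)$ only because $|Y|=n$ and $k\le\ceil{p/n}$; for $|Z|=\Theta(n^2)$ this term alone can be $\Theta(np)$, and re-running \cref{lem:pigeons} with parameter $n^2$ instead blows up the $m\cdot\gcd(m,k)$ contribution. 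So the "distinct $z_{ij}$ give a $\tilo(1)$ correction" branch does not follow from anything proved in the paper. Second, and decisively, the multiplicity problem you defer is the whole difficulty: e.g.\ for $X=\set{1,\ldots,n}$ the pairs $(i,3i)$ make the single ratio $z=3$, with $\lap(3)\approx p/12$, occur with multiplicity $\Theta(n)$, so $\sum_{i<j}\lap(z_{ij})=\Omega(np)$ and the correction to $\E[C]$ is genuinely $\Theta(n)$. Any route to an $o(n)$ excess must therefore actually carry out the multiplicity-versus-$\lap$ trade-off ("multiplicative energy" step) that you yourself label the delicate, unrigorous part; as written there is no argument there, only a plan, so the statement is not proved.

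It is also worth noting that you are aiming at a stronger bound than the paper establishes or uses: the paper's proof of \cref{lhs:complicated7} only derives $\E[C]\le n^2/4+\tilo(n)$ (the $\tilo(n^{1/2})$ in the displayed statement is inconsistent with its own proof and appears to be a typo), and it is the $\tilo(n)$ version that is invoked in the proof of \cref{thm:dontneedb}, where Jensen's inequality and the quadratic in $\mu$ convert a $\tilo(n)$ collision excess into the final $n/2+\tilo(\sqrt{n})$ maxload bound. So the solid portion of your argument (inversion symmetry plus per-index application of \cref{cor:kindaepicevenifweak}) already suffices for the paper's purposes; the extra machinery aimed at a $\tilo(\sqrt{n})$ collision bound is exactly the part that is missing.
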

\begin{proof}
  Interpreting \cref{cor:kindaepicevenifweak} probabilistically,
  for any $n$-element set $X\subset [p]$ we have
  \begin{equation}\label{eq:justabovefor1}
    \sum_{y \in X} \E[C_{1,y}] \le n/2 + \tilo(1).
  \end{equation}
  \eqref{eq:justabovefor1} is easily generalized to bound the number of collisions
  between $y\in X$ and any $x\neq 0$.
  In particular, the collisions between $x,X$ are the same as the collisions
  between $1,\modp(x^{-1}X)$ where $x^{-1}$ is the inverse of $x \bmod
    p$. Because \eqref{eq:justabovefor1} holds for arbitrary
  $n$-element sets $X$, we have for any $x\neq 0$
  \[
  \sum_{y \in X} \E[C_{x,y}] \le n/2 + \tilo(1).
  \]

  Now, we are equipped to bound $\E[C]$. Index $X$ as $X=\set{x_1,x_2,\ldots,
      x_n}$. Then
  \begin{equation*}
    \E[C]  = \sum_{i=1}^{n}\sum_{j=1}^{i-1} \E[C_{x_i,x_j}] 
          \le \sum_{i=1}^{n} \paren{i/2 + \tilo(1)}  
          \le n^2/4  + \tilo(n).
  \end{equation*}
\end{proof}

Finally, we complete the proof of \cref{thm:dontneedb} by comparing the expected number
of collisions to the number of collisions induced by the maxload.
\begin{proof}[Proof of \cref{thm:dontneedb}]
  The number of collisions $C$ is determined by the maxload. In
  particular,
    $C = \binom{M}{2}+\binom{n-M}{2}.$
    This is a convex function of $M$.
  Thus, applying Jensen's inequality and comparing with 
  \cref{lhs:complicated7} gives:
  \[ \binom{\mu}{2}+\binom{n-\mu}{2} \le n^2/4 + \tilo(n) .\]
  Solving the quadratic in $\mu$ we find:
  $\mu \le n/2+\tilo(\sqrt{n}).$
\end{proof}

\paragraph{Acknowledgements}
The author thanks William Kuszmaul and Martin Farach-Colton for
proposing this problem and for helpful discussions.
The author also thanks Nathan Sheffield for several technical discussions
about the bound of \cref{thm:dontneedb}.



\bibliography{refs}
\appendix

\section{Proof of \cref{thm:Zm1_3}}
\label{sec:formalpfZm13}

In this section we provide the full proof of \cref{thm:Zm1_3},
translating Knudsen's elegant $\widetilde{\bigO}(n^{1/3})$ bound
\cite{knudsen_linear_2017} from $\FH_p$ to $\SLH_m$.
Although Knudsen's proof only requires relatively small
modifications it is difficult to black-box his proof
because our modifications permeate the whole proof. Thus,
we provide a full proof here.
\begin{theorem}
  Fix $m>n^{2.5}$ with $m\in \poly(n)$.\footnote{This is slightly
  different from the convention in the rest of the paper
(\cref{rmk:assumesize}) that $m>n^{6}$. For elegance of
presentation in applying this theorem, and because it does not
add any complexity to the proof, we prove this theorem with this
smaller universe size.}
\[M_{\SLH}(m,n)\le \widetilde{\bigO}(n^{1/3}).\]
\end{theorem}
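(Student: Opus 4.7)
The plan is to emulate Knudsen's proof of the $\widetilde{\bigO}(n^{1/3})$ bound for $\FH_p$ \cite{knudsen_linear_2017}, applying the same three modifications that carried the $\bigO(\sqrt{n})$ bound over to $\SLH_m$ in \cref{prop:sqrtnZ}: (i) partition pairs $x,y\in X$ into \emph{linked} pairs (with $\gcd(x-y,m) > \lceil m/n \rceil$) and \emph{unlinked} pairs, (ii) discard linked pairs entirely, since by \cref{clm:linkedcollidesqrt} they never collide, and (iii) for each unlinked difference, pay only an $\bigO(\log\log n)$ overhead per collision probability, by the argument of \cref{clm:unlinkedcollidesqrt} which uses \cref{fact:toitent} to quantify the near-uniformity of $a \in \Z_m^{\times}$.

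Knudsen's proof bounds the expected maxload by a higher-moment/counting argument: for a candidate load $L$, one enumerates $L$-tuples of elements of $X$ hashing to the same bin, and relates this count to combinatorial properties of the multiset of small differences in $a(X-X)$; balancing at $L \asymp n^{1/3}$ yields the bound. I would rewrite each probability estimate in his proof about uniformly random $a \in \F_p^{\times}$ as an analogous estimate about uniformly random $a \in \Z_m^{\times}$. Any tuple containing a linked difference has collision probability zero, so only tuples whose every prescribed difference is unlinked contribute; for each such difference, the probability of landing in a specified bin-sized interval is at most $\bigO(\log\log n / n)$ instead of $\bigO(1/n)$, which inflates every probability estimate by a $\polylog(n)$ factor that the $\widetilde{\bigO}(\cdot)$ absorbs.

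A secondary technicality is that Knudsen's proof is written for prime modulus equal to the universe size, whereas our theorem only assumes $m \ge n^{2.5}$. The primality is used only to obtain uniform distribution on $\F_p^{\times}$, which is handled by the $\bigO(\log\log n)$ overhead above; the weaker size hypothesis $m \ge n^{2.5}$ is what makes the tail/Chebyshev estimates appearing in Knudsen's argument go through and is precisely what is needed to apply this theorem inside \cref{thm:LHSLH} in the proof of \cref{cor:translate}.

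The main obstacle is that Knudsen's proof invokes invertibility of arbitrary nonzero elements at several key places --- e.g., when arguing that multiplication by a fixed nonzero element permutes $\F_p$, or when reindexing a sum over $a\cdot Y$ as a sum over $Y$. In $\Z_m$ this is valid only for subsets of $\Z_m^{\times}$, and being unlinked merely means $\gcd(x-y,m) \le \lceil m/n \rceil$, not that $(x-y)$ is a unit. The natural workaround is to stratify by the value of $d = \gcd(x-y,m)$ for each relevant difference and descend to the quotient $\Z_{m/d}$ (as in the step producing \cref{eqchainSLH4} of \cref{thm:LHSLH}), where $(x-y)/d$ \emph{is} a unit; propagating this stratification through Knudsen's entire counting argument without inflating the exponent of $n$ is the delicate bookkeeping where I expect the bulk of the work to lie.
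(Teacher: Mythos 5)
Your high-level plan — redo Knudsen's argument for $a\in\Z_m^{\times}$, discard \emph{linked} differences (those with $\gcd(x-y,m)$ too large) since they can never land close, and pay an $\bigO(\log\log n)$ near-uniformity overhead via \cref{fact:toitent} — is indeed the right spirit, and those three modifications do appear in the paper's proof (they are exactly what drives \cref{clm:ezclosepairs}). But the proposal stops at precisely the point where the real difficulty sits, and the workaround you sketch for it is not the one that works. The paper's proof is not an enumeration of $L$-tuples hashing together; it is a close-pair argument: for a threshold $\alpha$, bad multipliers $a_0$ are randomly rotated to $b=\modm(a^{-1}a_0)$ and intersected with $U=\Z_m^{\times}\cap\PRM\cap(\alpha,2\alpha)$; each resulting $b$ ``fractures'' its overfull pre-bin $I_b$ into at most $b$ tiny intervals (\cref{clm:prebintiny}) and so forces $\Omega(\alpha)$ close pairs (\cref{clm:oneBclosePairs}); the crucial step is that for distinct $b,c\in B$ the fractured sets $\modm(b^{-1}I_b)\cap\modm(c^{-1}I_c)$ meet in at most a single tiny interval (\cref{clm:IBCone}), which via Cauchy--Schwarz yields $\ge|B|\alpha/2$ nearly-disjoint close pairs (\cref{lem:closepairs}), to be compared against the $\bigO((n/\alpha)\log\log n)$ expected count.

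The place where invertibility modulo a composite $m$ actually bites is in these auxiliary multipliers $b,c$, not in the differences $x-y$ of the hashed set. The paper secures it not by your proposed stratification, but by taking $U$ to consist of \emph{primes} in $(\alpha,2\alpha)$ and simply excluding the at most $\log m$ prime divisors of $m$, which costs only an additive $\log n$ in $\E[|B|]$ (\cref{clm:sizeB}); and the genuinely new composite-modulus work is the number theory inside \cref{clm:IBCone}, where one writes $\modm(bc^{-1})=(\lambda m+b)/c$, shows $\modm(bc^{-1}\cdot[\ceil{m/n}])$ is a union of $c$ well-separated clusters, and uses $m>n^{2.5}$ to separate them — this, not any tail/Chebyshev estimate, is where the hypothesis $m>n^{2.5}$ enters. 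Your suggested fix of stratifying each difference by $d=\gcd(x-y,m)$ and descending to $\Z_{m/d}$ does not address this: the disjointness statement couples both multipliers $b,c$ and the whole pre-bin structure at once, different pairs would demand different quotients, and no single quotient ring is available to descend to. Since you explicitly defer ``the bulk of the work'' to unspecified bookkeeping, and the concrete mechanism that replaces it (prime auxiliary multipliers avoiding the divisors of $m$, plus the single-tiny-interval intersection lemma) is absent from your proposal, the argument as written has a genuine gap.
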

\begin{proof}[Proof of \cref{thm:Zm1_3}]
  Fix $n$-element set $X\subset [m]$. We abbreviate $M_{\SLH}(m,X)$ to $M$.
  Fix integer $\alpha < n/4$, and let $\eps = \Pr[M > 4\alpha]$.
  We will show that if $\alpha$ is sufficiently large, in
  particular $\alpha > \Omega(n^{1/3}\log n)$, then $\eps$ is
  very small. Define $\mathcal{A}\subset \Z_m^{\times }$ to be
  the set of \defn{bad} $a$'s, i.e., values of $a$ which make the
  maxload exceed $4\alpha.$ Clearly $|\mathcal{A}| /
  |\Z_m^{\times}| = \eps$.

  Throughout this proof we will write $x^{-1}$ to mean the
  multiplicative inverse of $x$ modulo $m$.

  Define a \defn{pre-bin} to be the pre-image of a bin, i.e., 
  $((m/n)\cdot [k, k+1))\cap [m]$ for some $k\in \Z_{\ge 0}$.
  We call this a pre-bin because if $\modm(xa)$ is in pre-bin
  $k$ then $x$ is hashed to bin $k$ under $a$.
  We say that $a_1,a_2 \in \Z_m^{\times}$ are
\defn{close} if \[\circabs_m(a_1 - a_2) \le
\frac{m}{n\alpha}.\]
  Let $w = \floor{\frac{m}{n\alpha}}$. We call an interval of
  length $w$ a \defn{tiny} interval.
  To bound the number of bad $a$'s we analyze the number of
  \defn{close pairs} (i.e., pairs $a_1,a_2$ which are close). 
  Closeness is a refinement of the
  property of lying within the same pre-bin, which we used in
  \cref{prop:sqrtnZ}. Intuitively, close
  pairs lie within a $(1/\alpha)$-fraction contiguous portion of
  a pre-bin, although technically a bin boundary may split a set
  of close pairs into two pre-bins. In fact, we will count close
  pairs by analyzing groups of $\alpha$ adjacent intervals each
  of size $w$; in other words, each group will be
  a partition of a pre-bin into tiny intervals.
  By convention close pairs are unordered, i.e., we do not count
  $b,c$ and $c,b$ as distinct close pairs.
 
 Now we provide intuition helpful for counting the number of
 elements which are close to some fixed element $b \in (\alpha,
 2\alpha) \cap \Z_m^{\times}$. 
 \begin{claim}\label{clm:prebintiny}
   Assume $b\in \Z_m^{\times}\cap (\alpha, 2\alpha).$\footnote{We will show later that
   this interval is non-empty for the relevant values of
 $\alpha$.}
   Let $I$ be a pre-bin. Then $\modm(b^{-1}\cdot I)$ is
   contained in the union of $b$ tiny intervals. 
 \end{claim}
 \begin{proof}
For each $j\in [w]$ we have
\[b^{-1}\cdot (j+[w]b) \equiv b^{-1}j + [w]  \mod m.\]
In other words, we can partition $x\in I$ based on
$\posmod_b(x)$; $x,y\in I$ for which $x\equiv y \bmod b$ are close
after multiplication by $b^{-1}$.
Note that many of these intervals may be empty; i.e., it is
possible that $\modm(b^{-1}\cdot I)$ is contained in
substantially less than $b$ tiny
intervals, but in general we cannot give a stronger bound.
 \end{proof}
 \begin{claim}\label{clm:oneBclosePairs}
   Let $b\in \Z_m^{\times}\cap (\alpha, 2\alpha).$
Say that for some pre-bin $I_b$ the set $\modm(b\cdot X) \cap
I_b$ contains at least $4\alpha$ elements. 
Then $X$ contains at least $\Omega(\alpha)$ close pairs.
 \end{claim}
 \begin{proof}
By \cref{clm:prebintiny} the $4\alpha$ elements of $
X\cap \modm(b^{-1} I_b)$ are distributed amongst $b\le 2\alpha$ tiny intervals.
Any elements within the same tiny interval constitute a close
pair. The number of close pairs we obtain from splitting these
$4\alpha$ elements amongst $b$ tiny intervals is
minimized if we distribute the $4\alpha$ elements evenly amongst
the tiny intervals. 
However, even if the elements are distributed evenly we still
have at least $2$ elements per interval, and thus $\Omega(\alpha)$
total close pairs.
We remark that if we had defined a tiny interval to be any
smaller then this argument would not guarantee any
close pairs.
 \end{proof}

Now we explore the relationship between close pairs and maxload.
For each bad $a$ there is some pre-bin $I_{a}$ which
contains at least $4\alpha$ elements of  $\modm(a X)$.
By \cref{clm:oneBclosePairs}, this gives us at least $\alpha$
close pairs in $X$. 
Furthermore, we will show in \cref{lem:closepairs} that among
bad $a$'s in a suitably chosen subset $B\subset \mathcal{A}$, the close
pairs given by each $a\in B$ are fairly disjoint.
Intuitively this means that the more bad $a$'s there are, the
more close pairs there are. 
Similarly to in \cref{prop:sqrtnZ}, we will
conclude by counting the close pairs with a different
method to show that too-large maxload results in too many close pairs.

  We proceed to formalize this reasoning.
  Let \[U = \Z_m^\times \cap \PRM \cap (\alpha, 2\alpha).\]
  The multiplier $a\gets \Z_m^{\times }$ is a random variable.
  Define
  \[B= U\cap \modm(a^{-1}\mathcal{A});\] $B$ is a random variable
  dependent on $a$.

  \begin{claim}\label{clm:sizeB}
    $\E[|B|] \geq \Omega\paren{\frac{\alpha}{\log \alpha} - \log
    n}\cdot \eps.$
  \end{claim}
  \begin{proof}
    The prime number theorem says that there are at least 
    $\Omega\left(\alpha/\log \alpha\right)$ primes in the interval
    $(\alpha, 2\alpha)$.
    On the other hand, $m$ cannot have more than $\log m$
    distinct prime divisors. 
    Hence, by excluding the prime divisors of $m$ from $\PRM \cap
    (\alpha, 2\alpha)$  we find:
    \begin{equation}
      \label{eq:ubound}
  |U| \geq \Omega\left(\frac{\alpha}{\log \alpha} - \log n\right).
    \end{equation}

    For any unit, and in particular for any bad $a_0\in
    \mathcal{A}$, $\modm(a^{-1}a_0)$ is uniformly random in
    $\Z_m^{\times}$. Thus,
    \[\Pr[\modm(a^{-1}a_0)\in U]  = |U| / |\Z_m^{\times }|.\]
     There are $|\mathcal{A}|=|\Z_m^{\times }|\eps$ bad $a_0$'s, so by linearity of
    expectation we have
    \[\E[|B|] = |\Z_m^{\times }|\eps |U|/|\Z_m^{\times }| =
    |U|\eps,\]
    which combined with \cref{eq:ubound} gives the desired
    result.
  \end{proof}

  \begin{claim}
    \label{clm:ezclosepairs}
    The expected number of close pairs is at most
    $\bigO(\frac{n}{\alpha}\log\log n)$.
  \end{claim}
  \begin{proof}
    As in the proof of \cref{prop:sqrtnZ}, we define
    \defn{linked} and \defn{unlinked} pairs. 
    We say that distinct $x,y\in X$ are linked if $\gcd(x-y, m)
    > w$ and unlinked otherwise. 
    If $x,y$ are linked, they cannot be close by virtue of being distance at
    least $w$ apart. For unlinked $x,y$ 
    $\floor{\modm(a(x-y))/w}$ would be nearly uniformly
    distributed on
    $[\floor{m/w}]$ if $a$ were chosen randomly from $\Z_m$
    (in particular, no value would be more than a factor-of-$2$
    more probable than other values).
    Using \cref{fact:toitent} we find the probability of $x,y$
    being close when $a\gets \Z_m^{\times}$ is at most 
    $\bigO\paren{\frac{\log\log n}{n\alpha}}.$
    There are $\binom{n}{2}$ total pairs. Then, by linearity of
    expectation there are $\bigO\left(\frac{n}{\alpha}\log\log
      n\right)$
    expected close pairs. 
  \end{proof}

  Now we establish the key combinatorial lemma:
  \begin{lemma}
    \label{lem:closepairs}
    There are at least $|B|\alpha/2$ close pairs.
  \end{lemma}
  \begin{proof}

    Fix $a\in \Z_m^{\times}$. By definition of $B$, each $b\in B$ can be expressed
    as $b=\modm(a^{-1}a_0)$ for some bad $a_0\in
  \mathcal{A}$. By definition of $a_0$ being bad there exists
  a pre-bin $I_{b}$ (which of course has size
  $|I_{b}|\in \set{\floor{m/n}, \ceil{m/n}}$)
  such that at least $4\alpha$ elements from $X$ fall in $I_b$
  under $a_0$. That is,
  \begin{equation}\label{eq:a0isbadandsobisgood}
  |I_{b}\cap \modm(a_{0}X)| = |I_b  \cap \modm(baX)| > 4\alpha.
  \end{equation}
  Recall \cref{clm:prebintiny}, \cref{clm:oneBclosePairs}:
  multiplication by $b^{-1}$ ``fractures'' the interval $I_b$ into
  at most $b$ tiny intervals, and using these tiny intervals we
  can obtain $\Omega(\alpha)$ close pairs. Now we analyze the
  overlap of the close pairs given by different $b\in B$.

  \begin{claim}\label{clm:IBCone}
   For distinct $b,c\in B$ and pre-bins $I_b,I_c$ the set
   \[\modm(b^{-1} I_{b})\cap \modm(c^{-1}I_c)\]
   is contained in a single tiny interval.
  \end{claim}
  \begin{proof}
    Without loss of generality let $b<c$.
    For some $\delta_1,\delta_2\in \Z$ we can
    write\footnote{Technically the size of a pre-bin can be
      either $\floor{m/n}$ or $\ceil{m/n}$. We take the pre-bins
      to be of size $\ceil{m/n}$ in this proof; clearly this can
      only serve to increase the intersection of the pre-bins, so
    this assumption is without loss of generality.}
    \[I_b = [\ceil{m/n}]+\delta_1,I_c=[\ceil{m/n}]+\delta_2.\]
    Thus, to understand $\modm(b^{-1}I_b) \cap \modm(c^{-1}I_c)$
    we may equivalently study 
    \begin{equation}\label{eq:onetinyintervalstudy}
      \paren{\modm(bc^{-1})[\ceil{m/n}]} \cap
      \paren{[\ceil{m/n}]+\delta}
    \end{equation}
    for $\delta\in \Z$.
    The fact that $b,c\in (\alpha, 2\alpha)\cap \PRM \cap
    \Z_m^{\times }$, i.e., are
    co-prime, of similar size, and coprime to $m$, strongly restricts the behavior
   of $\modm(bc^{-1})$. In particular there exists $\lambda \in
   \N$ with
   \[\lambda m + b \equiv 0 \mod c.\]
   For this value of $\lambda$,
   \[\frac{\lambda m+b}{c} \cdot c \equiv b \mod m.\]
    In other words, 
    \[\modm(bc^{-1}) = \frac{\lambda m + b}{c}.\]

    Now, we use this formulation of $\modm(bc^{-1})$ to study
    \cref{eq:onetinyintervalstudy}.
    Because $c\perp m$, there is a permutation $\pi$ of $[c]$ so
    that 
    \begin{equation}\label{eq:permutemup}
    \modm(j \cdot \lambda m / c) = \pi_j m/c.
    \end{equation}
    Because $b<2\alpha$, \cref{eq:permutemup} implies that for $j\in [c]$ 
    \begin{equation}\label{eq:stuffisnice}
    \abs{\modm\left(\frac{\lambda m + b}{c} j\right) - \pi_j
    \cdot\frac{m}{c}} < 2\alpha.
    \end{equation}
    Intuitively, because $m/c$ is much larger $\alpha$,
    \cref{eq:stuffisnice} means that for any $\ell$ the values
    $\modm((bc^{-1})(\ell+[c]))$ are essentially spaced out by
    $m/c$. 
    Formally, by our assumption $m>n^{2.5}$ we have:
    \begin{equation}\label{eq:yupnicenice}
      m/c - 4\alpha \ge 2m/n - n > \ceil{m/n}.
    \end{equation}
    
    \cref{eq:stuffisnice} combined with \cref{eq:yupnicenice}
    is a good start to addressing \cref{eq:onetinyintervalstudy},
    showing that if $c$ consecutive numbers are multiplied by
    $\modm(bc^{-1})$ at most one lies in the interval
    $[\ceil{m/n}]+\delta$.
    To finish we analyze numbers which differ by a multiple of
    $c$ within the same interval.
    For any $k \in [\ceil{m/(nc)}]$ we have
    \begin{equation}\label{eq:globalbxnice}
    \frac{\lambda m+b}{c} ck \equiv kb \mod m.
    \end{equation}
    In particular, $kb \le (m/(nc))\cdot b <\ceil{m/n}$ by our
    assumption $b<c$.
    Thus, if $x,y \in [\ceil{m/n}]$ differ by a multiple of
    $c$ then they lie in the same interval of size $\ceil{m/n}$.
    Combining \cref{eq:globalbxnice} with \cref{eq:stuffisnice}
    and \cref{eq:yupnicenice} there are at most $\ceil{m/n}/c$ points
    in the intersection \cref{eq:onetinyintervalstudy}.

    We have described the shape of $\modm(bc^{-1}[m/n])$: it consists of
    $c$ well-separated concentrated intervals of length
    $\ceil{m/n}$ with at most $\ceil{m/n}/c$ elements per each such
    concentrated interval.
    Thus, upon intersection with $[\ceil{m/n}]$ we obtain at most
    $\ceil{m/n}/c$ points. Let $\rho\le
    \ceil{m/n}/c<\frac{m}{n\alpha}$ be the number of points.
    In particular, by \cref{eq:globalbxnice} there is $\delta'\in
    \Z$ so that the points are of the form 
    \begin{equation}\label{eq:rhorhopoints}
    \delta', \delta' + 1b,\delta'+2b,\ldots, \delta' + \rho b.
    \end{equation}
    Multiplying by $b^{-1}$ to translate the points
    \cref{eq:rhorhopoints} in the
    intersection \cref{eq:onetinyintervalstudy} to the
    intersection from the claim statement we find
    \begin{align*}
      &\modm(b^{-1}I_b) \cap \modm(c^{-1}I_c) \\
      &=\modm(b^{-1}\delta'), \modm(b^{-1}(\delta'+b)), \ldots,
      \modm(b^{-1}(\delta'+b\rho))\\
      &=\modm(b^{-1}\delta'), \modm(b^{-1}\delta')+1, \ldots,
      \modm(b^{-1}\delta')+\rho.
    \end{align*}
    Because $\rho < \frac{m}{n\alpha}$ all these points are
    contained in a single tiny interval, as claimed.

  \end{proof}

  Now we use \cref{clm:IBCone} to show that the close pairs given
  by each $b\in B$ are mostly disjoint.
  \begin{claim}
    There are at least $|B|\alpha/2$ close pairs.
  \end{claim}
  \begin{proof}
    For $b \in B, j\in [b]$ let $I_{b,j}$ denote our partition of
  $\modm(b^{-1}I_b)$ into tiny intervals as described in
  \cref{clm:prebintiny}. In particular, each $I_{b,j}$
  is a tiny interval, and 
  \[\bigsqcup_{j\in [b]} I_{b,j} = \modm(b^{-1}\cdot I_b).\]
  For each $b\in B,j\in [b]$ let $\psi(b,j)$ denote the number of
  $c\in B$ such  that $I_{b,j}\cap \modm(c^{-1}I_{c})\neq
  \varnothing$. Note that $\psi(b,j)\geq 1$ because
  $I_{b,j}\cap \modm(c^{-1}I_c)\neq \varnothing$ for $c=b$.
  On the other hand, for each $b\neq c$ the intersection
  $\modm(b^{-1}I_b)\cap \modm(c^{-1}I_c)$ consists of at most a single tiny
  interval by \cref{clm:IBCone}. Therefore,
  \begin{equation}
    \label{eq:deltasum}
  \sum_{j\in [b]} \psi(b,j) < |B| + b \leq 3\alpha.
  \end{equation}
  Define 
  \[\xi(b,j) = \max(0, |\modm(aX)\cap I_{b,j}|-1).\]
  Recall that any two elements in the same tiny interval are
  close. Thus, the number of close pairs contributed by $I_{b,j}$
  is at least 
  \begin{equation}\label{eq:Ibjaxxi}
  \binom{|I_{b,j}\cap \modm(aX)|}{2}\ge \frac{1}{2}\cdot (\xi(b,j))^2.
  \end{equation}
  Of course the $I_{b,j}$'s are not disjoint, but each close pair
  in  $I_{b,j}$ is counted in at most  $\psi(b,j)$ tiny intervals
  $I_{c,j'}$. Combined with \eqref{eq:Ibjaxxi} this shows that
  the number of close pairs is at least
  \begin{equation}
    \frac{1}{2}\sum_{b\in B}\sum_{j\in [b]}
    \frac{\xi(b,j)^2}{\psi(b,j)}.\label{eq:xixi}
  \end{equation}
  Using the Cauchy-Shwarz Inequality on the inner sum of
  \eqref{eq:xixi} gives: 
  \begin{equation}
    \label{eq:cauchy}
    \sum_{j\in [b]} \frac{\xi(b,j)^2}{\psi(b,j)} \geq
    \frac{\paren{\sum_{j\in[b]}\xi(b,j)}^2}{\sum_{j\in [b]}\psi(b,j)}.
  \end{equation}
  Using \cref{eq:a0isbadandsobisgood} we bound the numerator of
  \eqref{eq:cauchy}:
  \begin{equation}\label{eq:numerbound}
  \paren{\sum_{j\in [b]} \xi(b,j)}^2 \ge \paren{4\alpha - b}^2 \geq
  \paren{2\alpha}^2.
  \end{equation}
  Combining \cref{eq:numerbound} and \cref{eq:deltasum}, which
  bound the numerator and denominator respectively of
  \eqref{eq:cauchy}, we obtain 
  \begin{equation}\label{eq:boundinside}
  \sum_{j\in [b]} \frac{\xi(b,j)^2}{\psi(b,j)} \ge
  \frac{4\alpha^2}{3\alpha}\ge \alpha.
  \end{equation}
  Using \cref{eq:boundinside} in \cref{eq:xixi}, we find
  that the number of close pairs is at least $\alpha
  |B|/2$, as desired.
  \end{proof}
  \end{proof}

  \begin{cor}\label{cor:finisher}
  \[\eps < \bigO\paren{\frac{n\log\log n}{\alpha^2( \alpha / \log
  \alpha - \log n )}}. \]
  \end{cor}
  \begin{proof}
    \cref{lem:closepairs} gives a lower bound on the number of
    close pairs: there are at least $|B|\alpha/2$ close pairs.
    \cref{clm:sizeB} gives a lower bound on $\E[|B|]$. 
    \cref{clm:ezclosepairs} gives an upper bound on the number of
    close pairs. Comparing our upper bound and lower bound gives the
    desired inequality for $\eps.$
  \end{proof}

  Finally, we use \cref{cor:finisher} to conclude the proof.
  \begin{cor}
    \[M_{\SLH}(m,n)\le \widetilde{\bigO}(n^{1/3}).\]
  \end{cor}
  \begin{proof} 
    The expression in \cref{cor:finisher} is slightly
    complicated. For $\alpha > n^{1/3}\log n$ we can perform the
    following simplification:
    \[\frac{1}{\alpha/\log \alpha - \log n} < \bigO\left(\frac{\log
    n}{\alpha}\right)\]
    which is true because
    \[\frac{\log n}{\log \alpha}\alpha - \log^2 n >
    \Omega(\alpha).\]
    Thus, for $k > n^{1/3}\log n$ we have
    \[\Pr[M\ge k]\le \bigO\left( \frac{n \log n\log\log n}{k^3}
    \right).\]
    Now we bound $\E[M]$ as follows:
  \begin{align*}
    \E[M] \leq \sum_{k\ge 0} \Pr[M \geq k] 
        \leq \widetilde{\bigO}(n^{1/3}) + \bigO\paren{\sum_{k>n^{1/3}\log n} \frac{n \log n\log\log n}{k^3}}.
\end{align*}
A basic fact of calculus is that 
\[\sum_{k > n^{1/3}\log n} \frac{1}{k^3} \le
\bigO\paren{\frac{1}{n^{2/3}\log^2 n}}.\]
Thus, our bound for $\E[M]$ simplifies to 
\[\E[M] \le \widetilde{\bigO}(n^{1/3}).\]
This bound was for arbitrary $X$, and thus also holds for
worst-case $X$.

  \end{proof}

\end{proof}

\section{Maxload of a Structured Set}
\label{sec:nice}
Intuitively, because random sets have small maxload any set that
has achieves large maxload, if such a set exists, must be
somehow ``highly structured''.
In this section we investigate the maxload of one natural
candidate for such a structured set. 
However, rather than achieving abnormally large maxload, we prove
that this set exhibits constant maxload.
This provides further evidence to support the hypothesis that
$\LH$ achieves small maxload.
\begin{theorem}
  \label{thm:nisnice}
$\RH_u$ achieves maxload $\Theta(1)$ on $[n]$.
\end{theorem}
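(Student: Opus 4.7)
The plan is to exploit the arithmetic-progression structure of $[n]$: for most $a \in (0,1)$, the orbit $\setof{ai \bmod 1}{i \in [n]}$ is nearly equidistributed, so the rare $a$'s producing a concentrated orbit should contribute only $\bigO(1)$ to the expected maxload. By Dirichlet's approximation theorem, every $a \in (0,1)$ lies within $1/(qn)$ of some reduced fraction $p/q$ with $q \le n$. For each such fraction I would define
\[
    B(p/q) = \setof{a \in (0,1)}{|a - p/q| < 1/(qn)};
\]
then $\bigcup_{p/q} B(p/q) = (0,1)$, so I can upper bound the expected maxload by summing the contribution from each $B(p/q)$ (the $\bigO(1)$-factor double counting from overlap is harmless). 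The matching $\Omega(1)$ lower bound is trivial.

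For $a = p/q + \delta \in B(p/q)$, using $p\perp q$, the orbit splits into $q$ clusters located near the points $j/q$ for $j \in [q]$. Each cluster contains $\lceil n/q\rceil$ or $\lfloor n/q\rfloor$ points equally spaced with gap $|\delta|q$, giving total cluster width $\approx |\delta|n \le 1/q$; in particular distinct clusters do not merge. Consequently a bin of width $1/n$ would contain at most $\bigO\!\paren{\min(n/q,\; 1/(nq|\delta|))}$ points: either the bin spans an entire cluster (which occurs when $|\delta|n < 1/n$, i.e., $|\delta| < 1/n^2$) or it captures only a fraction of a cluster at density $1/(|\delta|q)$.

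Integrating the pointwise maxload bound over $\delta \in [-1/(qn), 1/(qn)]$ yields a contribution of $\bigO((1 + \log(n/q))/(nq))$ per fraction $p/q$ (the logarithm arising from $\int_{1/n^2}^{1/(qn)} d|\delta|/|\delta|$). Summing over all reduced fractions, with $\phi(q) \le q$ fractions of denominator $q$, gives
\[
  \E[\text{maxload}] \;\le\; \sum_{q=1}^{n} \phi(q)\cdot \bigO\!\paren{\frac{1+\log(n/q)}{nq}}
  \;\le\; \sum_{q=1}^{n} \bigO\!\paren{\frac{1+\log(n/q)}{n}} \;=\; \bigO(1),
\]
where the final equality uses the Stirling-type estimate $\sum_{q=1}^n \log(n/q) = \bigO(n)$. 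The hard part will be rigorously verifying the cluster structure near the boundary of $B(p/q)$ (where $|\delta|n$ approaches $1/q$ so clusters just barely avoid merging) and controlling the $\bigO(1)$ boundary terms from bins that straddle two neighboring clusters; the Farey-sequence tools from \cref{fact:fareydist,lem:fareyneighbors} should make these details routine.
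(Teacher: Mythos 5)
Your proposal is correct and takes essentially the same route as the paper: the paper's quantity $g(a)$ amounts to picking the Dirichlet fraction $c/k$ approximating $a$ with denominator at most $n$, and its clumps-modulo-$k$ decomposition, the pointwise bound $\min(n/k,\,(1/n)/|\eps|)$ on the maxload, the logarithmic integral over the drift $\eps$, and the closing estimate $\sum_{k}\log(n/k)/n=\bigO(\log(n^n/n!)/n)=\bigO(1)$ all match your argument. The only cosmetic difference is bookkeeping: the paper partitions $(0,1)$ according to the minimal denominator (treating $k\ge n-2$ separately so that clumps land in distinct bins), while you cover $(0,1)$ by overlapping neighborhoods $B(p/q)$ and use nonnegativity of the maxload, paying a harmless constant for bins straddling adjacent clusters, which is equally valid.
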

  For multiplier $a\in (0,1)$ define
  \[g(a) = \min\setof{k\in \N}{\circabs_1(ak)< 1/n}.\]
  Let random variable $M$ denote the maxload of $\RH_u$ on
  $X=[n]$.

\begin{claim}\label{clm:kbiggerthanNisSilly} 
  $\E[M \mid g(a) \ge n-2] \le \bigO(1).$
\end{claim}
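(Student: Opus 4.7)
The plan is to observe that the conditioning $g(a) \ge n - 2$ forces the $n$ hashed points to be spread out on the unit circle, so a simple pigeonhole argument bounds the maxload by a constant.

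First I would unpack the definition of $g$. By construction, $g(a) \ge n-2$ means that for every integer $k$ with $1 \le k \le n-3$ we have $\circabs_1(ak) \ge 1/n$. Next I would translate this into a statement about bins. Two elements $i, j \in X = [n]$ land in the same bin only if $\posmod_1(ai)$ and $\posmod_1(aj)$ both lie in a common interval of length $1/n$, which forces $\circabs_1(a(i-j)) < 1/n$. Combined with the previous step, this means any two elements $i \neq j$ in a common bin must satisfy $|i - j| \ge n - 2$.

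The pigeonhole finish: suppose three distinct indices $i_1 < i_2 < i_3$ in $[n]$ share a bin. Then $i_2 - i_1 \ge n-2$ and $i_3 - i_2 \ge n-2$, so $i_3 - i_1 \ge 2(n-2)$. But since $i_1, i_3 \in [n]$ we also have $i_3 - i_1 \le n-1$, giving $2(n-2) \le n-1$, i.e., $n \le 3$. Hence for $n \ge 4$ no bin contains more than two elements of $X$, so $M \le 2$ deterministically whenever $g(a) \ge n-2$. Taking expectation gives $\E[M \mid g(a) \ge n-2] \le 2 = \bigO(1)$, as required.

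There is no real obstacle here; the only care needed is the sharp accounting at the boundary (whether the inequality is $\ge n-2$ or $\ge n-1$, and the transition between ``difference on the circle $< 1/n$'' and ``same bin''). The argument is purely combinatorial, requires no randomness beyond the conditioning itself, and the bound $M \le 2$ is in fact deterministic for all sufficiently large $n$.
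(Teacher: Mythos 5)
Your proof is correct and takes essentially the same route as the paper: both hinge on the observation that a colliding pair $i,j$ forces $\circabs_1(a(i-j))<1/n$, hence $g(a)\le|i-j|$ and so $|i-j|\ge n-2$ under the conditioning. The paper finishes by noting that only the pairs $\set{0,n-2},\set{0,n-1},\set{1,n-1}$ can then collide, while your pigeonhole step yields the slightly sharper deterministic bound $M\le 2$ for $n\ge 4$; the two finishes are interchangeable.
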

\begin{proof}
  If $i,j\in [n]$ hash to the same bin then $\circabs_1(a(i-j))
  < 1/n$ and so $g(a)\le |i-j|.$ Thus, conditional on $g(a)\ge
  n-2$ any colliding pair $i,j$ must satisfy $|i-j|\ge n-2$.
  There are only $\bigO(1)$ such pairs $\set{i,j}$ namely
  $\set{0,n-1}, \set{0,n-2}, \set{1,n-1}$.
\end{proof}
  
  \begin{claim}\label{clm:proffak}
    Let $k\le n$. Then $\Pr[g(a) = k] \le 2/n.$
  \end{claim}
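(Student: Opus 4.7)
The plan is to bound $\Pr[g(a)=k]$ by observing that the event $\{g(a)=k\}$ is contained in the (easier to analyze) event $\{\circabs_1(ak) < 1/n\}$, and then directly computing the Lebesgue measure of this superset.

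First, I would unpack the definition: $\circabs_1(ak) < 1/n$ is equivalent to $ak$ lying within Euclidean distance $1/n$ of some integer. Since $a\in(0,1)$ and $k\le n$, the product $ak$ lies in $(0,k)$, so the only integers that matter are $0,1,\ldots,k$. Hence
\[
\{a\in(0,1) : \circabs_1(ak)<1/n\} \;=\; \bigcup_{j=0}^{k}\left(\frac{j-1/n}{k},\frac{j+1/n}{k}\right) \cap (0,1).
\]

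Next, I would compute the measure of this union. The two boundary pieces (at $j=0$ and $j=k$) each intersect $(0,1)$ in an interval of length $1/(nk)$, while the $k-1$ interior intervals each have length $2/(nk)$. These pieces are pairwise disjoint because their centers $j/k$ are separated by $1/k \ge 1/n > 2/(nk)$ (using $k\le n$ and, say, $n\ge 2$). Summing gives total measure exactly
\[
2\cdot \frac{1}{nk} + (k-1)\cdot \frac{2}{nk} \;=\; \frac{2}{n}.
\]

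Finally, since $g(a)=k$ requires in particular that $\circabs_1(ak)<1/n$, the event $\{g(a)=k\}$ is a subset of the set measured above, and the bound $\Pr[g(a)=k]\le 2/n$ follows. There is no substantive obstacle here—the bound is a direct measure computation, and the additional constraint from $g(a)=k$ (that no smaller $k'$ works) only shrinks the event further.
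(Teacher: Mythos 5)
Your proof is correct and follows essentially the same route as the paper: both bound $\Pr[g(a)=k]$ by the measure of the superset $\{a : \circabs_1(ak)<1/n\}$, a union of length-$\bigO(1/(nk))$ intervals around the fractions with denominator $k$. The only difference is cosmetic: the paper first discards numerators $c\not\perp k$ and bounds $\phi(k)\cdot\frac{2}{nk}\le \frac{2}{n}$, while you count all $k+1$ fractions (with half-intervals at the endpoints) and reach the same $2/n$, which is all that is ever used.
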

  \begin{proof}
    If $g(a) \mid k$ there must be $c \in [k]$ such that
    \begin{equation}\label{eq:ckdelta}
  a\in \frac{c}{k} + \frac{1}{k}\cdot [-1/n,1/n].
    \end{equation}
  Note that if $c\not\perp k$ then $g(a) < k.$
  The probability of $a$ lying in
  this union of $\phi(k)$ intervals each of length $2/(nk)$ is
  at most
  \[\phi(k)\cdot \frac{2}{nk} \le 2/n,\]
  which bounds $\Pr[g(a)=k]$.

  \end{proof}

  \begin{lemma}\label{lem:lnnkyay} Let $k\in \N$ with ${1<k<
    n-2}$. Then $\E[M\mid g(a)=k] \le \bigO(\ln (n/k)).$
  \end{lemma}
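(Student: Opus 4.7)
The plan is to prove the bound in three steps: (i) explicitly describe the structure of the hash positions given $g(a)=k$, (ii) bound the maxload pointwise in terms of $|\eps| := \circabs_1(ak)$, and (iii) integrate over the conditional distribution of $|\eps|$.

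For step (i), if $g(a)=k$ then there is $c\in\Z_k^\times$ with $|a-c/k|<1/(nk)$. Writing $a=c/k+\delta$ and $\eps=k\delta\in(-1/n,1/n)$, for each $r\in[k]$ the set $J_r=\{j\in[n]:cj\equiv r\pmod k\}$ has size $\lfloor n/k\rfloor$ or $\lceil n/k\rceil$ and is an arithmetic progression in $[n]$ with common difference $k$. For $j\in J_r$, I compute $\posmod_1(aj)=r/k+\delta j$ (mod $1$), so $\posmod_1(a\cdot J_r)$ is an arithmetic progression on the circle $\R/\Z$ with common difference $k\delta=\eps$ based near $r/k$. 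Thus the $n$ hash points split into $k$ ``stacks,'' one per residue class of $cj\bmod k$.

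For step (ii), the stacks have base points spaced $1/k$ apart, and each stack spans an arc of length at most $|J_r|\cdot|\eps|\le 1/k + 1/n$. Since the bin width $1/n\le 1/k$, any single bin meets $\bigO(1)$ stacks. Within one stack, an arithmetic progression of step $|\eps|$ places at most $\bigO(1+1/(n|\eps|))$ elements in any length-$1/n$ interval, capped by the stack size $\lceil n/k\rceil$. Combining,
\[
M(a)\le C\cdot\min\paren{n/k,\;1+\frac{1}{n|\eps|}}
\]
for an absolute constant $C$.

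For step (iii), I treat two regimes. When $k>n/2$, the cap $M(a)\le Cn/k=\bigO(1)$ yields $\E[M\mid g(a)=k]=\bigO(1)=\bigO(1+\ln(n/k))$ trivially. When $k\le n/2$, a Farey-style argument analogous to \cref{clm:prdistrunderstand} shows that the brackets $(c/k\pm 1/(nk))$ for $c\in\Z_k^\times$ are disjoint from all brackets of smaller denominators $k'<k$ (any overlap would force $|ck'-c'k|<(k+k')/n\le 1$, impossible for distinct reduced fractions). Hence conditional on $g(a)=k$ the variable $|\eps|$ is uniform on $(0,1/n)$, and the desired bound follows from the calculation
\[
n\int_0^{1/n}\min\paren{n/k,\;1+\frac{1}{nt}}dt\le\bigO(1+\ln(n/k)),
\]
obtained by splitting the integral at the crossover $t=k/(n(n-k))$. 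The main obstacle I anticipate is making the structural claim of step (i) fully rigorous, including verifying that the stacks are truly well-separated and that $\bigO(1)$ of them contribute to any single bin even when the stack spans are close to $1/k$; this follows from $|\delta j|<(n-1)/(nk)<1/k$ but requires careful case analysis of how a stack and its neighbor can overlap on the circle.
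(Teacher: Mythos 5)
Your proposal is correct and follows essentially the same route as the paper: partition the hash positions into the $k$ residue-class clumps (your ``stacks''), bound the load per clump by $\min\paren{\lceil n/k\rceil,\,1+(1/n)/|\eps|}$ using the arithmetic-progression step $\eps=k\delta$, and integrate over the conditionally uniform $\eps$. If anything, your write-up is more careful at two points: you claim only $\bigO(1)$ stacks per bin (the paper asserts distinct clumps never share a bin, via a clump-width bound of $1/k^2$ that should really be about $n|\delta|\le 1/k$, so only the $\bigO(1)$-per-bin version is safe), and you justify the uniformity of $\eps$ given $g(a)=k$ by the Farey-disjointness argument for $k\le n/2$ together with the trivial $\bigO(1)$ cap for $k>n/2$, whereas the paper simply asserts it.
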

  \begin{proof}
    Throughout the proof we condition on $g(a)=k$.
    We partition $a\cdot [n]$ into $k$ \defn{clumps} $C_1,\ldots, C_k$
    as follows: for each $i\in [n]$, $a\cdot i \in C_{\posmod_k(i)}.$
    \begin{claim}\label{clm:clumpsnooverlap}
      If $i\not\equiv j \bmod k$ then $i,j$ hash to distinct bins.
    \end{claim}
    \begin{proof}
    As noted in \cref{eq:ckdelta},  we can write $a$ in the form 
       $a = c/k + \delta$ for some $c\perp k$ and $\delta \in
      [\frac{-1}{nk}, \frac{1}{nk}]$. 
      Intuitively, $\delta$ is small so $a$ behaves similarly to
      $c/k$. For $c/k$ we have that $\posmod_1(i\cdot c/k)$ travels on some
      permutation of $[k]\cdot c/k$ as $i$ travels over $[k]$.
      The term $\delta$ introduces some deviation from the
      behavior of $c/k$. We visualize this in
      \cref{fig:perm-png}: the black rectangles start at
      multiples of $1/k$ but then extend a little further to
      account for the deviation introduced by $\delta$. 
      Now we formally show that the deviation introduced by
      $\delta$ is small.

      The \defn{width} of clump $C_i$ is defined as $\max C_i -
      \min C_i$; width is determined by $\delta$. In
      particular, every $k$ steps we take one step within a
      clump, and the step size is $\delta$. In total this means
      that the widths are 
      \[\delta\cdot \frac{n}{k} \le \frac{1}{nk}\cdot \frac{n}{k} \le
      \frac{1}{k^2}.\]
      On the other hand, clumps are separated by a much larger
      quantity: $1/k.$
      In particular, $k<n-2$ by assumption
      so 
      \begin{equation}\label{eq:obviousthing}
      n(k-1)>(k-1)(k+2)=k^2+k-2 \ge k^2.
      \end{equation}
      Rearranging \cref{eq:obviousthing} gives 
      \[\frac{1}{k} - \frac{1}{k^2} \ge \frac{1}{n}.\]
      In other words, elements lying in different clumps cannot
      lie in the same bin, because there is a gap of at least
      $1/n$ between each of the clumps (this justifies why the
      rectangles in \cref{fig:perm-png} are drawn as
      non-overlapping).
      Note that here we have also applied the important fact that 
      clumps all grow in the same direction, which is determined by $\sgn(\delta)$.
\begin{figure}[h]
  \centering
  \includegraphics[width=0.35\textwidth]{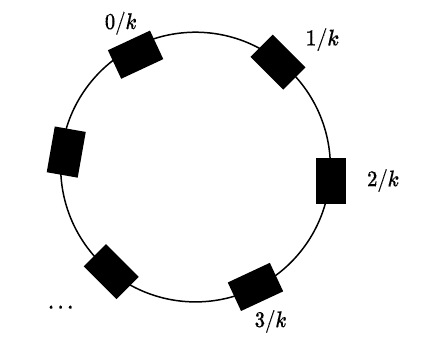}
  \caption{Visualization of clumps.}
  \label{fig:perm-png}
\end{figure}
    \end{proof}

    \begin{claim}\label{clm:maxloadbeps}
      Assume $a = (c+\eps)/k$ for some $\eps \in [-1/n,1/n]$.
      Then the maxload is at most $\floor{(1/n)/\eps}$.
    \end{claim}
    \begin{proof}
      By \cref{clm:clumpsnooverlap} clumps map to separate bins, so
      it suffices to focus on a single clump.
      Observe that for any $i\in [\ceil{n/k}]$,
      $\posmod_1(i\cdot ak) = i\cdot \eps.$
      Thus, it is impossible for more than $\floor{(1/n)/\eps}$ of
      the values in a clump  to lie in the same bin. In other
      words, the maxload is at most $\floor{(1/n)/\eps}$.
    \end{proof}

    Combining \cref{clm:maxloadbeps}, \cref{clm:clumpsnooverlap}
    we compute a bound on the maxload.
    For particularly small $|\eps|$ we use the fact that clumps do
    not intersect to deduce that the maxload is at most $n/k$.
For $\eps$ with $|\eps| > k /n^2$ the bound $(1/n)/\eps$ becomes stronger. 
Clearly each value of $\eps$ is equally likely.
Thus in total we have
\begin{align*}
  \E[M\mid g(a)=k] \le \frac{n}{k}\frac{2k/n^2}{2/n} + 
  \frac{1}{2/n}\cdot 2\int_{k/n^2}^{1/n}(1/n)/\eps \;\d \eps 
    \le \bigO(\ln (n/k)).
\end{align*}
  \end{proof}

  \begin{proof}[Proof of \cref{thm:nisnice}]
  Combining \cref{lem:lnnkyay}, \cref{clm:kbiggerthanNisSilly},
  and \cref{clm:proffak} gives
\begin{align*}
  \E[M] \leq \bigO\left(\sum_{k=2}^{n-2} \frac{\log(n / k)}{n}\right) + \bigO(1) 
        \leq \bigO\left(\frac{\log (n^n/n!)}{n} \right)
        \leq \bigO(1).
\end{align*}
\end{proof}







\section{$\SLH$ on random inputs}
In this section we analyze the behavior of $\SLH_m$ on random
inputs, showing that it is the same as that of $\FH_p$. This is
further evidence of the similarity between $\SLH_m,\FH_p$.
\begin{prop}\label{prop:randominputZ}
  Let $x_1,\ldots, x_n$ be independently randomly selected
  from $\Z_m$. 
  The expected maxload of $\SLH_m$ on $x_1,\ldots, x_m$ is 
  $\bigO\left(\frac{\log n}{\log\log n}\right)$.
\end{prop}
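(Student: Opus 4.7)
The plan is to reduce this to the standard balls-into-bins analysis by exploiting the fact that $a \in \Z_m^\times$ is a unit. First, I would condition on any fixed $a \in \Z_m^\times$. Since multiplication by a unit is a bijection on $\Z_m$, the random variable $\modm(a x_i)$ is uniformly distributed on $\Z_m$ for each $i$; moreover, because $x_1, \ldots, x_n$ are independent, the images $\modm(a x_1), \ldots, \modm(a x_n)$ are independent as well.

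Second, I would translate this into the bin-assignment picture. The bin of $x_i$ is $\lfloor \modm(a x_i)/(m/n) \rfloor$, and the preimage of any bin has either $\lfloor m/n \rfloor$ or $\lceil m/n \rceil$ elements of $\Z_m$. Therefore each $x_i$ lands in each specific bin with probability at most $\lceil m/n \rceil / m \le 1/n + 1/m$. Under our standing assumption $m \ge n^6$ (\cref{rmk:assumesize}), this upper bound is $(1+o(1))/n$, and the bin assignments of the $x_i$'s are mutually independent.

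Third, I would invoke the classical balls-into-bins maxload bound: throwing $n$ balls independently into $n$ bins, where each ball lands in any specific bin with probability at most $c/n$ for a constant $c$, yields maxload $O(\log n / \log\log n)$ both with high probability and in expectation. This is a standard Chernoff + union-bound calculation (see \cite{mitzenmacher2017probability}). Applying this conditionally on any $a \in \Z_m^\times$ gives a pointwise bound, and averaging over the uniform choice of $a$ from $\Z_m^\times$ preserves the same $O(\log n / \log\log n)$ bound.

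There is essentially no technical obstacle here: the whole point of restricting $a$ to $\Z_m^\times$ in Smart $\LH$ is exactly to make multiplication a bijection, and once this is observed the random-input maxload analysis collapses to the fully-independent case. The proposition thus serves as a sanity check that $\SLH_m$ matches $\FH_p$'s behavior on random inputs, and its proof requires no new ideas beyond the bijectivity of multiplication by a unit and the textbook maxload bound.
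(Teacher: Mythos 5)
Your proof is correct, and it takes a genuinely more direct route than the paper's. You condition on the unit $a$ and observe that $x\mapsto \modm(ax)$ is a bijection of $\Z_m$, so each $\modm(ax_i)$ is uniform on $\Z_m$; each bin's preimage has at most $\ceil{m/n}$ elements, so each $x_i$ lands in any fixed bin with probability at most $\ceil{m/n}/m \le 1/n + 1/m$, and the bin assignments are mutually independent because the $x_i$ are. The textbook balls-into-bins bound then gives expected maxload $\bigO(\log n/\log\log n)$ conditionally on every $a\in\Z_m^{\times}$, hence unconditionally. The paper instead argues via a gcd decomposition of the inputs: it notes that when $\gcd(x_i,m) < m/n$ the per-bin probability is $\bigO(1/n)$, uses the divisor bound (\cref{fact:numdivs}) to show that only about $n\cdot m^{o(1)}$ elements of $[m]$ share a factor larger than $m/n$ with $m$, union bounds (using $m\ge n^6$) so that with probability at least $1-1/n^{3}$ no $x_i$ is bad, charges that bad event $\bigO(1)$ to the expectation, and runs balls-into-bins in the good case. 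That bookkeeping mirrors the collision-probability style of \cref{prop:sqrtnZ}, where the randomness lives in $a$ and one must account for inputs interacting badly with $m$; but for this random-input statement it is unnecessary, since your observation that multiplication by a unit makes $ax_i$ exactly uniform on $\Z_m$ short-circuits the gcd analysis and yields the same $\bigO(\log n/\log\log n)$ bound with no case distinction.
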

\begin{proof}
  If $x_i$ has $\gcd(x_i,m) < m/n$ then the probability of $x$
  landing in any particular bin is $\bigO(1/n).$

  Fix $x\in [m]$ with $\gcd(x,m)>m/n$. We decompose $x$ into $x =
  yz$ where $y \mid m, z\perp m$, $y >m/n$, and thus $z<n$.
  Because $m$ has at most $m^{o(1)}$ divisors
  (\cref{fact:numdivs}), there can be at most $nm^{o(1)}$ such
  $x.$
  Hence, the probability that $x_i$ and $m$ share such a
  large factor is bounded by $\frac{nm^{o(1)}}{m}$. 
  
  Because $m\ge n^6$ (\cref{rmk:assumesize}), we can take a union bound to deduce
  that with probability at least $1-1/n^{3}$ no $x_i$ shares a
  large common factor with $m$. Hence, the event that some
  $\gcd(x_i, m)$ is large contributes at most $\bigO(1)$ to the
  expected maxload. 
  As noted, if all $x_i$ have $\gcd(x_i, m) < m/n$ then
  each $x_i$ has probability $\bigO(1/n)$ of falling in each
  bin, and the bins that each $x_i$ fall in are independent.
  Then by an almost identical analysis to the standard problem of
  throwing $n$ balls into $n$ bins we clearly achieve the desired maxload
  bound.
\end{proof}

\end{document}